  \newenvironment{proofsketch}{\begin{proof}[Proof Sketch]}{\end{proof}}
  \newcommand{\placeNovelResearchContribution}{\textbf{Novel Research Contribution}\\}
  \newcommand{\separateAbstractFromBodyLNCS}{}
  \newcommand{\optionalproofparamLNCS}{}
  \newcommand{\qedLNCS}{\qed}
  \newtheorem{theorem}{Theorem}[section]
  \newtheorem{lemma}[theorem]{Lemma}
  \newtheorem{corollary}[theorem]{Corollary}
  \newtheorem{property}{Property}
  \newtheorem{definition}[theorem]{Definition}
  \newenvironment{proofsketch}{\begin{proof}[ Sketch]}{\end{proof}}
  \newcommand{\keywords}[1]{\textbf{Keywords: }#1}
  \newcommand{\placeNovelResearchContribution}{}
  \newcommand{\separateAbstractFromBodyLNCS}{\pagebreak}
  \newcommand{\optionalproofparamLNCS}{Proof }
  \newcommand{\qedLNCS}{}
\newcommand{\placeStudentPaperAwardEligibility}[2]{
  Eligible for the best student paper award; #1
  \ifthenelse{\equal{#2}{singular}}{is a full-time student.}{are full-time students.}
}
\newtheorem{myclaim}[theorem]{Claim}
\newcommand{\sebastian}[1]{ [[[ \textcolor{green}{\bf Seb:} {\em #1} ]]]}
\newcommand{\logstar}[1]{\log^{\ast}{#1}}
\newcommand{\algoMIS}{\textsc{HeraldMIS}\xspace}
\newcommand{\algoDF}{\textsc{DFilter}\xspace}
\newcommand{\algoActiveState}{\textsc{HeraldProtocol}\xspace}
\newcommand{\algoHandshake}{\textsc{Handshake}\xspace}
\newcommand{\algoRBG}{\textsc{RedBlueGame}\xspace}
\newcommand{\algoMISstate}{\textsc{Dominator}\xspace}
\newcommand{\TODO}[1]{ {\color{blue} TODO: #1 }}
\newcommand{\hide}[1]{}
\newcommand{\mydots}{\dots}
\newcommand{\set}[1]{\ensuremath{\left\{#1\right\}}}
\newcommand{\brackets}[1]{\ensuremath{\left(#1\right)}}
\newcommand{\bigbrackets}[1]{\ensuremath{\big(#1\big)}}
\newcommand{\Bigbrackets}[1]{\ensuremath{\Big(#1\Big)}}
\newcommand{\bb}[1]{\bigbrackets{#1}}
\newcommand{\BB}[1]{\Bigbrackets{#1}}
\newcommand{\dcup}{\ensuremath{\mathaccent\cdot\cup}}
\newcommand{\dotcup}{\dcup}
\DeclareFontFamily{U} {MnSymbolC}{}
\DeclareFontShape{U}{MnSymbolC}{m}{n}{
  <-6> MnSymbolC5
  <6-7> MnSymbolC6
  <7-8> MnSymbolC7
  <8-9> MnSymbolC8
  <9-10> MnSymbolC9
  <10-12> MnSymbolC10
  <12-> MnSymbolC12}{}
\DeclareFontShape{U}{MnSymbolC}{b}{n}{
  <-6> MnSymbolC-Bold5
  <6-7> MnSymbolC-Bold6
  <7-8> MnSymbolC-Bold7
  <8-9> MnSymbolC-Bold8
  <9-10> MnSymbolC-Bold9
  <10-12> MnSymbolC-Bold10
  <12-> MnSymbolC-Bold12}{}
\DeclareSymbolFont{MnSyC} {U} {MnSymbolC}{m}{n}
\DeclareMathSymbol{\cupdot}{\mathbin}{MnSyC}{60}
\DeclareFontFamily{U} {MnSymbolF}{}
\DeclareFontShape{U}{MnSymbolF}{m}{n}{
  <-6> MnSymbolF5
  <6-7> MnSymbolF6
  <7-8> MnSymbolF7
  <8-9> MnSymbolF8
  <9-10> MnSymbolF9
  <10-12> MnSymbolF10
  <12-> MnSymbolF12}{}
\DeclareFontShape{U}{MnSymbolF}{b}{n}{
  <-6> MnSymbolF-Bold5
  <6-7> MnSymbolF-Bold6
  <7-8> MnSymbolF-Bold7
  <8-9> MnSymbolF-Bold8
  <9-10> MnSymbolF-Bold9
  <10-12> MnSymbolF-Bold10
  <12-> MnSymbolF-Bold12}{}
\DeclareSymbolFont{SymbolF} {U} {MnSymbolF}{m}{n}
\DeclareMathSymbol{\dbigcupdot}{\mathop}{SymbolF}{35}
\DeclareMathSymbol{\tbigcupdot}{\mathop}{SymbolF}{34}
\def\bigcupdot{\mathchoice{\dbigcupdot}{\tbigcupdot}{\tbigcupdot}{\tbigcupdot}}
\newcommand{\bigdotcup}{\bigcupdot}
\newcommand{\operatorleftalignbalance}{0.4em}
\newcommand{\sumleft}[2]{
  \mathop{\oalign{$\hspace{\operatorleftalignbalance}\sum \,\, #2$\hfill\cr
    $\begin{subarray}{l}\hspace{-\operatorleftalignbalance}#1\end{subarray}$\hfill\cr}}
}
\newcommand{\prodleft}[2]{
  \mathop{\oalign{$\hspace{\operatorleftalignbalance}\prod \,\, #2$\hfill\cr
    $\begin{subarray}{l}\hspace{-\operatorleftalignbalance}#1\end{subarray}$\hfill\cr}}
}
\newcommand{\whp}[1][\empty]{\ensuremath{\text{w.h.p.}\ifthenelse{\equal{#1}{\empty}}{}{(#1)}}\xspace}
\newcommand{\Whp}[1][\empty]{\ensuremath{\text{W.h.p.}\ifthenelse{\equal{#1}{\empty}}{}{(#1)}}\xspace}
\newcommand{\wcp}{\text{w.c.p.}\xspace}
\newcommand{\Wcp}{\text{W.c.p.}\xspace}
\newcommand{\pseudocode}{pseudo-code\xspace}
\newcommand{\Pseudocode}{Pseudo-code\xspace}
\newcommand{\eps}{\varepsilon}
\renewcommand{\Pr}{\mathbf{P}}
\newcommand{\Freq}{\ensuremath{\mathcal{F}}}
\newcommand{\freq}{\ensuremath{\mathcal{F}}}
\newcommand{\eEvent}{\ensuremath{\mathcal{E}}}
\newcommand{\mEvent}{\ensuremath{\mathcal{M}}}
\newcommand{\iEvent}{\ensuremath{\mathcal{I}}}
\newcommand{\gbg}{bounded independence graph\xspace}
\newcommand{\gbgs}{bounded independence graphs\xspace}
\newcommand{\multichannel}{multichannel\xspace}
\newcommand{\singlechannel}{single channel\xspace}
\newcommand{\singlehop}{single-hop\xspace}
\newcommand{\Hchannel}{\ensuremath{\mathcal{H}}\xspace}
\newcommand{\Gchannel}{\ensuremath{\mathcal{G}}\xspace}
\newcommand{\Achannel}{\ensuremath{\mathcal{A}}}
\newcommand{\Schannel}{\ensuremath{\mathcal{S}}}
\newcommand{\Rchannel}{\ensuremath{\mathcal{R}}}
\newcommand{\Dchannel}{\ensuremath{\mathcal{D}}}
\newcommand{\nac}{{n_\Achannel}}
\newcommand{\ndc}{{n_\Dchannel}}
\newcommand{\nrc}{{n_\Rchannel}}
\newcommand{\nsc}{{n_\Schannel}}
\newcommand{\Wstate}{\ensuremath{\mathbb{W}}\xspace}
\newcommand{\Astate}{\ensuremath{\mathbb{A}}\xspace}
\newcommand{\Sstate}{\ensuremath{\mathbb{S}}\xspace}
\newcommand{\Mstate}{\ensuremath{\mathbb{M}}\xspace}
\newcommand{\Lstate}{\ensuremath{\mathbb{L}}\xspace}
\newcommand{\Dstate}{\ensuremath{\mathbb{D}}\xspace}
\newcommand{\Estate}{\ensuremath{\mathbb{E}}\xspace}
\newcommand{\Hstate}{\ensuremath{\mathbb{H}}\xspace}
\newcommand{\bigO}{\ensuremath{O}}
\newcommand{\softO}{\ensuremath{\tilde{O}}}
\newcommand{\tildebigO}{\softO}
\newcommand{\activity}{\gamma}
\newcommand{\Activity}{\Gamma}
\newcommand{\tthreshold}{\tau}
\newcommand{\cthreshold}{\Delta}
\newcommand{\tth}{\tthreshold}
\newcommand{\ttk}{\kappa}
\newcommand{\Vhf}{\ensuremath{V_{\text{hf}}}\xspace}
\newcommand{\ttkdelta}{{\ttk_\Delta}}
\newcommand{\actc}{\bar{c}}
\newcommand{\actm}{\bar{m}}
\newcommand{\increaseactivity}{\sigma_\oplus}
\newcommand{\decreaseactivity}{\sigma_\ominus}
\newcommand{\rbg}{red-blue game\xspace}
\newcommand{\rbgs}{red-blue games\xspace}
\newcommand{\rbp}{red-blue protocol\xspace}
\newcommand{\RBP}{Red-Blue Protocol\xspace}
\newcommand{\lhp}{leader-herald pair\xspace}
\newcommand{\lhps}{leader-herald pairs\xspace}
\newcommand{\hfilter}{herald filter\xspace}
\newcommand{\dfilter}{decay filter\xspace}
\newcommand{\Hfilter}{Herald Filter\xspace}
\newcommand{\Dfilter}{Decay Filter\xspace}
\newcommand{\runtime}{runtime\xspace}
\newcommand{\Runtime}{Runtime\xspace}
\newcommand{\false}{\bf{false}}
\newcommand{\true}{\bf{true}}
\newcommand{\msg}{\textit{msg}}
\newcommand{\state}{\textit{state}}
\newcommand{\roundcount}{\textit{count}}
\newcommand{\phase}{\textit{phase}}
\newcommand{\textmax}{\textit{max}}
\newcommand{\textmin}{\textit{min}}
\newcommand{\textred}{\textit{red}}
\newcommand{\textblue}{\textit{blue}}
\newcommand{\textitcolor}{\textit{color}}
\newcommand{\low}{\textit{low}}
\newcommand{\leader}{\textit{leader}}
\newcommand{\ID}{\textit{ID}\xspace}
\newcommand{\IDs}{\textit{IDs}\xspace}
\newcommand{\lonely}{\textit{lonely}}
\newcommand{\progress}{\textit{progress}}
\newcommand{\redblue}{\textit{red-blue}}
\newcommand{\notif}{\textit{notification}}
\newcommand{\itruntime}{\textit{\runtime}}
\newcommand{\game}{\textit{game}}
\newcommand{\handshake}{\textit{handshake}}
\newcommand{\success}{\textit{succ}}
\newcommand{\fail}{\textit{fail}}
\newcommand{\rendezvous}{\textit{meet}}
\newcommand{\enforce}{\textit{enforce}}
\newcommand{\RT}{\ensuremath{\log^2 n / \Freq + \log n }}
\newcommand{\bigORT}{\ensuremath{\bigO(\log^2 n / \Freq + \log n)}}
\newcommand{\misRTstrong}{\RT}
\newcommand{\loglogn}{\log\log n}
\newcommand{\AAC}{\Activity_\Astate^\circ}
\DeclareMathOperator{\E}{\mathbb{E}}
\DeclareMathOperator{\polylog}{\ensuremath{\mathrm{polylog}}}
\DeclareMathOperator{\polyloglog}{\ensuremath{\mathrm{polyloglog}}}
\newcommand{\reduceSpaceAroundEquation}[1]{\vspace{-2mm}#1\vspace{-0.0mm}}
\newcommand{\paraclose}[1]{\vspace{0.2em}\noindent\textbf{#1}~}
\algnewcommand\algorithmicswitch{\textbf{switch}}
\algnewcommand\algorithmiccase{\textbf{case}}
\algnewcommand\algorithmicwithprob{\textbf{with probability}}
\algnewcommand\algorithmicotherwise{\textbf{otherwise}}
\begin{document}

\date{}
\title{Tight Bounds for MIS in Multichannel Radio Networks}

\ifthenelse{\boolean{hasLNCSformat}}{
  \author{
    Sebastian Daum
    \and %
    Fabian Kuhn
  }
  \institute{
    University of Freiburg, Germany
    \\
    \email{
      \{sdaum,kuhn\}@cs.uni-freiburg.de
    }
  }
}{
  \author{
    Sebastian Daum\\
    Dept.\ of Comp.\ Science\\
    U.\ of Freiburg, Germany\\
    {\small\texttt{daum.sebastian@gmail.com}}
    \and 
    Fabian Kuhn\\
    Dept.\ of Comp.\ Science\\
    U.\ of Freiburg, Germany\\
    {\small \texttt{kuhn@cs.uni-freiburg.de}}
  }
}

\maketitle
\begin{abstract}
  In \cite{podc2013} an algorithm has been presented that computes a
  maximal independent set (MIS) within $\bigO(\log^2 n/\Freq+\log n
  \polyloglog n)$ rounds in an $n$-node multichannel radio network
  with $\Freq$ communication channels. The paper uses a multichannel
  variant of the standard graph-based radio network model without
  collision detection and it assumes that the network graph is a
  polynomially bounded independence graph (BIG), a natural combinatorial
  generalization of well-known geographic families. The upper bound of
  \cite{podc2013} is known to be optimal up to a $\polyloglog$ factor.

  In this paper, we adapt algorithm and analysis to improve the result of
  \cite{podc2013} in two ways. Mainly, we get rid of the $\polyloglog$
  factor in the \runtime and we thus obtain an asymptotically
  optimal multichannel radio network MIS algorithm. In addition, our
  new analysis allows to generalize the class of graphs from those
  with polynomially bounded local independence to graphs where the
  local independence is bounded by an arbitrary function of the
  neighborhood radius.




\end{abstract}

  \keywords{
    maximal independent set, radio network, multichannel, shared spectrum, growth-bounded graph, bounded independence graph
  }

\setcounter{page}{0}
\thispagestyle{empty}

\separateAbstractFromBodyLNCS

\section{Introduction}
\label{sec:introduction}

In recent years there has been an increased interest in algorithms for
\emph{shared spectrum networks} \cite{sherman:2008}. Nowadays, most
modern wireless communication networks feature a multitude of
communication frequencies \cite{Bluetooth, 802.11, Zigbee}%
\footnote{For example, the IEEE 802.11 WLAN standard provides a
  channel spectrum of up to 200 (partially overlapping) channels and
  Bluetooth specifies 79 usable channels.}---and we can certainly
expect this trend to continue.

In the light of this development, in the present paper, we settle the
question of determining the \emph{optimal} asymptotic time complexity
of computing a maximal independent set (MIS) in the \multichannel
variant of the classic radio network model first introduced in
\cite{chlamtac:1985,bgi}. The task of constructing an MIS is one of
the best studied problems in the area of large-scale wireless
networks. On the one hand this is due to the fact that MIS (together
with coloring problems) is one of the key problems to study the
problem of symmetry breaking in large, decentralized systems. On the
other hand an MIS provides a simple local clustering of the graph,
which can be used as a building block for computing more enhanced
organization structures in these networks such as, e.g., a
communication backbone based on a connected dominating set
\cite{censor:2011, ephremides87, alzoubi:1}. This is specifically
relevant in the context of wireless mobile ad hoc networks or sensor
networks, in which devices cannot rely on already existing
infrastructure to organize themselves---devices need to compute a
meaningful structure by themselves to coordinate their interactions.


\hide{
Also in recent years there has been an increased interest in \emph{shared spectrum networks} \cite{sherman:2008}, as most modern wireless communication networks feature and use the possibility of having multiple communication frequencies available \cite{Bluetooth, 802.11, Zigbee}---and this trend is expected to grow. In the light of this model extension we presented as co-authors in \cite{podc2013} how to calculate an MIS in a \emph{\multichannel RN (MCRN)} in time $\bigO(\log^2 n/\freq + \log n \polyloglog n)$ in so-called \emph{\gbgs}---a model that generalizes the well-known unit disk graph (UDG).%
In \cite{newport:2014} it has been proven that $\Omega(\log^2 n/\freq + \log n)$ rounds is a lower bound for solving a variety of problems in the MCRN, including MIS.%
\footnote{In \cite{podc2013} this lower bound has also been shown, but it lacks the elegance of the proof in \cite{newport:2014}.}
In this paper we intend to close this gap, while loosening the underlying graph restriction from \cite{podc2013}. A gain of $\polyloglog$ factors in an algorithm that has a \runtime of $\Omega(\log n)$ might not be much, but for fundamental problems---like MIS is---there was always a high interest in closing existing gaps between lower and upper bounds \sebastian{cite something here? i can only think of improving log2 n/ loglog n to log2 n for wake up}.
\sebastian{is there any result that breaks solving MIS in general graphs? something like an extension of the double star? so that we can argue that the model we use is pretty much as good as it gets....}
While we often refer to previous work being done in \cite{podc2013} and the algorithm presented here being almost the same, we want to point out that the analysis we have conducted here to prove the optimality of our algorithm is a major extension to what has been done in \cite{podc2013}.
}

\paraclose{Related Work.}
In \cite{ABI86,Luby86} Alon et al.\ and Luby presented a simple and
efficient randomized parallel algorithm to compute and MIS of a
general graph. It is straightforward to a standard distributed message
passing model and as a consequence, the algorithm soon became an
archetype for many distributed MIS algorithms also in other---usually
more limiting---settings.  The model we assume here is an extension to
the radio network model, for which an MIS algorithm with \runtime
$\bigO(\log^2 n)$ has been presented in \cite{moscibroda05} for the
class of unit disk graphs (UDGs). This algorithm has been proven to be
asymptotically optimal \cite{DKN:disc:2012} even for more basic
version of the problem known as the wake-up problem in single-hop
radio networks. While the UDG restriction is well-known and
popular, a more general variant known as growth-bounded graphs or
\gbgs that contains UDGs has also become the focus of quite some
research, e.g., \cite{KMNW05,schmid06,schneider08}. In particular, in
\cite{schneider08} it is shown that an MIS and many related structures
can be compute in (asymptotically optimal) $\bigO(\logstar n)$ rounds
in such graphs.

Much of the early algorithmic research on \multichannel radio networks
has focused on networks with faults assuming a malicious adversary
that can jam up to $t$ of the $\freq$ available channels
\cite{dolev:2011, dolev2008, dolev2007, dolev:2009b, podc09,
  gilbert2009b, strasser2008, strasser2009, podc2012}. In addition,
for fault-free networks, in \cite{newport:2014} a series of lower
bound proofs have been provided, which show that
$\Omega(\log^2 n/\freq + \log n)$ rounds are needed to solve any
problem which requires communication. In \cite{podc2012} a new
technique (called heralding) to deal with congestion in \multichannel
radio networks has been established to solve leader election in
\singlehop networks in time asymptotically matching the lower bound of
\cite{newport:2014}. This technique has been extended in
\cite{disc2012} and \cite{podc2013} to solve the problems of computing
an approximate minimum dominating set and an MIS, respectively. Our
research here is based on this work and in particular on the MIS
algorithm of \cite{podc2013}.

\paraclose{Contributions.}
In radio network models, in almost all cases a restriction to the
underlying graph model is being assumed. One of the most general ones
are so-called $\alpha$-\gbgs, where $\alpha(r)$ is a function that
limits the size of a \emph{maximum} independent set in any
$r$-neighborhood of the given graph. The MIS algorithm from
\cite{podc2013} solves the MIS problem in time
$\bigO(\log^2 n/\freq + \log n (\loglogn)^d)$ in such graphs for which
$\alpha$ is bounded by polynomial of degree $d$. Here we get rid of
the $\polyloglog$ factor and thus show how to close the gap to the
lower bound from \cite{newport:2014}. At the same time, we remove any
restriction on the function $\alpha$. We do so by adjusting the
algorithm from \cite{podc2013}---and though the change in the
algorithm is relatively small, it leads to a significantly more
involved analysis.

\section{Preliminaries}
\label{sec:preliminaries}


This paper bases strongly on \cite{podc2013} and \cite{tr:podc2013}, the former being the proceedings version and the latter the complete version.
However, we try to be as self-contained as possible.


\paraclose{Radio Network Model.}
We model the network as an $n$-node graph $G=(V,E)$. We assume that
$n$ or a polynomial upper bound on $n$ is known by all nodes. Nodes
start out dormant and are awakened/activated by an adversary. While
nodes do not have access to a global clock, communication is assumed
to happen in synchronous time slots (rounds). The network comprises
$\freq$ communication channels. In each round each node can choose to
operate on one channel, either by listening or broadcasting. A node
that broadcasts does not receive any message in that round, and its
signal reaches all neighbors that operate on the same channel. A node
$v$ listeing on some channel can decode an incoming message iff in the
given round, exactly one of its neighbors broadcasts on the same
channel. If two or more neighbors broadcast, their signals collide at
$v$ and $v$ receives nothing, unable to detect this collision. A node
can only operate on one channel in each round and therefore it does
not learn anything about events on other channels.

\paraclose{Notation.}
In our algorithm 
all nodes move between a finite set of states: $\Wstate$ --
\emph{waiting}, $\Dstate$ -- \emph{decay}, $\Astate$ -- \emph{active},
$\Hstate'$ -- \emph{herald candidate}, $\Hstate$ -- \emph{herald},
$\Lstate'$ -- \emph{leader candidate}, $\Lstate$ -- \emph{leader},
$\Mstate$ -- \emph{MIS node}, $\Estate$ --
\emph{eliminated/dominated}. We overload this notation to also indicate the
set of nodes being in a certain state, e.g.,
$\Astate \coloneqq \set{ v \in V: v \text{ is in state } \Astate }$. Since
nodes change their states, in case of ambiguity, we write $\Astate_r$
for the set of active nodes in round~$r$. State changes always happen
between rounds. We define
$\Vhf \coloneqq \Astate \cup \Hstate' \cup \Lstate' \cup \Hstate \cup
\Lstate$ as the nodes in the so-called \hfilter.

We use $N(v)$ to denote the neighbors of $v$ in $G$, while we use $N^k(v)$ to denote the set of nodes in distance at most $k$ from $v$, including $v$ itself.
We also often write $N_\Sstate(u)$ or $N^k_\Sstate(u)$ to abbreviate $N(u) \cap \Sstate$ or $N^k(u) \cap \Sstate$ respectively, for some state $\Sstate$. For $S \subseteq V$ we let $N(S) \coloneqq \bigcup_{v \in S} N(v)$. 
We call a node $v$ \emph{alone} or \emph{lonely}, if $N_{\Vhf \cup \Mstate}(u) =\emptyset$.

We say that an event $A$ happens \emph{with high probability (\whp)},
\emph{with decent probability}, or \emph{with constant probability
  (\wcp)}, if it happens with probability at least $1-n^{-c}$, $1-\log^{-c} n$, or $\Omega(1)$, respectively, where $c$ is a constant that can be chosen arbitrarily large.
By $x \gg y$ we denote that $x > cy$ for a sufficiently large $c>1$.

\paraclose{Bounded Independence.} 
In addition to the communication characteristics of the network, we
require the network graph to be a \gbg (BIG) \cite{KMNW05,schmid06}. A
graph $G$ is called an \emph{$\alpha$-bounded independence graph} with
\emph{independence function}
$\alpha: \mathbb{N} \rightarrow \mathbb{N}$, if for every node $v$. no
independent set $S$ of the subgraph of $G$ induced by $N^d(v)$ exceeds
cardinality $\alpha(d)$.
Note that in particular, $\alpha$ does not depend on $n$ and thus for
every fixed $d$, $\alpha(d)$ is a constant. In \cite{podc2013},
$\alpha$ is required to be a polynomial, whereas in this paper, we put
no restrictions on $\alpha$. It can easily be verified that one can
always upper bound the largest independent set of the subgraph induced
by $N^d(v)$ by $\alpha(2)^d$ and thus any independence function is
always upper bounded by some exponential function. For simplicity we
define a constant $\alpha \coloneqq \alpha(2)$ and we assume that all
nodes know the value of $\alpha$.

\paraclose{Number of Channels.}
We assume that $\Freq=\omega(1)$ as otherwise \singlechannel
algorithms achieve the same asymptotic time bounds. For
$\Freq=\omega(\log n)$ we only actually use $\Theta(\log n)$ channels
since more channels do not lead to an additional asymptotic
advantage. For ease of exposition we assume $\Freq=\Omega(\loglogn)$
and refer to \cite{tr:podc2013} for an explanation of how to adapt the
algorithm for the case $\Freq=o(\loglogn)$.

\paraclose{Maximal Independent Set.} 
We say an algorithm computes an MIS in time $T$, if the following
properties hold \whp for each round $r$ and node $v$ (waking up in
round $r_v$):
\begin{compactenum}[(P1)]
  \item $v$ declares itself as either \emph{dominating} ($\in \Mstate$) or \emph{dominated} ($\in \Estate$) before round $r_v+ T$ and this decision is permanent.
  \item If $v$ is \emph{dominated} in round $r$, then $N(v) \cap \Mstate_r \neq \emptyset$.
  \item If $v$ is \emph{dominating} in round $r$, then $N(v) \cap \Mstate_r = \emptyset$.
\end{compactenum}

\hide{
\paraclose{Loneliness Detection} 
We say that an algorithm solves LD in time $T$, if for each node $v \in V$ that wakes up in round $r_v$ the following
two properties hold \whp: 
\begin{compactenum}[(P1)]
  \item If $v$ has an awake neighbor $w$ in round $r_v$, $v$ declares itself as \emph{non-lonely} before round $r_v+T$.
  \item Each awake neighbor $w$ (awake in round $r_v$) of $v$ declares itself as \emph{non-lonely} before round $r_v+T$.
\end{compactenum}
}


\section{Algorithm Description}


\begin{algorithm}[!ht] 
  \caption{HeraldMIS---core structure}
  \label{algo:2:coreMIS}
  \small
  \begin{tabular}{@{}ll}
    Input: &$\increaseactivity$, $\decreaseactivity$, $\cthreshold_\textmax$, $\pi_\ell$, $\alpha$, $n$, $\ndc=\Theta(\freq)$, $\nac=\Theta(\loglogn)$, $\nrc=\Theta(\alpha(2))$, \\
    & $\tth_\Wstate=\Theta(\log n)$, $\tth_\Dstate=\Theta(\log n/\freq)$, $\tthreshold_\lonely=\Theta(\log^2 n/\freq+\log n)$, $\tthreshold_\redblue=\Theta(\log n)$\\
  States: &\Wstate---waiting, \Dstate---decay,
  \Mstate---MIS node, \Estate---eliminated\\
  &\Astate---active, \Lstate/$\Lstate'$---leader (candidate), \Hstate/$\Hstate'$---herald (candidate)\\
  Channels: &$\Rchannel_1, \dots, \Rchannel_\nrc$---report, $\Dchannel_1,\dots,\Dchannel_\ndc$---decay,
  \\
  &
  $\Achannel_1,\dots,\Achannel_\nac$---herald, $\Hchannel$---handshake, $\Gchannel$---\rbg
  \end{tabular}

  \begin{algorithmic}[1]
    \State $\roundcount \gets 0$; 
    $\state \gets \Wstate$; 
    $\activity \gets \perp$;
    $\lonely \gets \perp$;
    $\activity_\textmin \gets \log^{-24} n$
    \While{$\state \neq \Estate$}
      \State $\roundcount \gets \roundcount +1$
      \State $\lonely \gets \lonely + 1$
      \State $\activity \gets \min \set{\activity \cdot \increaseactivity, 1/2}$
      \State uniformly at random pick $q \in [0,1)$, $j \in \set{1,\dots,\ndc}$ and $k \in \{1, \dots, \nrc\}$
      \Switch{$\state$}
        \Case{\Wstate or \Dstate:}
          run \algoDF
          \Comment{stage 1---\dfilter}
        \EndCase
        \Case{\Astate:}
          run \algoActiveState
          \Comment{stage 2---\hfilter}
        \EndCase
        \Case {$\Hstate'$ or $\Lstate'$:}
          run \algoHandshake
        \EndCase
        \Case {$\Hstate$ or $\Lstate$:}
          run \algoRBG
        \EndCase
        \Case {$\Mstate$:}
          run \algoMISstate
          \Comment{stage 3---MIS node}
        \EndCase
      \EndSwitch
      \If{$\lonely = \tthreshold_\lonely$}
        \State $\state \gets \Mstate$
      \EndIf
    \EndWhile 
    \State\textbf{endWhile}
  \end{algorithmic}
\end{algorithm}


  \begin{theorem}
    \label{thm:mis:maintheorem}
    Algorithm \algoMIS solves MIS within $\bigO(\misRTstrong)$ rounds.
  \end{theorem}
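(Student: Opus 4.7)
The plan is to prove Theorem~\ref{thm:mis:maintheorem} by establishing two things for every node~$v$: (i) \textbf{Correctness}: $v$ eventually enters either \Mstate or \Estate, and the resulting set $\Mstate$ satisfies properties (P1)--(P3), and (ii) \textbf{Runtime}: with high probability $v$ does so within $\bigO(\log^2 n/\Freq + \log n) = \bigO(\tth_\lonely)$ rounds after waking up. The natural decomposition follows the three algorithmic stages and the forced transition triggered by the $\lonely$ counter; since $\tth_\lonely$ is exactly the target bound, it suffices to show that the normal progression through the stages respects this budget, and that the forced transition into \Mstate by the $\lonely$ counter never violates independence.

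\medskip
\noindent\textbf{Correctness.} I would argue that (P3) is the nontrivial part: no two neighbors both enter \Mstate. There are two ways to enter \Mstate---via a successful red-blue game ending an \lhp, or via $\lonely = \tth_\lonely$. For the first, I would show that the \algoHandshake together with \algoRBG ensures that among any cluster of competing candidates around a node, at most one leader-herald pair survives, using that heralds listen on a shared herald channel and the red-blue game breaks any remaining symmetry in $\bigO(\log n)$ rounds \whp. For the second, the invariant is that whenever $v$'s $\lonely$ counter reaches $\tth_\lonely$, all neighbors of $v$ are outside $\Vhf \cup \Mstate$, i.e., either in \Wstate/\Dstate or already in \Estate; combined with (P2) (a dominated neighbor has some MIS dominator, which in turn must have been outside $v$'s neighborhood), this means declaring $v$ as MIS cannot conflict with any present or future neighboring MIS node. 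Property (P2) follows because once a node enters \Mstate it broadcasts on the report channels (via \algoMISstate) for sufficiently many rounds that every neighbor in stages 1 or 2 is \whp dominated.

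\medskip
\noindent\textbf{Runtime.} The plan is to walk through the three stages and bound each. Stage~1 (\algoDF): here each decay phase is $\bigO(\log n/\Freq)$ rounds, and by arguments analogous to those in \cite{podc2013,tr:podc2013}, after $\bigO(\log^2 n/\Freq)$ rounds every awake node has either left \Wstate/\Dstate or sees a bounded-activity neighborhood. Stage~2 (\algoActiveState): a node in \Astate becomes a candidate \whp within $\bigO(\log n)$ rounds; this uses the $\nac = \Theta(\loglogn)$ herald channels to amplify selection probability. Stage 2 also uses \algoHandshake and \algoRBG each in $\bigO(\log n)$ rounds. Summing, the intended-path budget is $\bigO(\log^2 n/\Freq + \log n) = \bigO(\tth_\lonely)$, so \whp each node exits to \Mstate via the normal path, and the $\lonely$ timer only fires for nodes whose entire neighborhood has already been eliminated or become MIS, which is exactly the safe case.

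\medskip
\noindent\textbf{Main obstacle.} The delicate step, and the one that differentiates this analysis from \cite{podc2013}, is removing the $\polyloglog$ overhead while allowing an arbitrary independence function $\alpha$. The crux is bounding the number of distinct neighborhood ``rounds of competition'' that must resolve inside the handshake/red-blue stages. Because $\alpha(d)$ may grow arbitrarily fast in $d$, we cannot afford to charge each competing node in $N^d(v)$ individually; instead I would fix a small radius (essentially $d=2$) and charge progress to independent witnesses in $N^2(v)$, of which there are at most $\alpha = \alpha(2)$. The argument must show that within $\bigO(\log n)$ rounds, a constant fraction of the currently competing \lhps in any neighborhood is resolved \whp, independently of how many competitors there were initially. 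Combined with $\alpha$ being a constant, this iterated halving finishes in $\bigO(\log n)$ rounds without picking up any $\loglogn$ factor, and yields the required $\bigO(\misRTstrong)$ bound. The remaining bookkeeping---handling nodes that wake up during another node's execution and ensuring the $\activity$ parameter remains bounded away from the pessimistic $\activity_\textmin$---follows the template of the analogous lemmas in \cite{tr:podc2013}.
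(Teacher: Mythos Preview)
Your correctness argument contains a real gap. You assert that ``whenever $v$'s $\lonely$ counter reaches $\tth_\lonely$, all neighbors of $v$ are outside $\Vhf \cup \Mstate$,'' but this is false: a node can fail to hear any message for $\tth_\lonely$ rounds while still having neighbors in $\Vhf$ (for instance, a neighbor that only recently left the \dfilter and has low activity, or a neighbor that happens to collide with other transmitters). The paper does \emph{not} prove that $v$ is actually alone at that moment; rather it proves (Lemma~\ref{lemma:safety}) that \whp\ the $\lonely$-based transition cannot create two adjacent MIS nodes, via a case analysis over the four ways two neighbors $u_1,u_2$ could both reach $\Mstate$. This case analysis crucially uses Lemma~\ref{lemma:progress}, which in turn depends on the entire $\delta$-neighborhood machinery; your sketch provides no substitute for this.

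Your runtime argument misidentifies the obstacle and the fix. The $\polyloglog$ factor in \cite{podc2013} does not come from ``too many competitors in $N^2(v)$'' that an iterated-halving argument could dispatch; it comes from \emph{sequential stagnation}: once a good pair forms, $\Theta(\log n)$ rounds pass before its leader joins $\Mstate$, and during that time no further progress occurs nearby, so up to $\alpha(\delta)$ such waits can stack. The paper's removal of this factor hinges on the algorithmic change you do not mention---active nodes \emph{reduce} their activity upon hearing from a leader or herald---together with an analysis over $N^\delta(u)$ with $\delta=\Theta(\loglogn)$ chosen so that $\alpha^\delta=\sqrt{\log n}$. The argument (Lemmas~\ref{lemma:alwaysactive}, \ref{lemma:repression}, \ref{lemma:progress}) shows that good pairs quickly silence their neighbors, so $\eta$-fatness migrates and new good pairs can be created in a pipelined fashion; a Chernoff bound over at most $O(\sqrt{\log n})$ such events then gives $O(\log n)$ total. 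Your proposal to ``fix $d=2$ and charge to at most $\alpha$ independent witnesses'' neither exploits nor replaces this pipelining, and an iterated-halving of competing \lhps is not what happens: each good pair survives deterministically (absent an MIS neighbor), so there is nothing to halve.
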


\hide{
  The algorithm works almost the same as the one in \cite{tr:podc2013}. 
  A detailed algorithm description in \pseudocode is provided in Appendix \ref{sec:pseudo}. Each node first executes Algorithm \ref{algo:decay}, also called the \emph{\dfilter}; this part is unchanged in this paper. In case it manages to leave the \dfilter, it executes Algorithm \ref{algo:heraldCore}---the \emph{\hfilter}. Even while Algorithm \ref{algo:heraldCore} is the core algorithm, it ultimately is a subroutine to Algorithm \ref{algo:decay}.
  
  \begin{theorem}
    \label{thm:mis:maintheorem}
    Algorithm \ref{algo:decay}, executed at each node once it wakes up, solves the MIS problem as stated in Section \ref{sec:preliminaries} within $\bigORT$ rounds.
  \end{theorem}
}

We first give a short summary of how the algorithm works, which includes a recap of results from \cite{tr:podc2013}.
The algorithm is divided into three stages, the \dfilter (states \Wstate and \Dstate), the \hfilter (states \Astate, $\Lstate'$, $\Hstate'$, \Lstate and \Hstate), and decided nodes (states \Mstate and \Estate). Nodes move forward within those stages---possibly omitting the \hfilter---but never backwards. The \dfilter is a powerful tool (which we use as a black box) that provides that over the full \runtime of the algorithm the degree of the graph induced  by nodes in the \hfilter is bounded by $\bigO(\log^3 n)$. In short, nodes first only listen for a while (\Wstate), then they start broadcasting on one out of $\Theta(\freq)$ random channels with probability $1/n$ (\Dstate), doubling this probability every $\bigO(\log n/\freq)$ rounds. A node that broadcasts moves to the \hfilter and a node that receives a message restarts with \Wstate. The \dfilter has not changed and for a detailed analysis we refer to \cite{tr:podc2013}, while \pseudocode is given in Algorithm \ref{algo:decay}.

\begin{algorithm}[!ht] 
  \caption{\algoDF, run at process $v$} 
  \label{algo:decay}
  \small
  
  Input: $\freq$, $\ndc = \Theta(\freq)$, $\nrc\geq 3\alpha^2$, $\tth_\Wstate = \Theta(\log n)$, $\tth_\Dstate = \Theta(\log n/\freq)$\\
  States: \Wstate---waiting, \Dstate---decay \\
  Channels: $\Rchannel_1,\dots,\Rchannel_\nrc$---report, $\Dchannel_1, \dots, \Dchannel_\ndc$---decay 

  \begin{algorithmic}[1]
    \Statex 
    \State $\roundcount \gets 0$, $\state \gets \Wstate$
    \While{$\state \neq \Estate$}
      \State $\roundcount \gets \roundcount + 1$
      \State pick $i \in \set{1, \dots, \nrc}$, $k \in \set{1, \dots, \ndc}$ and $q \in [0,1)$ uniformly at random
      \Switch{$\state$}
        \Case{$\Wstate$}
          \State listen on channel $\Rchannel_i$
          \If{$\roundcount = \tth_\Wstate$}
            \State $\roundcount \gets 0$, $\state \gets \Dstate$, $\phase \gets 0$
          \EndIf
        \EndCase
        \Case{$\Dstate$}
          \label{line:mainbody-decay}
          \Switch{$q$}
            \Case{$q \in [0, 2^\phase/n)$}
              \State send $\msg=(\ID, \state)$ on $\Dchannel_k$
              \State exit \dfilter and enter \hfilter
              \label{line:exit-decay}
            \EndCase
            \Case{$q \in [2^\phase/n, 1/2)$}
              \State listen on $\Dchannel_k$
            \EndCase
            \Case{$q \in [1/2,1)$}
              \State listen on $\Rchannel_i$
            \EndCase
          \EndSwitch
          \If{$\roundcount = \tth_\Dstate$}
            \State $\roundcount \gets 0$, $\phase \gets \min\set{\phase + 1,\log n -2}$
            \vspace{2mm}
          \EndIf
        \EndCase
      \EndSwitch
    \EndWhile
    {\bf  Upon receiving a message $\msg = (\msg.\ID, \msg.\state)$}
    \vspace{2mm}
    \If{$\msg.\state = \Dstate$} \Comment{restart \dfilter}
      \State $\roundcount \gets 0$, $\state \gets \Wstate$ 
      \label{line:hear-decaymessage}
    \EndIf
    \If{$\msg.\state = \Mstate$}
      \State $\state \gets \Estate$
    \EndIf
  \end{algorithmic}
\end{algorithm}


Eliminated nodes (\Estate) know that they have a neighbor in the MIS and stop their protocol. MIS nodes (\Mstate) try to inform their neighborhood (eliminating them), but they also actively disrupt protocols in the \hfilter, causing them to fail; for more details confer \Cref{algo:MISstate}. Apart from this, there is no influence between nodes being in different stages.

\begin{algorithm}[!ht] 
\caption{\algoMISstate}
\label{algo:MISstate}
\small
\begin{algorithmic}[1]
    \If{$\enforce$}
        \State send $(\state, \ID)$ on $\Hchannel$
        \State $\enforce\gets\false$
    \Else
        \Switch{$q$}
        \Case{$q \in \big[0, \frac{1}{2}\big)$}
            \State send $(\state, \ID)$ on $\Hchannel$
            \State $\enforce\gets\false$
        \EndCase
        \Case{$q \in \big[\frac{1}{2}, \frac{3}{4}\big)$}
            \State send $(\state, \ID)$ on $\Gchannel$
            \State $\enforce\gets\true$
        \EndCase
        \Case{$q \in \big[\frac{3}{4}, 1\big)$}
            \State send $(\state, \ID)$ on $\Rchannel_k$
            \State $\enforce\gets\true$
        \EndCase
        \EndSwitch
    \EndIf
\end{algorithmic}
\end{algorithm}


The focus of this paper is almost exclusively on the \hfilter. It helps for understanding the complex algorithm to \emph{only} think of the graph that is induced by nodes in the \hfilter and to recall that its maximum degree is polylogarithmic in $n$.

The \hfilter is divided into three blocks, \emph{active state/herald protocol} (\Astate), \emph{handshake protocol} ($\Lstate'$ and $\Hstate'$) and \emph{\rbp} (\Lstate and \Hstate). The first block has the purpose of nodes trying to make contact with surrounding nodes. If this does indeed happen, both nodes engage in the handshake, which is only successful, if none of the two nodes neighbors any MIS node or a node in the third block. If the handshake succeeds, both nodes start a series of coin flipping games, with the sole purpose of ensuring that no two nodes, that became leaders (\Lstate) \emph{simultaneously}, can join the MIS.
The blocks that differ from the algorithm description in 
  \cite{tr:podc2013} 
are the active state and the \rbp, although changes in the latter are made to impact nodes in the active state.


  Ahead of all we want to mention that there are two ways for a node $v$ to join the MIS---either by waiting for a long time without hearing from any nearby node, or by successfully communicating with a node $u$ during the active state, \emph{teaming up} with it (as a leader-herald pair) and together passing through the handshake and the \rbp.
  The farther a pair of nodes advances in these blocks, the closer its leader is to become an MIS node.
We now recap and describe the behavior of a node $v$ in the \hfilter, i.e., $v \in \Vhf$, pointing out when changes to the original algorithm occur.

\paraclose{Loneliness.}
$v$
maintains a counter $\lonely$. Whenever $v$ hears from another node, it resets $\lonely$ to zero. If $\lonely$ ever exceeds $\tthreshold_\lonely = \Theta\bb{\misRTstrong}$, then $v$ \emph{assumes} that it is \emph{alone/lonely} in the \hfilter (i.e., $N_{\Vhf \cup \Mstate}(v) = \emptyset$) and joins the MIS---\whp, this action is safe, i.e., should $v$ not be alone, then the neighbors of $v$ are far from becoming MIS nodes themselves and $v$ has enough time to eliminate them.%
  \footnote{In \cite{tr:podc2013} there existed some component called \emph{loneliness support block}, operating on its own set of channels $\Schannel_1, \mydots, \Schannel_\nsc$; this block and its channels have been removed.}

\paraclose{Activity.}
Also, 
$v$
maintains an \emph{activity} value $\activity(v) \in \big[\activity_{\textmin}, 1/2\big]$, where $\activity_{\textmin}$, the initial value, is in $\Omega(1/\polylog n)$. $\activity$ governs the behavior of $v$ in \Astate, but all nodes in $\Vhf$ maintain this value. Nodes outside $\Vhf$ have zero activity.  $\activity(v)$ increases by a (small) constant factor $\increaseactivity>1$ each round, such that after $\Theta(\loglogn)$ rounds it would reach the maximum value $1/2$. However, whenever $v$ receives from a neighboring \emph{leader} or \emph{herald}, then $v$ reduces 
$\activity(v)$ by a (large) constant factor $\decreaseactivity \gg \increaseactivity$. This is a change to the original algorithm, where $\activity$ could only increase. 
The reason 
is the following. 
Leaders are nodes that likely become MIS nodes, and if they do then they eliminate their neighbors anyway. For safety reasons a leader $l$ needs to wait for $\Theta(\log n)$ rounds before it may join the MIS. During that time, if $l$'s neighbors keep high activity values, progress stagnation can occur in a $\delta'=\bigO(\loglogn)$ neighborhood of $l$, which is why in \cite{tr:podc2013} an $\alpha(\delta')=\bigO(\polyloglog n)$ speed loss had to be accepted. We show here that by reducing activities, this stagnation can be eliminated. At the same time, 'unjustified' reductions only cause 'minor damage' that can easily be mitigated.
This change is reflected in line \ref{line:algo:heraldActive:1} of \Cref{algo:activeState}.




\begin{algorithm}[!ht] 
\caption{\algoActiveState}
\label{algo:activeState}
\small
\begin{algorithmic}[1]
\State pick an $i \in \{1,\dots,n_\Achannel, \perp\}$ randomly with distribution $\Pr(i=\perp)=2^{-n_\Achannel}$ and $\Pr(i = j) = 2^{-j}$ 
\If{$i=\perp$}
    $q=1$
\EndIf
\Switch{$q$}
  \Case{$q \in [0, \pi_\ell\activity)$}
    \State
    listen on $\Achannel_i$
    \If{$\msg \neq \emptyset$}
      \State $\ID_\leader \gets \msg.\ID$; $\state \gets \Hstate'$; $\roundcount \gets 0$;
      $\handshake \gets \success$; $\lonely \gets 0$
    \EndIf
  \EndCase
  \Case{$q \in [\pi_\ell\activity, \activity)$}
    \State send $(\ID)$ on $\Achannel_i$
    \State
    $\state\gets\Lstate'$; $\roundcount \gets 0$;
    $\handshake \gets \success$;
  \EndCase
  \Case{$q \in [\activity,1]$}
    \State
    listen on $\Rchannel_k$
    \If{$\msg.\state = \Mstate$}
      \State $\state \gets \Estate$; $\activity \gets 0$
    \EndIf
    \If{$\msg.\state \in \set{\Lstate,\Hstate}$}
      \State $\activity \gets \max\set{ \activity/\decreaseactivity, \activity_{\textmin} }$; $\lonely \gets 0$
      \label{line:algo:heraldActive:1}
    \EndIf
  \EndCase
\EndSwitch
\end{algorithmic}
\end{algorithm}


\paraclose{Herald Protocol.}
Confer \Cref{algo:activeState} for detailed \pseudocode.
A node $v$ in the active state ($\Astate$) participates in the \emph{herald protocol} with probability $\activity(v) \in [\activity_\textmin,1/2]$, otherwise it tries to learn of nearby leaders, heralds or MIS nodes, by listening to one of \emph{constant} many \emph{report channels} $\Rchannel_1, \mydots, \Rchannel_{n_\Rchannel}$, 
$n_\Rchannel \geq 3\alpha^2$. %
If $v$ participates in the herald protocol, then it chooses a channel $\Achannel_i$ from $\Achannel_1, \mydots, \Achannel_{n_\Achannel}$ with probability  
  $2^{-i}$.
It then listens on $\Achannel_i$ with probability $\pi_\ell \leq 1/10$ or broadcasts its \ID\ 
  otherwise.\footnote{We want to note that $\pi_\ell$ is a constant parameter that we can choose arbitrarily.}
If $v$ listens, but receives nothing, nothing happens and $v$ stays in \Astate. Should $v$ receive the message of another node $u$ on $\Achannel_i$, then next round it engages with $u$ in the \emph{handshake protocol} as a \emph{herald candidate} ($\Hstate'$), in the hope of moving forward to the \emph{\rbp} together with $u$. Should $v$ choose to broadcast, then it deterministically pursues the handshake as a \emph{leader candidate} ($\Lstate'$), hoping that some other node $u$ has heard its message and joins in for the handshake.
%

\begin{figure}[!ht]
\begin{minipage}{0.48\textwidth}
  \begin{algorithm}[H]
    \caption{\algoHandshake}
    \label{algo:handshake}
    \small
    \begin{algorithmic}[1]
      \Switch{$\state$}   
        \Case{$\Hstate'$}
          \Switch{$\roundcount$}
            \Case{$1,2,5,6$}
              \State Send $\ID_\leader$ on \Hchannel
              \vspace{7mm}
            \EndCase
            \Case{$3,4$}
              \State Listen on \Hchannel 
              \If{$\msg = \emptyset$}
                \State $\handshake \gets \fail$
              \Else\
                \State $\rendezvous \gets \msg.[2]$
                \label{line:algo:handshake:h:rendezvous}
              \EndIf
            \EndCase
          \EndSwitch
          \If{$\handshake = \fail$}
            \State $\roundcount \gets 0$, $\state \gets \Astate$
          \EndIf
          \If{$\roundcount = 6$}
            \State 
            \begin{varwidth}[t]{0.75\linewidth}
              $\roundcount \gets 0$; $\state \gets \Hstate$; $\game \gets \success$; $\lonely \gets 0$
            \end{varwidth}
          \EndIf
        \EndCase
        \algstore{heraldbreakC}
    \end{algorithmic}
  \end{algorithm}
  \renewcommand\footnoterule{}
\end{minipage}
\ \ \ \ \ \ 
\noindent
\begin{minipage}{0.48\textwidth}
  \vspace{3mm}
  \begin{algorithm}[H]
    \small
    \begin{algorithmic}[1]
      \algrestore{heraldbreakC}
      \vspace{3.3mm}
      \Case{$\Lstate'$}
        \Switch{$\roundcount$}
          \Case{$1,2,5,6$}
            \State Listen on \Hchannel
            \If{$\msg = \emptyset$}
              \State $\handshake \gets \fail$
            \EndIf
          \EndCase
          \Case{$3,4$}
            \State $\rendezvous \gets k$
            \State Send $(\ID, \rendezvous)$ on \Hchannel
            \label{line:algo:handshake:l:rendezvous}
            \vspace{10mm}
          \EndCase
        \EndSwitch
        \If{$\handshake = \fail$}
          \State $\roundcount \gets 0$, $\state \gets \Astate$
        \EndIf
        \If{$\roundcount=6$}
          \State 
          \begin{varwidth}[t]{0.75\linewidth}
            $\roundcount \gets 0$; $\state \gets \Lstate$; $\game \gets \success$; $\lonely \gets 0$
          \end{varwidth}
        \EndIf
      \EndCase
    \EndSwitch
    \end{algorithmic}
  \end{algorithm}
  \renewcommand\footnoterule{}
\end{minipage}
\end{figure}


\begin{figure}[!ht]
  \begin{minipage}{0.48\textwidth}
    \begin{algorithm}[H]
      \caption{\algoRBG}
      \label{algo:rbg}
      \small
      \begin{algorithmic}[1]
        \Switch{$\state$} 
          \Case{\Hstate} 
          \State $\activity \gets \max\set{\activity \increaseactivity^{-20}, \activity_\textmin}$
          \label{line:algo:heraldRBG:decr:1}
            \Switch{$\roundcount$}
              \Case{$1,3,5,7 \mod 8$} 
                \Comment \tiny block \Hchannel \scriptsize 
                \State Send $(\state,\ID_\leader)$ on \Hchannel
              \EndCase
              \vspace{7.2mm} 
              \Case{$2 \mod 8$} 
                \Comment \tiny help leader with game \scriptsize 
                \State Send $(\state,\ID_\leader)$ on \Gchannel
              \EndCase
              \vspace{14mm}
              \Case{$4 \mod 8$} 
                \Comment \tiny help leader with game \scriptsize 
                \State Send $(\state,\ID_\leader)$ on \Gchannel
              \EndCase
              \vspace{14mm} 
              \Case{$6 \mod 8$} 
                \State Listen on $\Rchannel_\rendezvous$ 
                \Comment \tiny from previous game \scriptsize 
                \If{$\msg \neq (\ID_\leader, \success, *)$}
                  \State 
                  \begin{varwidth}[t]{0.75\linewidth}
                    $\roundcount \gets 0$, $\state \gets \Astate$; $\lonely \gets 0$
                  \end{varwidth}
                \Else
                  \State $\rendezvous \gets \msg.[3]$
                \EndIf
                \If{$\roundcount > \tthreshold_{\redblue}$} 
                  \State $\state \gets \Estate$
                \EndIf
              \EndCase
              \Case{$8 \mod 8$} 
              \label{line:algo:heraldRBG:round8:h1}
                \Comment \tiny notify neighbors \scriptsize
                \State Send $(\state, \ID_\leader)$ on $\Rchannel_\rendezvous$
                \label{line:algo:heraldRBG:round8:h2}
              \EndCase
            \EndSwitch
          \EndCase
        \algstore{heraldbreakB}
      \end{algorithmic}
    \end{algorithm}
    \renewcommand\footnoterule{}
  \end{minipage}
  \ \ \ \ \ \
  \noindent
  \begin{minipage}{0.48\textwidth}
    \vspace{11mm}
    \begin{algorithm}[H]
      \small
      \begin{algorithmic}[1]
        \algrestore{heraldbreakB}
        \vspace{4mm}
          \Case{$\Lstate$} 
          \State $\activity \gets \max\set{\activity \increaseactivity^{-20}, \activity_\textmin}$
          \label{line:algo:heraldRBG:decr:2}
          \Switch{$\roundcount$} 
              \Case{$1,3, 5,7 \mod 8$} 
                \Comment \tiny block \Hchannel \scriptsize 
                \If{$\roundcount \pmod 8 =1$} 
                  \State pick randomly $\textitcolor \!\in\! \{\textred, \textblue\}$
                \EndIf
                \State Send $(\state,\ID)$ on \Hchannel
              \EndCase
              \Case{$2 \mod 8$} 
                \Comment \tiny \rbg \scriptsize 
                \If{$\textitcolor = \textblue$} 
                  \State Listen on \Gchannel; 
                  \If{$\msg = \emptyset$ or $\ID \notin \msg$} 
                    \State $\game \gets \fail$
                  \EndIf
                \Else\ Send $(\ID)$ on \Gchannel
                \EndIf
              \EndCase
              \Case{$4 \mod 8$} 
                \Comment \tiny \rbg \scriptsize 
                \If{$\textitcolor = \textred$} 
                  \State Listen on \Gchannel; 
                  \If{$\msg = \emptyset$ or $\ID \notin \msg$} 
                    \State $\game \gets \fail$
                  \EndIf
                \Else\ Send $(\ID)$ on \Gchannel
                \EndIf
              \EndCase
              \Case{$6 \mod 8$} 
                \Comment \tiny Send $\game$ \& new $\Rchannel_\rendezvous$ \scriptsize 
                \State Send $(\ID_\leader, \game, k)$ on $\Rchannel_\rendezvous$ 
                \State $\rendezvous \gets k$ 
                \If{$\game = \fail$} 
                  \State 
                  \begin{varwidth}[t]{0.75\linewidth}
                    $\roundcount \gets 0$, $\state \gets \Astate$; $\lonely \gets 0$
                  \end{varwidth}
                \EndIf
                \vspace{3.5mm}
                \If{$\roundcount > \tthreshold_{\redblue}$} 
                  \State $\state \gets \Mstate$
                \EndIf
              \EndCase
              \Case{$8 \mod 8$} 
              \label{line:algo:heraldRBG:round8:l1}
                 \State Listen on $\Rchannel_\rendezvous$
              \label{line:algo:heraldRBG:round8:l2}
              \EndCase
            \EndSwitch
          \EndCase
        \EndSwitch
      \end{algorithmic}
    \end{algorithm}
    \vspace{0.4cm} \renewcommand\footnoterule{}
  \end{minipage}
\end{figure}


\paraclose{Handshake and \RBP.}
\Pseudocode for these two protocols can be found in Algorithms \ref{algo:handshake} and \ref{algo:rbg}. 
In short, a node $h \in \Hstate'$ that received a message in the herald protocol sends for two rounds on the \emph{handshake channel} \Hchannel, then listens twice, and sends again for two rounds. A node $l \in \Lstate'$ that was sending before acts reversely, i.e., it listens, sends and listens. Only if a node receives \emph{all} expected messages it moves forward to the \rbp, otherwise it returns to \Astate. The handshake can only possibly be completed if a pair of exactly one broadcaster and one receiver participates. 

The \emph{\rbp} is a repetition of $\tthreshold_\redblue/8=\Theta(\log n)$ \emph{\rbgs} of $8$ rounds each. In odd rounds, both nodes $l$ and $h$ of the \lhp send a blocking signal on \Hchannel, preventing nearby handshakes to succeed. At the beginning of each game, the leader $l$ picks randomly blue or red. If it picked red, then in round $2$ it will send a message on channel \Gchannel and listens on \Gchannel in round $4$, for blue it acts reversely. In round $6$, $l$ sends on a previously decided meeting channel $\Rchannel_k$ the index $k'$ of the meeting channel for the next \rbg.\footnote{The very first meeting channel is fixed by $l$ during the handshake, confer lines \ref{line:algo:handshake:h:rendezvous} and \ref{line:algo:handshake:l:rendezvous} of \Cref{algo:handshake}.} In round $8$ it listens on $\Rchannel_{k'}$. $h$ on the other hand sends a message in both rounds $2$ and $4$. It listens in round $6$ to update the meeting channel and in round $8$ it sends a message on $\Rchannel_{k'}$.

By design of the handshake and the blocking signals of odd rounds in the \rbp, a leader $l$ can neighbor a leader or herald of a different pair \emph{only} if that other node moved to the \rbp simultaneously or with a $2$-round shift. If $l$ does have such a neighbor, at some point it will not hear its herald in round $2$/$4$, when it listens. $l$ then aborts the \rbp, notifies $h$ in round $6$ and returns to \Astate. The messages sent by $l$/$h$ in round $6$/$8$ also have the purpose of letting nearby listening active nodes reduce their activity values.
An \emph{isolated pair} 
 on the other hand cannot be knocked out anymore\footnote{except by an MIS node, but that already implies progress} and after $\tthreshold_\redblue=\Theta(\log n)$ rounds the pair can assume that \whp there is no other conflicting pair nearby and the leader joins the MIS.

The handshake did not change and the \rbp has been extended by $2$ rounds---now heralds also can reach their neighbors, confer lines \ref{line:algo:heraldRBG:round8:h1}, \ref{line:algo:heraldRBG:round8:h2}, \ref{line:algo:heraldRBG:round8:l1}, \ref{line:algo:heraldRBG:round8:l2} of \Cref{algo:rbg}. Rounds $1$-$6$ are untouched. Unlike in \cite{tr:podc2013}, each round a node spends in the \rbg, it decreases its activity significantly---after all it is getting messages from a leader or herald all the time. This is accounted for in lines \ref{line:algo:heraldRBG:decr:1} and \ref{line:algo:heraldRBG:decr:2}.

\paraclose{Summary of Changes.}
Compared to the algorithm in 
  \cite{tr:podc2013},
the following three things have changed. The \emph{loneliness support block} is not executed anymore, except for maintaining the counter $\lonely$.
Also, the threshold $\tthreshold_\lonely$ has been lowered to $\Theta\bb{\misRTstrong}$ to reflect the new \runtime of the algorithm. The main change is that nodes reduce their activity $\activity$ if they hear from a nearby leader or herald. The change in the \rbg is an addition of $2$ rounds:
the seventh round is just a copy of rounds $1$, $3$ and $5$; the eighth round gives the herald of the pair a possibility to notify nearby active nodes in order to reduce their activity values---so far only leaders and MIS nodes were able to reach out to their neighbors.

Note that while the algorithm itself has barely changed, the analysis needed to be extended vastly to reduce the \runtime of the algorithm to optimal values. 

\section{Analysis}
\label{sec:analysis}

\subsection{Approach}
To prove that Algorithm \ref{algo:2:coreMIS} indeed solves MIS in the given time bounds, we take the following approach. 
In \cite{tr:podc2013} it was proven that the graph, induced by nodes that passed through the \dfilter, has maximum degree $\Delta_\textmax=\bigO(\log^3 n)$.
\hide{
A node $u$ in the \hfilter ($u \in \Vhf$) enters the MIS either if it assumes to be alone, or if it manages to create and maintain a leader-herald bond with a neighboring node for $\tau_\redblue=\Theta(\log n)$ rounds. We show that once $u \in \Vhf$, it either enters the MIS due to assumed loneliness; or if $u$ has a neighbor in $\Vhf$, then within radius $\delta \coloneqq \delta_\alpha=\bigO(\loglogn)$ other nodes enter the MIS in short succession. By adjusting parameters, which affect the \runtime of the algorithm only by constant factors, we can choose $\delta$ as an arbitrary value in $\Theta(\loglogn)$. 
For small enough $\delta$, we get that at most $\alpha^\delta=\bigO(\sqrt{\log n})$ nodes in $N^\delta(u)$ can enter the MIS---before
$u$ or one of its neighbors has to enter itself. In expectation the time intervals between two nodes in $N^\delta(u)$ moving to state \Mstate is in $\bigO(\polyloglog n)$ and using Chernoff bounds we can upper bound the amount of time needed before $u$ gets decided by $\tau_\itruntime=\bigO(\log^2 n / \Freq + \log n)$.
}
A node $u$ in the \hfilter ($u \in \Vhf$) enters the MIS either if it assumes to be alone, or if it manages to create and maintain a leader-herald bond with a neighboring node for $\tau_\redblue=\Theta(\log n)$ rounds. Once $u \in \Vhf$, it either enters the MIS due to assumed loneliness; or if $u$ has a neighbor in $\Vhf$, then within radius $\delta \coloneqq \delta_\alpha=\Theta(\loglogn)$ soon a \lhp is created that maintains its bond for $\tau_\redblue$ rounds.\footnote{``Soon'' indeed means in $\bigO(1)$ rounds in expectation, as long as $\freq =\Omega(\loglogn)$.} So far this is the same as in \cite{tr:podc2013}. There, however, a stagnation of up to $\tau_\redblue$ rounds might follow before the next isolated \lhp or MIS node gets created in $N^\delta(u)$. Considering that up to $\alpha(\delta)$ nodes in $N^\delta(u)$ can enter the MIS before $u$ or one of its neighbors enters itself, the \runtime of the \hfilter is $\bigO(\tau_\redblue \alpha(\delta))$, or $\bigO(\log n \polyloglog n)$ if $\alpha$ is polynomial.

In the present paper, by decreasing activity levels of nodes neighboring \lhps, the stagnation that can be caused by leaders on their way to join the MIS does not last for longer than $\bigO(\loglogn)$ rounds in expectation. This allows the creation of isolated \lhps in $N^\delta(u)$ in a pipelined manner, reducing the expected \runtime of the \hfilter to $\bigO(\alpha(\delta)\loglogn)=\bigO(\alpha^\delta\loglogn)$. Unlike in \cite{tr:podc2013}, here we also can choose $\delta$ as an arbitrarily small value in $\Theta(\loglogn)$ without increasing the \runtime by more than constant factors. Choosing $\delta < \loglogn/\log \alpha$ and a Chernoff argument bounds the \runtime of the \hfilter by $\bigO(\log n)$ with high probability.


In more detail, let $u$ be a node that enters the \hfilter in round $t_u$. For the sake of contradiction assume that $u$ is not decided by time $t_u+\tau_\itruntime$.  If $u$ stays lonely, it enters the MIS eventually in $\tau_\lonely \ll \tau_\itruntime$ rounds. Note that for $u$ to move from being non-lonely to lonely, some node in $N^2(u)$ must have entered the MIS shortly before that and eliminated all neighbors that $u$ had in $\Vhf$. This can happen at most $\alpha^2$ times and thus the time $u$ spends lonely is at most $\alpha^2 \tthreshold_\lonely \ll \tau_\itruntime$.
Hence, assume that $u$ is not lonely, i.e., has a neighbor $u'$, and that no node in $N^2(u)$ joins the MIS. We show that then most of the time both $u$ and $u'$ have a high activity value $\activity$.

The following argumentation motivates this. For a node $u$ to decrease $\activity(u)$, it must neighbor a pair. Let us call isolated pairs (in which the leader does not neighbor another leader or herald) \emph{good pairs} and the others \emph{bad pairs}. Conditioning on the event of a pair being created, there is a constant probability that it is a good pair. This can be considered progress, as it guarantees one of two things: 
Within $\bigO(\log n)$ rounds either the leader of the good pair itself enters the MIS or a neighbor of this pair does.
In the opposite case of bad pairs being created, in expectation these remain bad pairs only for a constant number of rounds.
Moreover, \whp, there are no more than $\bigO(\log n)$ rounds \emph{in total} in which bad pairs exist in $N^3(u)$ after $t_u$, also causing at most $\bigO(\log n)$ rounds of $u$ and $u'$ having an activity value below $1/2$. Adjusting parameters we get that for some $\tau_\progress=\bigO(\tau_\lonely)$ and an arbitrarily small constant $\eps$, for $(1-\eps)\tau_\progress$ rounds in $[t_u,t_u+\tau_\progress]$ the activity values of both $u$ and $u'$ are $1/2$. 

Furthermore, all pairs, good and bad, inform their neighbors. By the definition of good pairs, the leaders of these form an independent set. With our choice of $\delta$ thus at most $\bigO(\sqrt{\log n})$ good pairs exist in $N^\delta(u)$. We argue that the activity values of nodes neighboring a pair that participated in the \rbp for $\Omega(\loglogn)$ rounds (which almost surely holds for good pairs), are below $\activity_\low := \Theta(1/\polylog n)$ with some decent probability (i.e., $1-\log^{\Omega(1)}n$). The total number of nodes in $N^\delta(u)$ becoming part of a good pair in $[t_u,t_u+\tau_\progress]$ is $\bigO(\sqrt{\log n})$ and hence the total amount of nodes neighboring good pairs in that time is $\bigO(\sqrt{\log n} \Delta_\textmax)$. A union bound and a Chernoff bound provide that the total amount of rounds in which any node $v$ in $N^\delta(u)$ neighboring a good pair has an $\activity(v) > \activity_\low$ is less than $\eps \tau_\progress$.

Together with the previous claim we get that in $(1-2\eps)\tau_\progress$ rounds in $[t_u,t_u+\tau_\progress]$ both conditions are true: $\activity(u)=\activity(u')=1/2$ and \emph{all} good pairs in $N^\delta(u)$ ``silenced'' their neighbors---i.e., all their neighbors have activity below $\activity_\low$. Let us call a round with this property \emph{promising for $u$}.
Without going into detail, we can show that now within distance 
$\delta$
there exists a node $w$ with the property of being so-called $\eta$-fat, i.e., $w$'s neighborhood is at least roughly as active as that of any of its neighbors'. Fatness implies that \wcp two nodes $l$ and $h$ in $N^1(w)$ become a \emph{good} \lhp. As said before, such a pair reduces the activity values of its neighbors rather quickly, which causes the property of $\eta$-fatness to move away from $w$ to another node in $N^\delta(u)$ and we can repeat the argument. If a bad pair is created, then $\eta$-fatness might shortly fade, but is restored quickly, so we can almost omit this case. Again using Chernoff tail bounds, we show that at some point $u$ itself becomes $\eta$-fat and now the creation of an MIS node in $N^2(u)$ is inevitably.

We summarize again. Once an MIS node or good pair arises in constant distance from $u$, we are done.
In an $\Omega(\RT)$ interval, $u$ is mostly in a promising state. \Wcp\ every $\bigO(1)$ rounds a node in $N^\delta(u)$ becomes part of a good pair or joins the MIS. In expectation, within $\Theta(\loglogn)$ rounds MIS nodes eliminate their neighbors and good pairs \emph{silence} theirs. After any of those events happen, we measure the time until $u$ is in a promising state again. Using Chernoff over $\bigO(\sqrt{\log n})$ such random variables results in needing at most $\bigO(\log n)$ time, thus, by then $u$ must be covered.


\subsection{Guarantees from the \Dfilter}

We informally state the two main accomplishments of the \dfilter, proper lemma statements are below; for proofs we refer to \cite{tr:podc2013}.
For each node $v$ the \dfilter guarantees that within the \runtime of $\Theta(\RT)$ rounds, 
\begin{compactenum}[(1)]
  \item 
  $v$ or one of its neighbors enters the \hfilter, but 
  \item
  no more than $\Delta_\textmax = \bigO(\log^3 n)$ nodes in $N^1(v)$ do.\footnote{For large enough $n$ it holds that $\Delta_\textmax \leq \log^4 n$ and we assume this in our analysis.}
\end{compactenum}

From now on we only look at the graph $G'$ induced by $V' := \Vhf \cup \Mstate$, induced by non-eliminated nodes that made it past the \dfilter. \emph{All} notations are tied to this subgraph, though we omit this in our notations, i.e., $N(u)$ means the neighborhood of $u$ in $G'$. Instead, if we need to consider nodes from the states $\Wstate$, $\Dstate$, then we explicitly say so and show this e.g. by writing $N_G(u)$.


\begin{lemma}
  \label{lemma:dfiltermaxdegree}
  \Whp, for each node $v$ and each round $r$, at most $\bigO(\log n)$ nodes in $N^{1}_{G}(v)$ come out of the \dfilter\ in round $r$ to enter the \hfilter. Each node that enters the \hfilter\ has spent $\Omega(\log n)$ rounds in the \dfilter.
\end{lemma}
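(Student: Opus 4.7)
The plan splits along the two sentences of the lemma. For the second sentence, I would simply observe that by the structure of \algoDF every node entering the \dfilter\ initializes $\roundcount \gets 0$ in state $\Wstate$ and can only transition to $\Dstate$ once $\roundcount$ reaches $\tth_\Wstate = \Theta(\log n)$; since the only exit from the \dfilter\ to the \hfilter\ is line \ref{line:exit-decay}, which requires state $\Dstate$, any node reaching the \hfilter\ has already spent $\Omega(\log n)$ rounds in $\Wstate$ alone.

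For the first sentence, my strategy is to bound the expected number of $v$'s neighbors in $G$ that broadcast on line \ref{line:exit-decay} in round $r$---this broadcast is the only way to leave the \dfilter\ other than becoming $\Estate$---and then conclude via a Chernoff bound together with a union bound over all rounds and nodes. A node $u \in N_G(v) \cap \Dstate$ broadcasts in round $r$ with probability $p_u = 2^{\phase(u)}/n \le 1/4$, since $\phase$ is capped at $\log n - 2$. The key intermediate claim, which I would lift directly from \cite{tr:podc2013}, is a self-limiting bound on the phase distribution: at any round and for every phase $\phase \in \set{0, \dots, \log n - 2}$, the number of nodes in $N_G(v) \cap \Dstate$ currently at phase $\phase$ is $\bigO(n/2^\phase)$ \whp. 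The underlying mechanism is that whenever a node $u$ at phase $\phase$ broadcasts on one of the $\ndc = \Theta(\freq)$ decay channels, any neighbor of $v$ listening on the same channel is forced back to $\Wstate$ via line \ref{line:hear-decaymessage}, so a substantially larger population at phase $\phase$ would produce too many broadcasts per round to be sustainable.

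Granting this phase-population bound, the expected number of broadcasts among $N_G(v)$ in round $r$ is at most
\[
\sum_{\phase=0}^{\log n - 2} \bigO\!\left(\frac{n}{2^\phase}\right) \cdot \frac{2^\phase}{n} \;=\; \bigO(\log n),
\]
and a standard Chernoff tail bound followed by a union bound over the polynomially many rounds and nodes would yield the desired \whp\ conclusion. The hardest step is establishing the phase-population bound itself: per-phase populations are correlated across neighbors via channel collisions and across phases via the restart mechanism, and the argument must also handle the interplay with adversarial wakeups continually injecting fresh nodes into $\Wstate$. Since this is exactly the \dfilter\ analysis already carried out in \cite{tr:podc2013}, I would invoke it as a black box rather than reprove it, consistent with the paper's stated treatment of the \dfilter.
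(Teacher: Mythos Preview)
Your proposal is correct and matches the paper's own treatment: the paper does not prove this lemma at all but simply states that ``for proofs we refer to \cite{tr:podc2013},'' treating the entire \dfilter\ analysis as a black box. Your argument for the second sentence (the $\Omega(\log n)$ residence time follows immediately from the mandatory $\tth_\Wstate = \Theta(\log n)$ rounds in state $\Wstate$) is exactly right, and for the first sentence you likewise end up invoking \cite{tr:podc2013}, which is all the paper does.
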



\begin{lemma} 
  \label{lemma:dfilterrunningtime}
  \Whp, for each node $u$ that is in the \dfilter\ in round $r$, by round $r' = r + \bigORT$, either $u$ is \emph{dominated}, in which case it has a dominating neighbor, or at least one node in $N^{1}(u)$ gets out of the \dfilter\ and enters the \hfilter.
\end{lemma}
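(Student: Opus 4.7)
The plan is to reduce the lemma to showing that some node in $N^1(u)\cap\Dstate$ performs any decay-channel broadcast during the window $[r,r']$. The reduction is immediate because line~\ref{line:exit-decay} of \Cref{algo:decay} moves a decaying node into the \hfilter in the same round it first broadcasts, regardless of whether that broadcast actually reaches anyone. So it suffices to show that either $u$ becomes dominated during the window, or some $\Dstate$-broadcast occurs in $N^1(u)$ during the window.

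Next I would split on what happens to $u$ during $[r,r']$. If $u$ ever receives a message with $\msg.\state=\Mstate$, then $u$ becomes eliminated and, having just received a message from an MIS neighbor, is dominated---the first conclusion holds. Otherwise, suppose $u$ is reset (line~\ref{line:hear-decaymessage}) at some round of the window. A reset is triggered only by $u$ receiving $\msg.\state=\Dstate$, which means the sender $v\in N^1(u)$ broadcast in state $\Dstate$ and, by the reduction above, simultaneously entered the \hfilter; the second conclusion holds.

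It remains to handle the case where $u$ is neither dominated nor reset throughout $[r,r']$. Then $u$'s round counter advances monotonically. Within at most $\tth_\Wstate=\Theta(\log n)$ rounds, $u$ transitions from $\Wstate$ to $\Dstate$ at phase~$0$, and the phase then increases by one every $\tth_\Dstate=\Theta(\log n/\freq)$ rounds, capping at $\log n-2$ after at most $(\log n-2)\tth_\Dstate=\bigO(\log^2 n/\freq)$ further rounds. At that point $u$'s per-round broadcast probability is $1/4$; in an additional $c\log n$ rounds, the probability that $u$ has not yet broadcast is at most $(3/4)^{c\log n}\le n^{-\Omega(c)}$. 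Choosing the constant $c$ sufficiently large makes this $n^{-\Omega(1)}$, so \whp $u$ broadcasts in some round of $[r,r']$ and thereby enters the \hfilter; since $u\in N^1(u)$, the second conclusion holds. The total budget used is $\tth_\Wstate+(\log n-2)\tth_\Dstate+c\log n=\bigORT$, matching $r'-r$.

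The main obstacle I anticipate is the union bound: the lemma is a \whp statement quantified over every node and every round, so the $n^{-\Omega(1)}$ failure probability in the third case must be tuned, by choice of $c$ and of the hidden constants in $\tth_\Wstate,\tth_\Dstate$, to survive a union bound of $\poly(n)$ terms without degrading the \whp guarantee. This is a standard but delicate constant-chasing exercise; the same decay filter is analyzed in essentially this way in \cite{tr:podc2013}, and the current lemma follows by reading off the guarantees of that analysis.
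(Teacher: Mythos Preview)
Your argument is correct. The paper does not prove this lemma at all: it simply observes that the statement is Lemma~8.4 of \cite{tr:podc2013} (with the tighter bound already implicit there) and defers entirely to that reference, so there is no in-paper proof to compare against. Your three-case split (dominated, reset by a $\Dstate$-message, or neither) is exactly the natural direct argument, and the key observation---that any $\Dstate$-broadcast simultaneously pushes the sender into the \hfilter via line~\ref{line:exit-decay}, so a reset of $u$ already certifies the conclusion---is the right reduction. One minor remark: the constants in $\tth_\Wstate$ and $\tth_\Dstate$ affect only the time budget, not the failure probability; the $n^{-\Omega(c)}$ bound comes solely from the final $c\log n$ rounds at phase $\log n-2$, so tuning $c$ alone suffices for the union bound.
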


This statement is the same as Lemma 8.4 in \cite{tr:podc2013}, except that there the bound was listed as $r'=r+\bigO(\log^2n/\Freq)+\softO(\log n)$. Yet the proof in \cite{tr:podc2013} does actually already support the bound $r'=r+\bigORT$. In \Cref{algo:decay}, \algoDF, we changed the style of the algorithm compared to the one in \cite{tr:podc2013}, but not the way the algorithm works, hence we omit the proof for \Cref{lemma:dfilterrunningtime} and refer to \cite{tr:podc2013}.\footnote{The underlying algorithm has been first used and analyzed in \cite{DKN:disc:2012}, in a slightly more restrictive graph model and in \cite{podc2013} it was shown that it also works in BIGs.}



\subsection{Definitions for the \Hfilter}

Practically all parameters (including the above mentioned $\cthreshold_\textmax$) depend in one way or another on the bound on independence, i.e., on $\alpha$, but in most cases those dependencies are captured in the hidden constants of those asymptotic bounds. 

For our analysis of a node $u$ that enters the \hfilter, we observe a specific $\delta=\Theta(\loglogn)$ neighborhood $N^\delta(u)$ of $u$. We set
\reduceSpaceAroundEquation{
  \begin{equation}
    \label{eq:defdelta}
    \delta:=\delta_\alpha:=\frac{\loglogn}{2\log \alpha} = \Theta(\loglogn),
  \end{equation}
}
i.e., $\alpha^\delta = (2^{\log \alpha})^{\frac{\loglogn}{2\log \alpha}} = \sqrt{\log n}$. The choice of $\delta$ guarantees that any independent set in a $\delta$-neighborhood is of size at most $\sqrt{\log n}$.

Our main goal is to show quick progress in $N^\delta(u)$. Progress is clearly achieved if an MIS node arises, but due to the way a node can become an MIS node, we also consider the creation of an isolated \lhp progress (more precisely, the \emph{leader} of the pair needs to be isolated from other nodes in \Lstate or \Hstate), as the leader will eventually join the MIS (or be knocked out permanently by a newly created MIS node). 

\begin{definition} 
(\textbf{Good Pair, Bad Pair}) 
\label{def:goodpair}
  Consider a \lhp $(l,h)$ in round $r$. We say $(l,h)$ is a \emph{good pair} in round $r$ if none of the neighbors of $l$ (other than $h$) is 
    (1) in state \Lstate or 
    (2) in state \Hstate or 
    (3) is a herald candidate in round $5$ or $6$ of its respective handshake.  
  Otherwise we say that $(l,h)$ is a \emph{bad pair}.
\end{definition}

Note that the definition of a good/bad pair is independent of possibly neighboring MIS nodes. MIS nodes existing already for $4$ rounds prevent the creation of \lhps in their neighborhood completely. If on the other hand a new MIS node appears next to a \lhp (which is \whp only possible through the loneliness route), then we have progress in a close neighborhood. 
Also, note that only the leader of the pair must be 'isolated'. There are two reasons for this: (1) only leaders join the MIS (2) by protocol design the \emph{herald} of a pair can only
receive messages from MIS nodes or its own leader---not by other leaders (not even in round $6$) nor other heralds. This is due to the fact that any neighboring heralds act completely synchronously and a leader neighboring a non-paired herald is ahead by precisely $2$ 
rounds.\footnote{
Cf.\ \Cref{lemma:adjpairs} and \Cref{algo:rbg}.
}
Note also that bad pairs can become good, but not vice versa. This is because all leaders and heralds prevent the creation of further leaders/heralds in their neighborhood.

\hide{
The maximum degree $\Delta_\textmax$ of graph $G'$ induced by nodes in $\Vhf \cup \Mstate$ is 
in $\bigO(\log^3 n)$ and hence we can 
assume that $\Delta_\textmax \leq \log^4 n$.
}


%

\begin{definition}
(\textbf{Activity Mass}) 
  \label{def:activity}
  For a node $u$ we define $\Activity(u) := \sum_{v \in N^1(u)} \activity(v)$. We call this the \emph{activity sum} or \emph{activity mass} of node $u$.  
  Furthermore we let $\Activity^\circ(u) := \Activity(u)-\activity(u)=\sum_{v \in N(u)}\activity(v)$.
  In some cases we are only interested in the activity mass of active nodes and then we have $\Activity_\Astate(u) := \sum_{v \in N^1_\Astate(u)} \activity(v)$ and $\AAC(u)$ is defined analogously. Also
  \reduceSpaceAroundEquation{
    \begin{eqnarray}
      \label{def:activity:min}
      \activity_\textmin &:=& \log^{-24} n =
      \Theta(1/\polylog n),
      \\
      \activity_\low &:=& \sqrt{\activity_{\textmin}} = \log^{-12} n.
    \end{eqnarray}
  }
\end{definition}

\begin{definition} (\textbf{Fatness}) 
  We call a node $u$ 
  \emph{$\hat{\eta}$-fat} for some 
$\hat{\eta} \in (0,1)$, if 
$\, \Activity(u) \geq \hat{\eta}\cdot\max_{v \in N(u)} \{\Activity(v)\}$.
\end{definition}
In simple words, in terms of activity mass, $u$ is (at least) in the same 'league' as its neighboring nodes.
Using this we choose a specific fatness parameter $\eta < 1$:
\reduceSpaceAroundEquation{
\begin{equation}
  \label{eq:defeta}
  \eta = \eta_\alpha := \alpha^{-8} \leq \alpha^{-2\frac{\log \Delta_\textmax}{\loglogn}}
\end{equation}
}
The choice of $\eta$ assures that a chain of activity sums $(\Activity(v_i))_{i \geq 1}$ of nodes $v_i$ on a path $v_1, v_2, v_3, \dots$ with $\Activity(v_i) \geq \eta^{-1} \Activity(v_{i-1})$ and $\Activity(v_1)\geq 2$ has length at most $\delta$, because 
\reduceSpaceAroundEquation{
\begin{equation}
  \label{eq:etachain}
  (\eta^{-1})^{\delta} = (\alpha^{-8})^{-\frac{\loglogn}{2\log \alpha}} \geq 
  (2^{\log \alpha})^{2\frac{\log \Delta_\textmax}{\loglogn}\frac{\loglogn}{2\log \alpha}} 
  = \Delta_\textmax 
  \stackrel{\activity(u)\leq 1/2}{>} 
  \max_{u \in \Vhf} \Activity(u).
\end{equation}
}

The algorithm needs to know a few more parameters. $\increaseactivity$ and $\decreaseactivity$ govern the changes in a node's activity level. The former is a small constant, greater than, but close to $1$. In most rounds a node $u$ increases $\activity(u)$ by $\increaseactivity$. $\decreaseactivity$ is a much larger factor used for decreasing activity, large enough to undo many previous increments, but still in $\bigO(1)$. 
\reduceSpaceAroundEquation{
\begin{eqnarray}
  \increaseactivity &:=& 2^{6/(1000 \actm )} > 1
  \\
  \decreaseactivity &:=& \increaseactivity^{20\actm} = 2^{12/100} > 1
\end{eqnarray}
}
\noindent $\actm$ is a large enough constant that depends on $n_\Rchannel$, but assuming that $n_\Rchannel \geq 3\alpha^2$, $\actm \geq 2^{16}n_\Rchannel$ suffices.
Since $\activity_\textmin = \log^{-24} n$, 
$167\actm\loglogn = \Theta(\loglogn)$ consecutive increments raise a node's activity value to $1/2$. Analogously, $\Theta(\loglogn)$ decrements decrease it to its minimal value $\activity_\textmin$. 

Also two time thresholds $\tau_\redblue=\Theta(\log n)$ and $\tau_\lonely = \Theta(\RT)$ are needed by the algorithm. $\tthreshold_\redblue$ is the number of rounds a node spends in the \rbp, and it is a multiple of $8$. If a node $u \in \Vhf$ does not receive a single message for $\tthreshold_\lonely$ consecutive rounds, while being in the \hfilter, a $u$ deduces that it is alone or all its neighbors got eliminated, and joins the MIS.
In our analysis we use further time thresholds $\tau_\notif = \Theta(\log n)$, $\tau_\progress = \Theta(\RT)$ and $\tau_\itruntime = \Theta(\RT)$, for which the following inequality chain holds:
\reduceSpaceAroundEquation{
$$
\tau_\itruntime \gg \tau_\lonely \gg \tau_\progress \gg \tau_\redblue \gg \tau_\notif
$$
}
$\tau_\notif$ is the maximum time needed for an MIS node to notify, \whp, all its neighbors. 
If
a node $u$ is not lonely, then, \whp, significant progress is achieved in less than $\tau_\progress$ rounds; more precisely, an MIS node is created in 
$N^{\bigO(1)}(u)$.
\Whp, $\tau_\itruntime$ is the maximum time a node spends in the \hfilter before it gets decided.



\subsection{Candidate Election---Nodes in States \Astate\ (and $\Lstate'$)}
\label{subsec:candidate:election}
At first we establish a few facts about how nodes can transit from state \Astate\ to state $\Lstate'$ or $\Hstate'$, respectively. Note that nodes can switch between states \Astate\ and $\Lstate'$ without communication, but to get towards any of the three states $\Hstate'$, $\Lstate$ and $\Hstate$, communication is mandatory.

The next lemma contains a variety of events. To not clutter the lemma statement, we define them here. 
Let $k$ be a positive integer \emph{constant}, $r$ some round, $u$ some node in the \hfilter in round $r$, $i$ an index from $1, \dots, \nac$, $S$ be a (possibly empty) subset of $N^k(u) \cap \Astate$. Furthermore let $\partial S \subset S$ be \emph{the} subset of $S$ that has connections outside of $S$, but in $N^k(u)$, i.e., $\partial S := S \cap N\big(N^k(u)\setminus S\big)$. At last, let $S=S^n \dotcup S^b \dotcup S^l$ be a partition of $S$. We call the tuple $(k,u,r,i,S)$) a \emph{constellation}. For a constellation the following events are defined:

\begin{tabular}{lll}
  \hspace{-2.2mm}\textbullet & \hspace{-2mm}$\mathcal{S}_{\neg i}  /  \mathcal{S}^n_{\neg i}$: & \emph{no} node in $S  /  S^n$ operates on $\Achannel_i$ in round $r$, \\
  \hspace{-2.2mm}\textbullet & \hspace{-2mm}$\mathcal{S}_{i}  /  \mathcal{S}^b_{i}  /  \mathcal{S}^l_{i}$: & \emph{all} nodes in $S  /  S^b  /  S^l$ operate on $\Achannel_i$ in round $r$, \\
  \hspace{-2.2mm}\textbullet & \hspace{-2mm}$\partial \mathcal{S}_{\neg i}$: & \emph{no} node in $\partial S$ operates on $\Achannel_i$ in round $r$, \\
  \hspace{-2.2mm}\textbullet & \hspace{-2mm}$H_i$: & \emph{no} node in $N^k(u)\setminus S$ receives a message on channel $\Achannel_i$ in round $r$, \\
  \hspace{-2.2mm}\textbullet & \hspace{-2mm}$H_{\neg i}$: & \emph{no} node in $N^k(u)$ receives a message on some channel $\Achannel_j \neq \Achannel_i$ in round $r$ and \\
  \hspace{-2.2mm}\textbullet & \hspace{-2mm}$H$: & \emph{no} node in $N^k(u)$ receives a message on any channel in $\set{\Achannel_1,\dots,\Achannel_\nac}$ in round $r$.
\end{tabular}


\begin{lemma}
  \label{lemma:wcpnoherald}
  Let $(k,u,r,i,S)$ be a constellation. Then,
\iftrue
  \TabPositions{4.1cm}
  \begin{enumerate}[(1)]
    \item $\Pr(H) =$ \tab $ 1 - \bigO(\pi_\ell \alpha^k)$     
    \label{lemma:wcpnoherald:1}
    \item $\Pr(H_{\neg i}|\mathcal{S}_{\neg i}) =$ \tab $ 1 - \bigO(\pi_\ell \alpha^k)$    
    \label{lemma:wcpnoherald:2}
    \item $\Pr(H_{\neg i}|\mathcal{S}_{i}) =$ \tab $ 1 - \bigO(\pi_\ell \alpha^k)$    
    \label{lemma:wcpnoherald:3}
    \item $\Pr(H_{\neg i}|\mathcal{S}^n_{\neg i} \wedge \mathcal{S}^b_{i} \wedge \mathcal{S}^l_{i}) =$ \tab $ 1 - \bigO(\pi_\ell \alpha^k)$    
    \label{lemma:wcpnoherald:4}
    \item $\Pr(H_i|\partial \mathcal{S}^n_{\neg i}) =$ \tab $ 1 - \bigO(\pi_\ell \alpha^k)$    
    \label{lemma:wcpnoherald:5}
  \end{enumerate}
\fi
\end{lemma}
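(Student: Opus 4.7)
The plan is to handle all five claims uniformly via a union-bound argument over the potential listener--broadcaster configurations in $N^{k+1}(u)$, exploiting the $2^{-j}$-weighted channel distribution and bounded independence. The starting observation is that an active node $v\in\Astate$ listens on $\Achannel_j$ with probability exactly $2^{-j}\pi_\ell\activity(v)$ and broadcasts on $\Achannel_j$ with probability at most $2^{-j}\activity(v)$. A receive event at $v$ on $\Achannel_j$ requires $v$ to listen on $\Achannel_j$ and some neighbor $w\in N_\Astate(v)$ to broadcast on the same channel; since the $i$- and $q$-draws across distinct nodes are independent, the probability that a fixed pair $(v,w)$ aligns is at most $4^{-j}\pi_\ell\activity(v)\activity(w)$. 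Summing over $j\in\{1,\dots,\nac\}$ yields the per-pair bound $\pi_\ell\activity(v)\activity(w)/3$, and a union bound over $w$ gives $\Pr(v\text{ receives on some }\Achannel_j)\le \pi_\ell\activity(v)\AAC(v)/3$.

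For claim~(1) I would then sum over $v\in N^k(u)\cap\Astate$ to bound the failure probability by $(\pi_\ell/3)\sum_v\activity(v)\AAC(v)$; combining $\activity\le 1/2$ with the bounded-independence estimate $\alpha(r)\le\alpha^r$ and a covering argument that charges each contributing edge to an independent-set element of $N^{k+1}(u)$ makes this $O(\pi_\ell\alpha^k)$. For claims~(2)--(4), the respective conditionings fix only the $i$-draws of (subsets of) $S$ while leaving all $q$-draws and every random choice of $N^k(u)\setminus S$ independent. Because each conditioning either pushes the conditioned nodes entirely off $\Achannel_i$ or entirely onto it, the listener--broadcaster alignment analysis of~(1) transfers verbatim to the channels $\Achannel_j$ with $j\ne i$; the conditional probability that a node in $S^n$ uses any fixed channel $\Achannel_j$ with $j\ne i$ is only inflated by the normalizing factor $1/(1-2^{-i})\le 2$, and the net sum is still $O(\pi_\ell\alpha^k)$. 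For claim~(5), $H_i$ concerns receives on the single channel $\Achannel_i$ by listeners in $N^k(u)\setminus S$; the conditioning $\partial\mathcal{S}^n_{\neg i}$ kills every potential broadcaster in $\partial S^n$, so the relevant pairs $(v,w)$ have $w\in N_\Astate(v)\setminus\partial S^n$, and restricting the counting to the single channel $\Achannel_i$ replaces the $\sum_j 4^{-j}\le 1/3$ factor by $4^{-i}\le 1$, giving the same $O(\pi_\ell\alpha^k)$ bound.

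The main obstacle will be rigorously establishing the activity-sum bound $\sum_v\activity(v)\AAC(v)=O(\alpha^k)$ purely from the BIG property, since $|N^k(u)\cap\Astate|$ itself can be as large as $\Delta_\textmax^k$; the argument needs to combine $\activity\le 1/2$, the exponential estimate $\alpha(k+1)\le\alpha^{k+1}$, and a covering of the edge-product double-sum by independent-set contributions inside $N^{k+1}(u)$ to convert a degree-dependent count into the desired $\alpha^k$ dependence. A secondary subtlety is to verify, for each of claims~(2)--(5) separately, that the specified conditioning affects only the $i$-draws of the conditioned nodes and never induces hidden correlations between the conditioned and unconditioned randomness, so that the independence needed for the per-pair bound is preserved under the conditional measure.
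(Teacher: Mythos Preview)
Your per-pair upper bound and the reductions in (2)--(5) to the analysis of~(1) are fine; the real problem is the activity-sum estimate you flag as the ``main obstacle.'' The bound
\[
\sum_{v\in N^k_\Astate(u)}\activity(v)\,\AAC(v)\;=\;O(\alpha^k)
\]
is simply false. Take the extreme BIG case $\alpha=\alpha(1)=1$: then $N^k(u)$ can be a clique on $m$ vertices (with $m$ as large as $\Delta_\textmax=O(\log^3 n)$, which the \dfilter\ permits), all with activity~$1/2$. Each summand is $\Theta(m)$ and the whole sum is $\Theta(m^2)$, not $O(1)$. No covering of the edge double-sum by independent-set elements can repair this, because the BIG property bounds independent sets, not cliques or degrees.

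What you are missing is the collision factor. A receive at $v$ on $\Achannel_j$ needs \emph{exactly one} neighbor to broadcast, so the correct per-channel bound carries the product $\prod_{w'\in N_\Astate(v)}\bigl(1-(1-\pi_\ell)\activity(w')2^{-j}\bigr)\le e^{-\frac12 2^{-j}\Activity(v)}$. With this factor the channel-$j$ contribution is at most $C^v_j:=4\pi_\ell\,2^{-2j}\activity(v)\Activity(v)\,e^{-2^{-j-1}\Activity(v)}$; using $x^2e^{-x}=O(1)$ at $x=2^{-j}\Activity(v)$ and a geometric decay away from $j\approx\log\Activity(v)$ gives $\sum_j C^v_j=O\!\bigl(\pi_\ell\,\activity(v)/\Activity(v)\bigr)$. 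Now the weighted Tur\'an inequality (Lemma~\ref{lemma:weightedturan}) yields $\sum_{v\in N^k_\Astate(u)}\activity(v)/\Activity(v)\le\alpha^k$, which is exactly the $O(\pi_\ell\alpha^k)$ you want. Once (1) is established this way, your transfer arguments for (2)--(5) go through essentially as written (and match the paper's).
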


The proofs for these statements are provided in Appendix \ref{lemma:app:wcpnoherald}.

Look at (\ref{lemma:wcpnoherald:1}). The lemma says that the probability for a herald candidate to be created in any single round for any neighborhood of constant radius is at most linear in $\pi_\ell$. Since $\pi_\ell$ is an arbitrarily small constant parameter chosen by us, we can make the probability for this event arbitrarily small. The proof for (\ref{lemma:wcpnoherald:1}) is exactly the same as the proof for Lemma 8.6 in \cite{podc2013}, with $\alpha(k)$ replaced by $\alpha^k$. We provide its proof 
in Appendix \ref{lemma:app:wcpnoherald} 
nevertheless, as (\ref{lemma:wcpnoherald:2}) and (\ref{lemma:wcpnoherald:3}) are new results that do directly depend on (\ref{lemma:wcpnoherald:1}). What (\ref{lemma:wcpnoherald:2}) says, is, that even if we condition on some nodes $S \subset N^k(u)$ \emph{not} to operate on channel $\Achannel_i$, this does not increase the chance (significantly---i.e., by more than a constant factor) for these or other nodes to receive anything on some channel $\Achannel_j \neq \Achannel_i$. Analogously, (\ref{lemma:wcpnoherald:3}) claims that if we condition on some nodes $S \subset N^k(u)$ to \emph{definitely} operate on channel $\Achannel_i$, this does still not affect the probability for any other node to receive any message on some other channel $\Achannel_j \neq \Achannel_i$.
(\ref{lemma:wcpnoherald:4}) is a combination of (\ref{lemma:wcpnoherald:2}) and (\ref{lemma:wcpnoherald:3}), plus we even fix the knowledge of which nodes, that operate on $\Achannel_i$, do broadcast ($S^b$) or listen ($S^l$).
(\ref{lemma:wcpnoherald:5}) has already been proven in \cite{podc2013} as Claim 8.9, except that there $k$ was fixed to $2$. The analogous proof is provided 
in the Appendix 
as well.

\vspace{1mm}
Under certain conditions the creation of herald candidates can be lower bounded. However, for our algorithm to work, we need not only to prove that they are created, but that this creation happens in \emph{solitude}, i.e., in a close neighborhood no other herald candidates are created. Hence the next lemma is a key result in our whole proof. It is almost Lemma 8.8 from \cite{podc2013}, however with some adaptions. 
To state the lemma, we need to introduce the parameters

\begin{eqnarray}
  \activity_\textmin &:=& \log^{-6\ttkdelta} n =\frac{1}{\Omega(\polylog n)}
  \\
  \activity_\low &:=& \sqrt{\activity_{\textmin}} = \log^{-3\ttkdelta} n,
\end{eqnarray}
depending on the constant $\ttkdelta \geq \log_{\log n} \Delta_\textmax$.

\begin{lemma}
  \label{lemma:fatgivesherald}
  Let $t$ be a round in which for a node $u$ in state \Astate\ in the \hfilter the following holds:
  \begin{compactitem}
    \item there is no herald candidate in $N^2(u)$,
    \item all nodes $v \in N^2(u)$ that neighbor a herald or leader, have $\activity(v) \leq \activity_\low$,
    \item all nodes in $N^2(u)$ neighboring MIS nodes are eliminated,
    \item $\Activity(u) \geq 1$,
  \end{compactitem}
  If in addition it holds that either
  \begin{compactenum}[\bfseries (a)]
    \item $\Activity(u) <5\alpha$, $u$ is $\frac{1}{5\alpha}$-fat and $\activity(u)=\frac{1}{2}$, or
    \item $\Activity(u) \geq 5\alpha$ and $u$ is $\eta$-fat.
  \end{compactenum}
  Then by the end of round $t' \in [t, t+7]$, with probability $\Omega(\pi_\ell)$ either a node in $N^2(u)$ joins the MIS or a good pair $(l,h)\in (\Lstate \cap N^1(u)) \times (\Hstate \cap N^1(u))$ is created. 
\end{lemma}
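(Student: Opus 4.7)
The plan is to show that in round $t$, with probability $\Omega(\pi_\ell)$, some node $l \in N^1(u)$ broadcasts on a well-chosen channel $\Achannel_{i^*}$ which is heard by a single listening node $h \in N^1(l)$, creating leader/herald candidates that successfully conclude their handshake over rounds $t+1,\dots,t+6$ into a \emph{good} pair in $(\Lstate \cap N^1(u)) \times (\Hstate \cap N^1(u))$. If instead an MIS node is created in $N^2(u)$ during this window, the conclusion holds trivially.

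I would first fix the channel index $i^*$ based on $\Activity(u)$. Since every node $v \in \Astate$ picks index $i$ with probability $2^{-i}$ and then broadcasts with probability $(1-\pi_\ell)\activity(v)$, the expected broadcast mass on $\Achannel_i$ originating in $N^1(u)$ is $\Theta(2^{-i}\Activity(u))$; choosing $i^* = \lceil\log_2 \Activity(u)\rceil$ makes it $\Theta(1)$. In case (a), $u$ itself plays the role of the ``source'' of high activity, broadcasting on $\Achannel_{i^*}$ with constant probability because $\activity(u) = 1/2$ and $i^* = O(\log \alpha)$. In case (b), $\eta$-fatness plus $\Activity(u) \geq 5\alpha$ forces some neighbor $w \in N(u)$ to also carry substantial activity (otherwise no neighbor could dominate $u$'s mass by more than $\eta^{-1}$ yet the total exceeds $5\alpha$); the same calculation in $w$'s neighborhood places the pair inside $N^1(u)$.

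Next, I would isolate a single (leader, herald) candidate pair in round $t$: with constant probability, exactly one node $l \in N^1(u)$ broadcasts on $\Achannel_{i^*}$, and at least one of its neighbors $h$ listens on $\Achannel_{i^*}$ without any competing broadcaster in $N(h)$. Here the hypothesis that all nodes neighboring existing leaders/heralds have $\activity \leq \activity_\low = \log^{-12}n$ is crucial—it guarantees that the activity mass right outside $N^1(u)$ contributes only an $o(1)$ expected number of broadcasts on $\Achannel_{i^*}$ reaching $N(h)$, so the single-broadcast event survives with constant probability. For the pair to be \emph{good}, I then invoke Lemma~\ref{lemma:wcpnoherald}(\ref{lemma:wcpnoherald:4}) on the constellation $(2, u, t, i^*, S)$ with $S = N^2(u) \cap \Astate$ partitioned into $S^b = \{l\}$, $S^l = \{h\}$, and $S^n = S \setminus \{l,h\}$: conditioned on these three choices, no other node in $N^2(u)$ receives a message on any $\Achannel_j \neq \Achannel_{i^*}$, except with probability $O(\pi_\ell \alpha^2)$. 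Combining, $l$ has no other herald-candidate neighbors, which is the required isolation.

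The remaining task is to show that the handshake over rounds $t+1,\dots,t+6$ actually completes, so by round $t+6 \leq t+7$ the pair reaches $(\Lstate, \Hstate)$ and is still good. Since $l$ has no neighboring leader/herald or competing candidate at round $t$ (by the previous step and the hypotheses on $u$'s neighborhood), the only way the handshake can fail is if some new leader-candidate broadcast on $\Hchannel$ from $N(l) \cup N(h)$ collides with the exchanged messages during those six rounds. This is prevented by reapplying Lemma~\ref{lemma:wcpnoherald} round by round: in each of the six rounds the probability that a new herald candidate is created within $N^2(u)$ is $O(\pi_\ell \alpha^2)$, so a union bound still leaves a positive constant probability that the handshake completes. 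The hypothesis that nodes adjacent to MIS neighbors are already eliminated ensures no MIS node disrupts the pair from outside. The main obstacle will be bookkeeping: piecing together the conditional events for ``$l$ broadcasts on $\Achannel_{i^*}$'', ``$h$ listens and hears $l$'', ``no other herald candidate in $N^2(u)$'', and ``no later disruption'' while keeping the overall probability $\Omega(\pi_\ell)$ rather than $o(\pi_\ell)$—especially in case (b), where one must quantify how $\eta$-fatness forces the local activity distribution to concentrate enough mass near a single node to carry out the single-broadcast calculation.
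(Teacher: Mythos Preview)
Your outline captures the high-level shape of the argument—pick a channel matched to $\Activity(u)$, isolate a single broadcaster/listener pair inside $N^1(u)$, then use Lemma~\ref{lemma:wcpnoherald} to rule out competing candidates during the handshake—but there are two genuine gaps.

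\textbf{Leader candidates $\Lstate'$ are not excluded.} The hypotheses forbid herald candidates in $N^2(u)$, but nodes in state $\Lstate'$ may be present (any active node that broadcast in round $t-1$ is in $\Lstate'$ at round $t$, regardless of whether anyone heard it). Such nodes contribute to $\Activity(u)$ but are \emph{not} in $\Astate$, so your ``exactly one node $l\in N^1(u)$ broadcasts on $\Achannel_{i^*}$'' calculation can fail: the usable mass $\Activity_\Astate(u,t)$ might be a small fraction of $\Activity(u,t)$. The paper handles this by a two-round argument: if too much mass sits in $\Lstate'$ at round $t$, those nodes return to $\Astate$ at round $t+1$ (their handshake fails since there is no $\Hstate'$ partner), and one shows that with probability $\ge 1/4$ at one of the rounds $t'\in\{t,t+1\}$ both $u\in\Astate$ and $\Activity_\Astate(u,t')\ge\frac{3}{5}\Activity(u,t)$ hold. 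This is why the window goes to $t+7$ rather than $t+6$.

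\textbf{Case (b) does not reduce to a single high-activity neighbor.} Your claim that $\eta$-fatness ``forces some neighbor $w\in N(u)$ to also carry substantial activity'' misreads the definition: $\eta$-fatness only \emph{upper} bounds each $\Activity(v)$ by $\eta^{-1}\Activity(u)$; it gives no lower bound on any individual $\Activity(v)$. The paper instead sums over \emph{all} edges $\{v,w\}$ with both endpoints in $N^1_\Astate(u)$. Using the weighted heavy-neighborhoods lemma (Lemma~\ref{lemma:heavyneighborhoods}) applied to $N^1_\Astate(u)$, a $\Theta(1/\alpha)$ fraction of the activity mass lies on nodes $v$ with $\Activity^u_\Astate(v)\ge \Activity_\Astate(u)/(2\alpha)$; for each such $v$ the probability that $v$ and exactly one neighbor $w\in N^1_\Astate(u)$ both land on channel $\lambda=\lceil\log_2\Activity(u)\rceil$ while no other node in $N(v)\cup N(w)$ does is $\Omega\big(\activity(v)/(\alpha\Activity(u))\big)$—here $\eta$-fatness enters to bound the exponent $e^{-(\Activity(v)+\Activity(w))/\Activity(u)}\ge e^{-2/\eta}$. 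Summing gives $\Pr(H_u)=\Omega_\eta(1)$. Note this also ensures both $l$ and $h$ lie in $N^1(u)$, which your formulation ``$h\in N^1(l)$'' does not.

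A smaller point: your invocation of Lemma~\ref{lemma:wcpnoherald}(\ref{lemma:wcpnoherald:4}) handles channels $\ne i^*$, but you still need part~(\ref{lemma:wcpnoherald:5}) (with $S=N^2_\Astate(u)\cap(N(v)\cup N(w))$ and $\partial S$ forced off $\Achannel_{i^*}$) to rule out a second listener on channel $i^*$ itself; otherwise another herald candidate can pair with $l$ and the handshake collides on $\Hchannel$.
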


Let us start with an intuition of this Lemma. The basic intention is to show that if $u$ is $\eta$-fat, then \wcp in constant many rounds a \emph{good} \lhp with both endpoints in $N^1(u)$ arises---for this $u$ itself does not have to have a high activity value, i.e., $u$ does not need to be a likely part of the \lhp. 
The lemma lists many requirements. We show later that shortly after a node $v$ moves to the \hfilter, within distance $\delta=\bigO(\loglogn)$ most of the time there exists a node $u$ that satisfies these conditions. We also show that if an isolated pair is created in $N^1(u)$, those requirements are again satisfied $O(\polyloglog n)$ rounds later (in expectation) by another node $u'$ in this $\delta$-neighborhood of $v$. 


The proof follows mostly the lines of the one in \cite{tr:podc2013}, but there a constant number of factors of $1/2$ accumulate, while here they have to be exchanged by factors of $\eta$---but this only causes changes in asymptotically irrelevant constants. In addition, nodes neighboring leaders or heralds need special attention. A full detailed proof of this new version is listed in Appendix \ref{lemma:app:fatgivesherald}. There are few things we want to elaborate. Within the neighborhood of a fat node $u$ with activity mass at least one, \wcp ``good things'' happen (i.e., the creation of MIS nodes or good \lhps) within constant many rounds, \emph{even if} there are herald candidates nearby or if some nodes neighboring \emph{bad} leaders or heralds have high activity values. In other words, the first two statements could be relaxed a bit. Instead we use other results to show that from those relaxed conditions one can get to the tighter conditions listed here \wcp in constant many rounds. Also note that we allow $7$ rounds to pass, even though the handshake only needs $6$ rounds. This is due to the fact that we require a certain property that might not be given in round $t'$, but, \wcp, is given one round later. As a last remark we want to say that $5\alpha$ in this lemma could be replaced by any constant greater than $\frac{20}{7}\alpha$ and $\eta$ could be replaced by any fatness constant smaller than $1$. 


\medskip
We argue in the subsections about the handshake protocol and the \rbp, Subsections \ref{subsec:handshake} and \ref{subsec:redblue},
that every time an isolated pair is created, the algorithm achieves progress, as it guarantees the creation of an MIS node nearby---even if this event is delayed by $\bigO(\log n)$ rounds.

Therefore, Lemma \ref{lemma:fatgivesherald} ``promises'' progress in the proximity of a fat node. However, we do not have such a statement for areas without fat nodes. As we describe a bit more in detail later, an excessive creation of bad pairs in such areas can even cause problems for our argumentation. However, the next result shows that \emph{if} a pair is created at all, then \wcp this pair is good. This allows us to proof later in Lemma \ref{lemma:alwaysactive} that nodes in the \hfilter are practically always very active in the candidate election process---unless they already neighbor an MIS node or a good pair.



\ifthenelse{\boolean{swapheraldclaim}}{
\begin{lemma}
  \label{lemma:heraldbounds}
  Let $r$ be a round in which node $u$ is in state \Astate and $N_\Astate(u) \neq \emptyset$.
  Let $B^u$ be the event that at the end of round $r$, $u$ moves to state
  $\Hstate'$ due to receiving a message from some node $v \in
  N_\Astate(u)$ on some channel $\Achannel_{\bar{\lambda}}$. Further, let $D^u\subseteq B^u$ be the event that
  $B^u$ holds and in addition no other node $v' \in N^3(v)\setminus\set{u}$
  receives any message on channel $\Achannel_{\bar{\lambda}}$ in round $r$. 
  It holds that
  \begin{eqnarray}
    \label{eq:claimequationsupper}
    \Pr(B^u) &=&
    \begin{cases}
      \bigO\brackets{\pi_\ell \frac{\activity(u)}{\AAC(u)}} & \AAC(u) > 2 \\
      \bigO\brackets{\pi_\ell \activity(u) \AAC(u)} & \AAC(u) \leq 2      
    \end{cases},
    \\
    \label{eq:claimequationslower}
    \Pr(D^u) &=&
    \begin{cases}
      \Omega\brackets{\pi_\ell \frac{\activity(u)}{\AAC(u)}} & \AAC(u) > 2 \\
      \Omega\brackets{\pi_\ell \activity(u) \AAC(u)} & \AAC(u) \leq 2      
    \end{cases}.
  \end{eqnarray}
\end{lemma}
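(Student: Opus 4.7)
The plan is to express $\Pr(B^u)$ as a weighted sum over the $\nac$ herald channels and to isolate its dominant term in the two regimes of $\AAC(u)$; the additional isolation required by $D^u$ is then extracted from Lemma~\ref{lemma:wcpnoherald}(\ref{lemma:wcpnoherald:5}).

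First I decompose $B^u$. By inspection of the pseudo-code, only nodes in state $\Astate$ ever broadcast on a herald channel $\Achannel_{\bar{\lambda}}$, and any such $v \in N_\Astate(u)$ does so with probability $(1-\pi_\ell)\activity(v)\cdot 2^{-\bar{\lambda}}$. These events are mutually independent across nodes and independent of $u$'s own coin flip. Letting $X_{\bar{\lambda}}$ count the broadcasts on $\Achannel_{\bar{\lambda}}$ in $N_\Astate(u)$ and setting $\mu_{\bar{\lambda}} := \E[X_{\bar{\lambda}}] = (1-\pi_\ell)\,2^{-\bar{\lambda}}\,\AAC(u)$, we obtain
\[
\Pr(B^u) \;=\; \pi_\ell\, \activity(u) \sum_{\bar{\lambda}=1}^{\nac} 2^{-\bar{\lambda}}\, \Pr(X_{\bar{\lambda}} = 1).
\]
Now I analyze the sum by locating the dominant term in each regime. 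If $\AAC(u) \leq 2$ then $\mu_{\bar{\lambda}} \leq 1$ for every $\bar{\lambda}\geq 1$, and by the standard two-sided bound for independent Bernoullis, $\mu_{\bar{\lambda}}(1-\mu_{\bar{\lambda}}) \leq \Pr(X_{\bar{\lambda}}=1) \leq \mu_{\bar{\lambda}}$. Each summand is then $\Theta(4^{-\bar{\lambda}}\AAC(u))$, the geometric series is dominated by $\bar{\lambda}=1$, and the whole inner sum is $\Theta(\AAC(u))$, matching the second line of both claims. If $\AAC(u) > 2$, pick $\bar{\lambda}^* := \lceil \log_2 \AAC(u)\rceil$, so $\mu_{\bar{\lambda}^*} = \Theta(1)$; this single term contributes $\Theta(1/\AAC(u))$ and already furnishes the lower bound. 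For the matching upper bound I use $\Pr(X_{\bar{\lambda}}=1) \leq 2\mu_{\bar{\lambda}} e^{-\mu_{\bar{\lambda}}}$ (immediate from independence), so the left tail $\bar{\lambda} < \bar{\lambda}^*$ decays super-exponentially, while the right tail $\bar{\lambda} > \bar{\lambda}^*$ is geometric in $\bar{\lambda}$; both telescope into $\bigO(1/\AAC(u))$.

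For $\Pr(D^u)$ the upper bound is immediate from $D^u \subseteq B^u$. For the lower bound I expand the event witnessing the $\Pr(B^u)$ lower bound as an explicit intersection---$u$ picks and listens on a fixed $\Achannel_{\bar{\lambda}}$, one specified $v \in N_\Astate(u)$ picks and broadcasts on $\Achannel_{\bar{\lambda}}$, and every other node of $N(u)$ does not broadcast on $\Achannel_{\bar{\lambda}}$---and condition on it. Under this conditioning, Lemma~\ref{lemma:wcpnoherald}(\ref{lemma:wcpnoherald:5}), applied with center $v$, constant radius $k = 3$, and $S$ the set of nodes in $N^3(v)$ whose behavior on $\Achannel_{\bar{\lambda}}$ has been fixed, certifies that no node of $N^3(v)\setminus\{u\}$ receives a message on $\Achannel_{\bar{\lambda}}$ except with probability $\bigO(\pi_\ell \alpha^3)$. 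Since $\pi_\ell$ is a free constant that we may take small enough to make this slack at most $1/2$, the lower bound on $\Pr(D^u)$ matches that of $\Pr(B^u)$ up to a factor $1/2$. The main obstacle---and the one genuinely delicate step---is verifying that the partition $S = S^n \dcup S^b \dcup S^l$ required by Lemma~\ref{lemma:wcpnoherald}(\ref{lemma:wcpnoherald:5}) can be chosen compatibly with our explicit event; I would do this by setting $S^b = \{v\}$, $S^l$ containing $u$, and $S^n$ the forced non-broadcasters in $N(u)$, checking that $\partial S^n$ precisely corresponds to the boundary where the isolation estimate is needed.
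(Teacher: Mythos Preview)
Your upper bound on $\Pr(B^u)$ is correct and essentially the paper's argument: both isolate the dominant channel $\bar\lambda^{*}\approx\log\AAC(u)$ and control the tails geometrically via $\mu e^{-\mu}$.

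The gap is in the lower bound for $\Pr(D^u)$. Lemma~\ref{lemma:wcpnoherald}(\ref{lemma:wcpnoherald:5}) conditions on $\partial\mathcal{S}_{\neg i}$, i.e.\ that \emph{no} node of the boundary $\partial S$ operates on $\Achannel_{\bar\lambda}$ at all. In your proposed $S\subseteq N^1(u)$ the node $v$ generically has neighbours in $N^3(v)\setminus N^1(u)$, hence $v\in\partial S$; but $v$ is broadcasting on $\Achannel_{\bar\lambda}$, so the hypothesis of~(\ref{lemma:wcpnoherald:5}) is violated. The natural repair---enlarge $S$ to contain $N(v)$ so that $v$ becomes interior---forces you to additionally condition on the new boundary (in particular on nodes of $N(v)\setminus N^1(u)$) avoiding $\Achannel_{\bar\lambda}$. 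Nothing in the lemma bounds $\Activity(v)$, so the probability of this extra event can be as small as $e^{-\Theta(\Activity(v)/\AAC(u))}$, which is not $\Omega(1)$. (This obstruction does \emph{not} arise in Lemma~\ref{lemma:fatgivesherald}, where $\eta$-fatness gives $\Activity(v)\le\eta^{-1}\Activity(u)$; that is precisely why~(\ref{lemma:wcpnoherald:5}) suffices there but not here.) A secondary mismatch: your event only fixes that the remaining nodes of $N(u)$ do not \emph{broadcast} on $\Achannel_{\bar\lambda}$, whereas $\partial\mathcal{S}_{\neg i}$ requires them not to \emph{operate} on it.

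The paper therefore does not invoke Lemma~\ref{lemma:wcpnoherald}(\ref{lemma:wcpnoherald:5}) for this step but argues directly. Conditioned on $B^{u,v}_\lambda$, it stratifies $N^3_\Astate(v)\setminus\{u,v\}$ into slabs $X^{u,v}_m=\{x:\AAC(x)\in[10m\,\AAC(u),10(m{+}1)\AAC(u))\}$. For $m=0$ the weighted Tur\'an bound (Lemma~\ref{lemma:weightedturan}) gives $\sum_{x\in X^{u,v}_0}\activity(x)=O(\alpha^3\AAC(u))$, so the chance some such $x$ receives on $\Achannel_\lambda$ is $O(\pi_\ell\alpha^3)$. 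For $m\ge 1$ the point is a \emph{collision} argument: any $x\in X^{u,v}_m$ has $\AAC(x)\ge 10m\,\AAC(u)$, hence on $\Achannel_\lambda$ it faces expected broadcast mass $\ge 5m$ from its own neighbours, and $\Pr(E^x_\lambda\mid B^{u,v}_\lambda)=O\big(m\,\pi_\ell\,\tfrac{\activity(x)}{\AAC(u)}e^{-5m}\big)$; combining with Tur\'an again and summing over $m$ yields $\Pr(F^{u,v}_\lambda\mid B^{u,v}_\lambda)=\Omega(1)$. This collision step for high-mass nodes is the ingredient your sketch is missing.
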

}{
\begin{lemma}
  \label{lemma:heraldwcpgoodherald}
  Let $B(r)^{u,v}$ be the event that in round $r$ node $u \in \Astate_r$ receives a message from one of its neighbors $v\in \Astate_r$, neither of them neighboring any leader, herald or herald candidate in the $5$th or $6$th round of its handshake protocol. Let $\hat H(r')^{u',v'}$ be the event that at the beginning of round $r'$ node $u' \in \Hstate_{r'}$ and $v' \in \Lstate_{r'}$ form a good \lhp and that $\Hstate' \cap N^3(u') = \emptyset$, i.e., there are no herald candidates in the $3$-neighborhood of $u'$. Then
  \begin{equation}
    \Pr(\hat H(r+8)^{u,v}|B(r)^{u,v}) = \Omega(1)
  \end{equation}
\end{lemma}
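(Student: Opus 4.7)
The plan is to decompose the target event $\hat{H}(r+8)^{u,v}$ into three sub-events, lower bound each by a constant conditional on $B(r)^{u,v}$, and conclude by a union bound over their complements. The three sub-events are: (E1) the six-round handshake initiated by $(u,v)$ at the end of round $r$ completes successfully, so that $u \in \Hstate$ and $v \in \Lstate$ by the start of round $r+7$; (E2) no neighbor of $v$ other than $u$ attains state $\Lstate$, $\Hstate$, or late-round $\Hstate'$ by round $r+8$, making $(v,u)$ a good pair; and (E3) no node in $N^3(u)$ is in state $\Hstate'$ at round $r+8$.

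For (E1), a handshake message on \Hchannel\ is blocked only if some neighbor of $u$ or $v$ transmits on \Hchannel\ during one of the listen rounds $r+1,\ldots,r+6$. The conditioning in $B(r)^{u,v}$ already excludes all leaders, heralds, and herald candidates in rounds $5$ or $6$ at time $r$ in the $1$-neighborhood of $u$ and $v$, which are exactly the nodes whose transmissions on \Hchannel\ would dominate the interference at rounds $r+1,\ldots,r+6$. The remaining potential interferers are early-stage $\Hstate'/\Lstate'$ nodes present at time $r$ (with lower-index handshake rounds) and fresh herald candidates born during $[r+1,r+6]$. Lemma~\ref{lemma:wcpnoherald}(\ref{lemma:wcpnoherald:1}) bounds the probability of a new herald candidate in a bounded neighborhood by $O(\pi_\ell\,\alpha^{O(1)})$ per round; a union bound over the $O(1)$ relevant rounds yields total interference probability $O(\pi_\ell)$, a small constant under $\pi_\ell \le 1/10$. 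Events (E2) and (E3) are handled analogously: any node in $\Hstate'$, $\Hstate$, or $\Lstate$ at round $r+8$ inside $N^1(v)$ or $N^3(u)$ must have been born as a herald candidate in the constant-length window $[r-5,r+7]$ (earlier births would either have concluded their handshake and become $\Lstate/\Hstate$---ruled out by $B$---or reverted to $\Astate$). A per-round application of Lemma~\ref{lemma:wcpnoherald}(\ref{lemma:wcpnoherald:1}), union-bounded, again caps the total probability of any such birth in the relevant bounded-radius neighborhood at $O(\pi_\ell)$.

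The principal obstacle, and the reason Lemma~\ref{lemma:wcpnoherald} has multiple parts, is that these probability bounds must hold \emph{conditional on} $B(r)^{u,v}$, which itself constrains the channel assignments and send/listen decisions of $u$ and $v$ at round $r$. Parts~(\ref{lemma:wcpnoherald:2})--(\ref{lemma:wcpnoherald:4}) of Lemma~\ref{lemma:wcpnoherald} were established exactly to handle this subtlety: they guarantee that conditioning on a small number of nodes' channel and action choices perturbs the probability of herald-candidate creation elsewhere by at most a constant factor. Composing the three $\Omega(1)$ lower bounds on (E1), (E2), (E3) via a final union bound over their complements yields
\[
\Pr\bigl(\hat{H}(r+8)^{u,v} \mid B(r)^{u,v}\bigr) \ge 1 - O(\pi_\ell) = \Omega(1),
\]
as claimed.
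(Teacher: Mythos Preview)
Your decomposition into (E1)--(E3) and the per-round use of Lemma~\ref{lemma:wcpnoherald}(\ref{lemma:wcpnoherald:1}) for rounds $r+1,\dots,r+7$ is fine and matches the paper. The gap is in round $r$ itself, and it is not fixable with Lemma~\ref{lemma:wcpnoherald} alone.

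The event $B(r)^{u,v}$ tells you that $v$ broadcast on some channel $\Achannel_i$ and $u$ received it. Nothing prevents another node $u'\in N(v)\setminus\{u\}$ (or more generally some node in $N^3(v)$) from \emph{also} receiving a message on $\Achannel_i$ in that same round; if that happens, $u'$ becomes a herald candidate simultaneously with $u$, sends on $\Hchannel$ in round $r+1$, and collides with $u$ at $v$, killing the handshake. You need $\Pr(\text{no such }u'\mid B(r)^{u,v})=\Omega(1)$. Parts~(\ref{lemma:wcpnoherald:2})--(\ref{lemma:wcpnoherald:4}) of Lemma~\ref{lemma:wcpnoherald} do \emph{not} give this: they all bound $H_{\neg i}$, i.e.\ herald creation on channels \emph{other than} $\Achannel_i$, under various conditioning on channel-$i$ behavior. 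Part~(\ref{lemma:wcpnoherald:5}) does concern $H_i$, but it conditions on boundary nodes \emph{not} being on $\Achannel_i$, which is the opposite of your situation (you know $u,v$ \emph{are} on $\Achannel_i$). So your sentence ``Parts~(\ref{lemma:wcpnoherald:2})--(\ref{lemma:wcpnoherald:4}) \dots\ were established exactly to handle this subtlety'' misreads those statements.

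The paper closes this gap with Lemma~\ref{lemma:heraldbounds}: it proves $\Pr(D^u)=\Omega(\Pr(B^u))$, hence $\Pr(D^u\mid B^u)=\Omega(1)$, where $D^u\subset B^u$ is precisely the event that $u$ receives on $\Achannel_i$ \emph{and} no other node in $N^3(v)$ does. That lemma is substantive (a case split on $\AAC(u)\lessgtr 2$, plus a stratification of $N^3(v)$ by activity mass to show that low-mass nodes are unlikely to listen on $\Achannel_\lambda$ and high-mass nodes collide there) and cannot be replaced by an appeal to Lemma~\ref{lemma:wcpnoherald}. Once you have $D^u$, Lemma~\ref{lemma:wcpnoherald}(\ref{lemma:wcpnoherald:4}) handles the other channels in round $r$, and then your per-round argument with part~(\ref{lemma:wcpnoherald:1}) for rounds $r+1,\dots,r+7$ finishes the proof exactly as you wrote.
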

}




\ifthenelse{\boolean{swapheraldclaim}}{}{
The proof for the lemma is not straightforward. We consider the two cases of $\Activity(u)>2$ and $\Activity(u) \leq 2$ and for those we establish claims about the probabilities for $u$ to become a herald in general and a good herald in particular before we provide the proof for Lemma \ref{lemma:heraldwcpgoodherald}.
Recall that $\Activity_\Astate(u)$ is the activity mass of all active nodes in $N^1(u)$ and that $\AAC(u)$ is the mass of those in $N(u)$, i.e., $\AAC(u)=\Activity_\Astate(u)-\activity(u)$.


\begin{myclaim}
  \label{claim:heraldbounds}
  Let $r$ be a round in which node $u$ is in state \Astate and $N_\Astate(u) \neq \emptyset$.
  Let $B^u$ be the event that at the end of round $r$, $u$ moves to state
  $\Hstate'$ due to receiving a message from some node $v \in
  N_\Astate(u)$ on some channel $\Achannel_{\bar{\lambda}}$. Further, let $D^u\subseteq B^u$ be the event that
  $B^u$ holds and in addition no other node $v' \in N^3(v)\setminus\set{u}$
  receives any message on channel $\Achannel_{\bar{\lambda}}$ in round $r$. 
  It holds that
  \begin{eqnarray}
    \label{eq:claimequationsupper}
    \Pr(B^u) &=&
    \begin{cases}
      \bigO\brackets{\pi_\ell \frac{\activity(u)}{\AAC(u)}} & \AAC(u) > 2 \\
      \bigO\brackets{\pi_\ell \activity(u) \AAC(u)} & \AAC(u) \leq 2      
    \end{cases},
    \\
    \label{eq:claimequationslower}
    \Pr(D^u) &=&
    \begin{cases}
      \Omega\brackets{\pi_\ell \frac{\activity(u)}{\AAC(u)}} & \AAC(u) > 2 \\
      \Omega\brackets{\pi_\ell \activity(u) \AAC(u)} & \AAC(u) \leq 2      
    \end{cases}.
  \end{eqnarray}
\end{myclaim}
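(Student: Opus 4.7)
The plan is to analyse the herald-protocol slot in which $u$ listens and a single active neighbour of $u$ broadcasts. I would begin by fixing notation: write $b_i(v) := 2^{-i}(1-\pi_\ell)\activity(v)$ for the probability that $v$ broadcasts on $\Achannel_i$ in the herald protocol, and $\ell_i(u) := 2^{-i}\pi_\ell\,\activity(u)$ for the probability that $u$ listens on $\Achannel_i$; the random choices of distinct nodes are mutually independent. Let $\lambda_i := \sum_{v \in N_\Astate(u)} b_i(v) = 2^{-i}(1-\pi_\ell)\AAC(u)$ and let $X_i$ count the broadcasters among $N_\Astate(u)$ on $\Achannel_i$, so $\Pr(X_i \geq 1) \leq \min(\lambda_i, 1)$.

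Next, for the upper bound on $\Pr(B^u)$ I would decompose $\Pr(B^u) \leq \sum_i \ell_i(u)\cdot \min(\lambda_i,1)$ and split the sum at $i^* := \lceil\log \AAC(u)\rceil$, where $\lambda_{i^*} = \Theta(1)$. When $\AAC(u) > 2$, the tail $\sum_{i \geq i^*}\ell_i(u)\lambda_i = \pi_\ell\,\activity(u)(1-\pi_\ell)\AAC(u)\sum_{i \geq i^*} 4^{-i}$ collapses to $\bigO(\pi_\ell\,\activity(u)/\AAC(u))$ because $4^{-i^*} = \bigO(1/\AAC(u)^2)$, and the head $\sum_{i < i^*}\ell_i(u) \leq 2\pi_\ell\,\activity(u)\cdot 2^{-i^*} = \bigO(\pi_\ell\,\activity(u)/\AAC(u))$. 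When $\AAC(u) \leq 2$, $\min(\lambda_i,1) = \lambda_i$ for every $i \geq 1$ and the geometric sum yields $\bigO(\pi_\ell\,\activity(u)\AAC(u))$.

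For the lower bound on $\Pr(B^u)$ I would concentrate the mass on a single carefully chosen slot: set $i^* := \lceil \log \AAC(u) \rceil$ when $\AAC(u) > 2$ (so $\lambda_{i^*} = \Theta(1)$) and $i^* := 1$ otherwise (so $\lambda_{i^*} = \Theta(\AAC(u))$). For each $v \in N_\Astate(u)$ let $E_v$ denote the event that $u$ listens on $\Achannel_{i^*}$, $v$ broadcasts on $\Achannel_{i^*}$, and no other node in $N_\Astate(u)$ broadcasts on $\Achannel_{i^*}$. Independence yields
\begin{equation*}
\Pr(E_v) \;=\; \ell_{i^*}(u)\,b_{i^*}(v)\prod_{w \in N_\Astate(u)\setminus\{v\}}\bigl(1-b_{i^*}(w)\bigr) \;\geq\; \ell_{i^*}(u)\,b_{i^*}(v)\,e^{-2\lambda_{i^*}},
\end{equation*}
where I use $1-x \geq e^{-2x}$ on $[0,1/2]$, applicable since $b_{i^*}(w) \leq 1/4$. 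As $\lambda_{i^*} = \bigO(1)$ the exponential factor is $\Omega(1)$, and since the events $E_v$ are pairwise disjoint subsets of $B^u$, summation gives $\Pr(B^u) \geq \sum_v \Pr(E_v) = \Omega(\ell_{i^*}(u)\,\lambda_{i^*})$, matching the claim in both regimes.

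The last step would upgrade the bound from $B^u$ to $D^u$ by invoking Lemma~\ref{lemma:wcpnoherald}(\ref{lemma:wcpnoherald:5}): conditional on $E_v$ I need no node in $N^3(v)\setminus\{u\}$ to receive on $\Achannel_{i^*}$. I would instantiate the constellation $(3,v,r,i^*,S)$ with $S=\{u,v\}$, $S^b=\{v\}$, $S^l=\{u\}$, $S^n=\emptyset$, so the hypothesis $\partial\mathcal{S}^n_{\neg i^*}$ is vacuous, yielding $\Pr(D^u \mid E_v) \geq 1 - \bigO(\pi_\ell\,\alpha^3) \geq 1/2$ once $\pi_\ell$ is chosen as a sufficiently small algorithmic constant. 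Summing $\Pr(D^u \cap E_v)$ over $v$ then reproduces the $B^u$ bound up to a constant factor. The main obstacle I expect is verifying that conditioning on $E_v$ — which imposes the negative information that the other active neighbours of $u$ did not broadcast on $\Achannel_{i^*}$ — leaves the randomness of the remaining nodes in $N^3(v)$ intact; this should follow from the mutual independence of coin flips across nodes, but care is needed to match the partition of $S$ to exactly what Lemma~\ref{lemma:wcpnoherald}(\ref{lemma:wcpnoherald:5}) asserts.
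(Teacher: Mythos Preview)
Your outline has two genuine gaps, one in each half of the argument.

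\textbf{Upper bound.} The claimed inequality $\sum_{i<i^*}\ell_i(u)\le 2\pi_\ell\,\activity(u)\,2^{-i^*}$ is false: $\sum_{i=1}^{i^*-1}2^{-i}=1-2^{1-i^*}$ is close to $1$, so your head term is $\Theta(\pi_\ell\,\activity(u))$, not $O(\pi_\ell\,\activity(u)/\AAC(u))$. The error is upstream: $B^u$ requires $u$ to \emph{receive}, hence exactly one neighbour broadcasts, so $\Pr(B^u_i)=\ell_i(u)\Pr(X_i=1)$, not $\ell_i(u)\Pr(X_i\ge 1)$. For $i<i^*$ the expected number of broadcasters $\lambda_i$ exceeds $1$ and $\Pr(X_i=1)$ picks up a factor $\prod_w(1-b_i(w))=\Theta(e^{-\lambda_i})$, which is what kills the head. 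The paper keeps this factor explicitly and shows the resulting terms $C^u_i$ decay geometrically away from $i^*$ on both sides.

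\textbf{Lower bound on $D^u$.} Invoking Lemma~\ref{lemma:wcpnoherald}(\ref{lemma:wcpnoherald:5}) with $S=\{u,v\}$ does not do the job. Whatever one reads into the hypothesis $\partial\mathcal{S}^{n}_{\neg i^*}$, the proof of that item conditions on the boundary of $S$ being \emph{off} channel $\Achannel_{i^*}$; under your event $E_v$, both $u$ and $v$ are \emph{on} $\Achannel_{i^*}$, with $v$ broadcasting. That broadcast is precisely the danger: any $x\in N(v)\setminus\{u\}$ that happens to listen on $\Achannel_{i^*}$ with no other broadcasting neighbour will receive $v$'s message, which would falsify $D^u$. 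Lemma~\ref{lemma:wcpnoherald}(\ref{lemma:wcpnoherald:5}) gives no control over this, and your own caveat about ``the negative information that the other active neighbours of $u$ did not broadcast'' is a second, independent obstacle. The paper does not take this shortcut. Instead it fixes $\lambda=\max\{1,\lceil\log\AAC(u)\rceil\}$, computes $\Pr(E^x_\lambda\mid B^{u,v}_\lambda)$ explicitly for every $x\in N^3_\Astate(v)\setminus\{u,v\}$, partitions those $x$ into buckets $X^{u,v}_m$ according to $\AAC(x)/\AAC(u)$, uses the weighted Tur\'an bound (Lemma~\ref{lemma:weightedturan}) to cap $\sum_{x\in X^{u,v}_m}\activity(x)$, and shows that for $m=0$ the total listening mass on $\Achannel_\lambda$ is $O(\pi_\ell)$ while for $m\ge 1$ collisions make reception exponentially unlikely. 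Only after establishing $\Pr(D^{u,v}_\lambda)=\Omega(\Pr(B^{u,v}_\lambda))$ does the paper sum over $v$.
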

}

\begin{proof}
  In the calculations below we make use of the following inequalities.
  \begin{eqnarray}
    \label{eq:basic1}
    1-\pi_\ell &\geq& 0.9 \\
    \label{eq:basic2}
    1-x &\leq& e^{-x}, \quad \forall x \\
    \label{eq:basic3}
    1-x &\geq& e^{-2x}, \quad \forall x \in [0,1/2] \\
    \label{eq:basic4}
    \prod_{w \in A\setminus B} f(w) &\leq& F^{|B|} \prod_{w \in A} f(w), \quad \mbox{if } f(w) \geq F^{-1}
  \end{eqnarray}

  If $\AAC(u) > 2$, for simplicity, we assume that $\log \AAC(u)$ is a positive integer. It becomes clear from the proof that for non-integer values 
  an adaption is straightforward, but hard to read.

  We let $B^{u,v}_i$ be the event that an \emph{active} node $u$ receives a message from an \emph{active} node $v$ on channel $\Achannel_i$, i.e., $u$ listens and $v$ broadcasts on $\Achannel_i$, while no other node $w \in (N_\Astate(u))\setminus\set{v}$ broadcasts on $\Achannel_i$. For different $v$ these events are disjoint, and we can define $B^u_i := \bigcup_{v \in N_\Astate(u)} B^{u,v}_i$ and we see that $B^u = \bigcup_{i=1}^{n_\Achannel} B^u_i$.


  More restrictive are the analogously defined events $D^{u,v}_i$, $D^u_i$, in which we also require that no node $x \in N^3(v)\setminus\set{u}$ receives a message on channel $i$. In that case, $D^u=\bigcup_{i=1}^{n_\Achannel} D^u_i$ is the event that nodes $u$ and $v$ engage in the handshake protocol in the round after they met on some channel $\Achannel_{\bar{\lambda}}$, without other nodes nearby having received something on that same channel. In particular, no other herald candidates try to engage with $v$ in the handshake protocol.



  \paraclose{Upper Bounds}

  Let $q_i^w$ be the probability that node $w$ does broadcast on channel $\Achannel_i$, i.e., $q_i^w = (1-\pi_\ell)\activity(w)2^{-i} \leq 1/4$, and accordingly $\bar{q}^w_i := 1-q^w_\lambda \geq 3/4$ is the probability that $w$ does not broadcast on channel $\Achannel_i$. We denote with $p_i^w$ the probability that $w$ listens on channel $\Achannel_i$. 

\ifthenelse{\boolean{hasLNCSformat}}
{
  \begin{equation*}
    \begin{split}
      \label{eq:BuviUB}
      \Pr(B^{u,v}_i)
      & = p_i^u q_i^v \prod_{w \in N_\Astate(u)\setminus\set{v}} \bar{q}_i^w 
      \\
      &= 
      \pi_\ell \activity(u) 2^{-i}(1-\pi_\ell) \activity(v) 2^{-i} \prod_{w \in N_\Astate(u)\setminus\set{v}} \bigbrackets{ 1-(1-\pi_\ell)\activity(w)2^{-i} } 
      \\
      & \stackrel{(\ref{eq:basic2}),(\ref{eq:basic4})}{\leq} 
      \pi_\ell \activity(u)\activity(v)2^{-2i}
      \frac{4}{3} e^{-\frac{1}{2}\AAC(u)2^{-i}}
    \end{split}
  \end{equation*}

  \begin{equation}
    \label{eq:BuiUB}
    \begin{split}
      \Pr(B^u_i) 
      &\leq 
      \pi_\ell \activity(u) \AAC(u) 2^{-2i+1}
      e^{-\AAC(u)2^{-i-1}}
      \\
      &= 8\pi_\ell \frac{\activity(u)}{\AAC(u)} (\AAC(u) 2^{-i-1})^2
      e^{-\AAC(u)2^{-i-1}} =: C^u_i
    \end{split}
  \end{equation}
}{
  \begin{equation*}
    \begin{split}
      \label{eq:BuviUB}
      \Pr(B^{u,v}_i)
      &= p_i^u q_i^v \prod_{w \in N_\Astate(u)\setminus\set{v}} \bar{q}_i^w 
      = 
      \pi_\ell \activity(u) 2^{-i}(1-\pi_\ell) \activity(v) 2^{-i} \prod_{w \in N_\Astate(u)\setminus\set{v}} \bigbrackets{ 1-(1-\pi_\ell)\activity(w)2^{-i} } 
      \\
      & \stackrel{(\ref{eq:basic2}),(\ref{eq:basic4})}{\leq} 
      \pi_\ell \activity(u)\activity(v)2^{-2i}
      \frac{4}{3} e^{-\frac{1}{2}\AAC(u)2^{-i}}
    \end{split}
  \end{equation*}

  \begin{equation}
    \label{eq:BuiUB}
      \Pr(B^u_i) 
      \leq 
      \pi_\ell \activity(u) \AAC(u) 2^{-2i+1}
      e^{-\AAC(u)2^{-i-1}}
      = 8\pi_\ell \frac{\activity(u)}{\AAC(u)} (\AAC(u) 2^{-i-1})^2
      e^{-\AAC(u)2^{-i-1}} =: C^u_i
  \end{equation}
}

  Consider the case $\AAC(u) >2$. 
  For any $\zeta>0$ it holds that $\zeta^2e^{-\zeta}=\bigO(1)$, and by using $\zeta=\AAC(u)2^{-i}$ and we get that for any fixed $i$
  \begin{equation}
    \label{eq:CuiUB}
    C^u_i = \bigO\Big(\pi_\ell \frac{\activity(u)}{\AAC(u)}\Big) \quad \mbox{and} \quad C^u_i=\Theta(C^u_{i+1})
  \end{equation}
  Furthermore, with $\lambda := \log \AAC(u)$:
  \begin{align*}
    \frac{C_{i+1}^u}{C_i^u} &= \frac{1}{4}e^{\frac{1}{4}\AAC(u)2^{-i}} < \frac{1}{2} & \forall i &\geq \lambda\\
    \frac{C_{i}^u}{C_{i+1}^u} &=  4e^{-\frac{1}{4}\AAC(u)2^{-i}} < \frac{1}{2} & \forall i &\leq \lambda -4
  \end{align*}
  Thus, the sum of all $C^u_i$ with $i$ larger than $\lambda$ can be upper bounded by $C^u_\lambda$ using the geometric series, and, if $\lambda > 4$, all $i \leq \lambda-4$ can be upper bounded by $C^u_{\lambda-4}$. However, due to (\ref{eq:CuiUB}), $C^u_{\lambda-4}+ \dots + C^u_\lambda$ is in $\bigO(C^u_\lambda)=\bigO\bb{\pi_\ell \frac{\activity(u)}{\AAC(u)}}$.
  In total we get that
  $
  \Pr(B^u) = \bigO\bb{\pi_\ell \frac{\activity(u)}{\AAC(u)}}.
  $

  Now consider the case $\AAC(u) \leq 2$. 
  In (\ref{eq:BuiUB}) we upper bound $e^{-\AAC(u)2^{-i-1}}$ by $1$ and get
  \begin{equation*}
    \Pr(B^u_i) \leq \pi_\ell \activity(u) \AAC(u) 2^{-2i+1} =: \hat{C}^u_i \quad \mbox{and} \quad \hat{C}^u_{i+1} \leq \frac{1}{4} \hat{C}^u_i.
  \end{equation*}
  Clearly, $\hat{C}^u_1 \leq \sum_{i=1}^{n_\Achannel} \hat{C}^u_i \leq 2 \hat{C}^u_1$ due to the convergence of the geometric series. In other words:
  \begin{equation*}
    \Pr(B^u) \leq 2 C^u_1 = \bigO(\pi_\ell \activity(u)\AAC(u)).
  \end{equation*}
  
  This
  concludes the proof of the first part of the claim, equations (\ref{eq:claimequationsupper}). Note that in both cases $\AAC(u) \leq 2$ and $\AAC(u) > 2$ we had that
  \begin{equation}
    \label{eq:BulambdaUB}
    \Pr(B^u_\lambda)=\Omega(B^u).
  \end{equation}


  \vspace{3mm}
  \paraclose{Lower Bounds}
  \\
  For lower bounds we study 2 specific channels, depending whether $\AAC(u)$ is greater than $2$ or not. More precisely, we study $\Achannel_\lambda$, where $\lambda := \max\set{1, \log \AAC(u)}$ and show that event $D^u$ happens on this channel with the desired probability. 
  The base idea is to show that \wcp no node $x \in N^3_\Astate(u)$ with $\AAC(x) \hide{\lessapprox} \lesssim \AAC(u)$ is listening on channel $\Achannel_\lambda$ at all, while all nodes $x$ with $\AAC(x) \hide{\gtrapprox}\gtrsim \AAC(u)$ experience a collision on $\Achannel_\lambda$.

  \paraclose{Case 1: \boldmath{$\AAC(u) > 2$, $\lambda = \log \AAC(u)$ and $2^{-\lambda} = 1/\AAC(u)$}}.


  We start with some definitions. We define 
  $$X^{u,v}_m := \set{x \in N^3_\Astate(v)\setminus\set{u,v}: 10m \AAC(u) \leq \AAC(x) < 10(m+1) \AAC(u)}.$$ 
  Let $F^{u,v}_{\lambda,m}$ be the event that \emph{no} $x \in X^{u,v}_m$ receives \emph{any} message broadcasted  on $\Achannel_\lambda$ and let $E^x_\lambda$ be the event that $x \in \Astate$ \emph{successfully} receives a message on channel $\Achannel_\lambda$. If $x \in N_\Astate(v)\setminus\set{u}$, then $E^x_\lambda|B^{u,v}_\lambda$ implies that $x$ receives $v$'s message.
  Further, $p_\lambda^{u,v} = \pi_\ell(1-\pi_\ell)\activity(u)\activity(v)2^{-2\lambda}$ is the probability that nodes $u$ and $v$ meet on channel $\Achannel_\lambda$ with $u$ listening and $v$ broadcasting. 

  Apparently,
  \begin{equation}
    \label{eq:3}
    \Pr(B_\lambda^{u,v})
    =
    p_\lambda^{u,v} \prod_{w \in N_\Astate(u) \setminus\set{v}} \bar{q}_\lambda^w.
  \end{equation}
  We want to calculate $\Pr(E^x_\lambda|B^{u,v}_\lambda)$. For that we need to distinguish between the two cases $x \in N_\Astate(v)\setminus\set{u}$ and $x \notin N^1_\Astate(v)$. For the first case, no other neighbor of $x$ than $v$ is allowed to broadcast, while in the second case exactly one neighbor of $x$ needs to broadcast on $\Achannel_\lambda$.

  \begin{multline}
    x \in N_\Astate(v)\setminus\set{u}: 
    \quad
    \Pr(E^x_\lambda|B^{u,v}_\lambda) 
    = 
    \frac{\Pr(B^{u,v}_\lambda \cap E^x_\lambda)}{\Pr(B^{u,v}_\lambda)} 
    \\
    =
    \frac{1}{\Pr(B^{u,v}_\lambda)} \frac{\Pr(B^{u,v}_\lambda)}{\bar{q}^x_\lambda}  \pi_\ell\activity(x) 2^{-\lambda}
    \hspace{-5mm}
    \prod_{y \in  N_\Astate(x)\setminus\set{u,v}} 
    \hspace{-5mm}
    \bar{q}_\lambda^y
    \leq
    \frac{4}{3} \pi_\ell \frac{\activity(x)}{\AAC(u)} \prod_{y \in N_\Astate(x)\setminus\set{u,v}} \bar{q}_\lambda^y
    \label{eq:lb1}
  \end{multline}

\ifthenelse{\boolean{hasLNCSformat}}
{
  \begin{multline}
    x \notin N^1_\Astate(v):
    \quad
    \Pr(E^x_\lambda|B^{u,v}_\lambda) 
    =
    \Pr(E^x_\lambda|u \text{ listens}) 
    \\ 
    =
    \pi_\ell \activity(x) 2^{-\lambda} \cdot \sumleft{y \in N_\Astate(x)\setminus\set{u}}{
      q^y_\lambda
    } \cdot \prodleft{z \in N_\Astate(x)\setminus\set{u,y}}{
      \bar{q}^z_\lambda
    }
    \\
    =
    \pi_\ell \activity(x) 2^{-\lambda} \cdot \sumleft{y \in N_\Astate(x)\setminus\set{u}}{
      (1-\pi_\ell)
      \activity(y)
      2^{-\lambda}
    } \cdot \prodleft{z \in N_\Astate(x)\setminus\set{u,y}}{
      \bar{q}^z_\lambda
    }
    %
    \\
    \leq
    \pi_\ell\frac{\activity(x)}{\AAC(u)} \cdot \frac{\AAC(x)}{\AAC(u)} \prodleft{z \in N_\Astate(x)\setminus\set{u,y}}{\bar{q}^z_\lambda}
    \label{eq:lb2}
  \end{multline}
}{
  \begin{multline}
    x \notin N^1_\Astate(v):
    \quad
    \Pr(E^x_\lambda|B^{u,v}_\lambda) 
    =
    \Pr(E^x_\lambda|u \text{ listens}) 
    =
    \pi_\ell \activity(x) 2^{-\lambda} \cdot \sumleft{y \in N_\Astate(x)\setminus\set{u}}{
      q^y_\lambda
    } \cdot \prodleft{z \in N_\Astate(x)\setminus\set{u,y}}{
      \bar{q}^z_\lambda
    }
    \\
    =
    \pi_\ell \activity(x) 2^{-\lambda} \cdot \sumleft{y \in N_\Astate(x)\setminus\set{u}}{
      (1-\pi_\ell)
      \activity(y)
      2^{-\lambda}
    } \cdot \prodleft{z \in N_\Astate(x)\setminus\set{u,y}}{
      \bar{q}^z_\lambda
    }
    %
    \leq
    \pi_\ell\frac{\activity(x)}{\AAC(u)} \cdot \frac{\AAC(x)}{\AAC(u)} \prodleft{z \in N_\Astate(x)\setminus\set{u,y}}{\bar{q}^z_\lambda}
    \label{eq:lb2}
  \end{multline}
}

  \hide{
    $(*)$ follows from the fact that 
    $$\sum_{y \in N_\Astate(x)\setminus\set{u,v}}q^y_\lambda 
    = 
    \sumleft{y \in N_\Astate(x)\setminus\set{u,v}}{(1-\pi_\ell)\activity(y)2^{-\lambda}}
    \leq 
    \frac{\AAC(x)}{\AAC(u)}.
    $$
  }
  
  \hide{
    For $x \in N_\Astate^3(v)\setminus N_\Astate^1(v)$ we get a similar result:
    \begin{eqnarray*}
      \Pr(E_\lambda^x \cap B_\lambda^{u,v}) 
      &=& 
      \Pr(B_\lambda^{u,v}) \pi_\ell \activity(x) 2^{-\lambda} \cdot \sumleft{y \in N_\Astate(x)\setminus\set{u,v}}{(1-\pi_\ell)\activity(y)2^{-\lambda}} \cdot \underbrace{\prodleft{z \in N_\Astate(x)\setminus\set{u,y}}{(1-(1-\pi_\ell)\activity(z)2^{-\lambda})}}_{\leq 1}
      \\
      &\leq&
      \Pr(B_\lambda^{u,v}) \pi_\ell \activity(x) 2^{-\lambda} (1-\pi_\ell)2^{-\lambda}\AAC(x) \leq \Pr(B_\lambda^{u,v}) \pi_\ell \activity(x) 2^{-2\lambda} \AAC(x)
      \\
      \Pr(E_\lambda^x|B_\lambda^{u,v}) 
      &\leq& 
      \pi_\ell \activity(x) 2^{-2\lambda} \AAC(x)
      \stackrel{\AAC(x) \leq 10 \AAC(u)}{\leq}
      10 \pi_\ell \activity(x) 2^{-\lambda}
    \end{eqnarray*}
  }

  Now, using Lemma \ref{lemma:weightedturan}, observe that $\sum_{x \in N^3(u)} \frac{\activity(x)}{\Activity(x)} \leq \alpha^3$ and therefore:
\ifthenelse{\boolean{hasLNCSformat}}
{
    \begin{multline}
      \label{eq:weightedturanAAC}
      \sum_{x \in X^{u,v}_m} \activity(x) 
      = \sum_{x \in X^{u,v}_m}
      \frac{\activity(x)}{\Activity(x)} \Activity(x) 
      \\
      \leq 
      \sum_{x \in
        X^{u,v}_m} \frac{\activity(x)}{\Activity(x)} \BB{\AAC(x) +
        \frac{1}{2}} \leq 10(m+1)\alpha^3\AAC(u) + \alpha^3
      %
    \end{multline}
}{
  \begin{equation}
      \label{eq:weightedturanAAC}
      \sum_{x \in X^{u,v}_m} \activity(x) 
      = \sum_{x \in X^{u,v}_m}
      \frac{\activity(x)}{\Activity(x)} \Activity(x) 
      \leq 
      \sum_{x \in
        X^{u,v}_m} \frac{\activity(x)}{\Activity(x)} \BB{\AAC(x) +
        \frac{1}{2}} \leq 10(m+1)\alpha^3\AAC(u) + \alpha^3
  \end{equation}
}
  
  We first look at the case $m=0$. 

  \paraclose{Case 1a: \boldmath{$\AAC(u) > 2$, $m=0$.}}
  Then, by (\ref{eq:weightedturanAAC}), $\sum_{x \in X^{u,v}_m} \activity(x) \leq 11\alpha^3\AAC(u)$. 
  Also, (\ref{eq:lb1}) can be upper bounded by $\frac{4}{3}\pi_\ell \frac{\activity(x)}{\AAC(u)}$ (since $\bar{q}^y_\lambda \leq 1$). 
  For (\ref{eq:lb2}) we note that $\AAC(x)\leq 10\AAC(u)$ and get combined that $\Pr(E^x_\lambda|B^{u,v}_\lambda) \leq 10\pi_\ell \frac{\activity(x)}{\AAC(u)}$, regardless whether $x \in N^1_\Astate(v)$ or not.

\ifthenelse{\boolean{hasLNCSformat}}
{
    \begin{multline*}
      \Pr(F^{u,v}_{\lambda,0}|B^{u,v}_\lambda) \geq \prod_{x \in
        X^{u,v}_0} \brackets{1- 10\pi_\ell
        \frac{\activity(x)}{\AAC(u)}}
      \\
      \geq e^{-2\frac{10\pi_\ell}{\AAC(x)}\sum_{x\in
          X^{u,v}_0}\activity(x)}
      \stackrel{(\ref{eq:weightedturanAAC})}{\geq}
      e^{-220\alpha^3\pi_\ell} \stackrel{\pi_\ell \leq \frac{1}{2}}{=}
      \Omega(1).
    \end{multline*}
}{
    \begin{equation*}
      \Pr(F^{u,v}_{\lambda,0}|B^{u,v}_\lambda) \geq \prod_{x \in
        X^{u,v}_0} \brackets{1- 10\pi_\ell
        \frac{\activity(x)}{\AAC(u)}}
      \geq e^{-2\frac{10\pi_\ell}{\AAC(x)}\sum_{x\in
          X^{u,v}_0}\activity(x)}
      \stackrel{(\ref{eq:weightedturanAAC})}{\geq}
      e^{-220\alpha^3\pi_\ell} \stackrel{\pi_\ell \leq \frac{1}{2}}{=}
      \Omega(1).
    \end{equation*}
}

  \paraclose{Case 1b: \boldmath{$\AAC(u) > 2$, $m\geq 1$.}}
  We show that, \wcp, all $x \in \bigcup_{m\geq 1} X^{u,v}_m$, that listen on channel $\Achannel_\lambda$, have a collision.
  For $E^x_\lambda$ to happen, exactly one of $x$'s neighbors has to broadcast. We look again first at $x \in N_\Astate(v)\setminus\set{u}$. We plug in the values for $\bar{q}^y_\lambda$ into (\ref{eq:lb1}) and we use that $\AAC(x)\geq 10m\AAC(u)$ to get
  
\ifthenelse{\boolean{hasLNCSformat}}
{
  \begin{multline*}
      \Pr(E^x_\lambda|B^{u,v}_\lambda) 
      \leq 
      \frac{4}{3}\pi_\ell \frac{\activity(x)}{\AAC(u)} 
      \prod_{y \in N_\Astate(x)\setminus\set{u,v}} \Big( 1-(1-\pi_\ell)\frac{\activity(y)}{\AAC(u)}\Big)
      \\
      \leq 
      \frac{64}{27}\pi_\ell \frac{\activity(x)}{\AAC(u)} e^{-\frac{1}{2}\frac{\AAC(x)}{\AAC(u)}}
      \leq
      3\pi_\ell \frac{\activity(x)}{\AAC(u)} e^{-5m}
  \end{multline*}
}{
    \begin{equation*}
      \Pr(E^x_\lambda|B^{u,v}_\lambda) 
      \leq 
      \frac{4}{3}\pi_\ell \frac{\activity(x)}{\AAC(u)} 
      \prodleft{y \in N_\Astate(x)\setminus\set{u,v}}{\Big( 1-(1-\pi_\ell)\frac{\activity(y)}{\AAC(u)}\Big)}
      \\
      \leq 
      \frac{64}{27}\pi_\ell \frac{\activity(x)}{\AAC(u)} e^{-\frac{1}{2}\frac{\AAC(x)}{\AAC(u)}}
      \leq
      3\pi_\ell \frac{\activity(x)}{\AAC(u)} e^{-5m}
    \end{equation*}
}
  For $x \in N_\Astate^3(v)\setminus N_\Astate^1(v)$ we do the same with (\ref{eq:lb2}). 
  We also use $\AAC(x) \leq 10(m+1)\AAC(u)$ for our calculations:

  \begin{eqnarray*}
    \Pr(E^x_\lambda|B^{u,v}_\lambda) 
    &\leq&
    \pi_\ell\frac{\activity(x)}{\AAC(u)} \cdot \frac{\AAC(x)}{\AAC(u)} \prod_{z \in N_\Astate(x)\setminus\set{u,y}} \Big(1-(1-\pi_\ell)\frac{\activity(z)}{\AAC(u)}\Big)
    \\ &\leq& 
    \frac{16}{9}10(m+1)\pi_\ell\frac{\activity(x)}{\AAC(u)}e^{-\frac{1}{2}\frac{\AAC(x)}{\AAC(u)}}
    \leq
    40m\pi_\ell \frac{\activity(x)}{\AAC(u)} e^{-5m}
  \end{eqnarray*}



  For all $x \in X^{u,v}_m$ we therefore have $\Pr(E_\lambda^x|B_\lambda^{u,v})  \leq 40m\pi_\ell \frac{\activity(x)}{\AAC(u)} e^{-5m}$. Recall that by (\ref{eq:weightedturanAAC}) we can upper bound $\sum_{x \in X^{u,v}_m} \activity(x)$ by $22m\alpha^3\AAC(u)$.

  For $F^{u,v}_{\lambda,m\geq 1} := \bigcap_{m\geq 1}F^{u,v}_{\lambda,m}$ we get
\ifthenelse{\boolean{hasLNCSformat}}
{
  \begin{eqnarray*}
    \Pr(F^{u,v}_{\lambda, m\geq 1}|B^{u,v}_\lambda) 
    &\geq& 
    \prod_{m\geq 1} \prod_{x \in X^{u,v}_m} (1-\Pr(E^x_\lambda|B^{u,v}_\lambda))
    \\
    &\geq&
    \prod_{m\geq 1} \prod_{x \in X^{u,v}_m} \Big(1-40m\pi_\ell \frac{\activity(x)}{\AAC(u)} e^{-5m}\Big)
    \\
    &\geq& 
    \prod_{m\geq 1} e^{-80m\pi_\ell e^{-5m}\sum_{x \in X^{u,v}_m} \frac{\activity(x)}{\AAC(u)}}
    \\
    &\geq& 
    \prod_{m\geq 1} e^{-1760\pi_\ell \alpha^3  m^2 e^{-5m}}
    \geq 
    e^{ -1760\pi_\ell\alpha^3 \sum_{m\geq 1}m^2e^{-5m} } 
    \\
    &\geq& 
    e^{-12\pi_\ell\alpha^3}
    =
    \Omega(1).
  \end{eqnarray*}
}{  
  \begin{eqnarray*}
    \Pr(F^{u,v}_{\lambda, m\geq 1}|B^{u,v}_\lambda) 
    &\geq& 
    \prod_{m\geq 1} \prod_{x \in X^{u,v}_m} (1-\Pr(E^x_\lambda|B^{u,v}_\lambda))
    \geq
    \prod_{m\geq 1} \prod_{x \in X^{u,v}_m} \Big(1-40m\pi_\ell \frac{\activity(x)}{\AAC(u)} e^{-5m}\Big)
    \\
    &\geq& 
    \prod_{m\geq 1} e^{-80m\pi_\ell e^{-5m}\sum_{x \in X^{u,v}_m} \frac{\activity(x)}{\AAC(u)}}
    \geq
    \prod_{m\geq 1} e^{-1760\pi_\ell \alpha^3  m^2 e^{-5m}}
     \\
     &\geq&
    e^{ -1760\pi_\ell\alpha^3 \sum_{m\geq 1}m^2e^{-5m} } 
    \geq
    e^{-12\pi_\ell\alpha^3}
    =
    \Omega(1).
  \end{eqnarray*}
}
  Let us now define $F^{u,v}_\lambda$ as $F^{u,v}_{\lambda,m=0} \cap F^{u,v}_{\lambda,m\geq 1}$, then we get:
\ifthenelse{\boolean{hasLNCSformat}}
{
  \begin{multline*}
      \Pr(D^{u,v}_\lambda) = \Pr(B^{u,v}_\lambda \cap F^{u,v}_\lambda)
      \\
      = \Pr(B^{u,v}_\lambda) \Pr(F^{u,v}_{\lambda,m=0}|B^{u,v}_\lambda) \Pr(F^{u,v}_{\lambda,m\geq 1}|B^{u,v}_\lambda) = \Omega(\Pr(B^{u,v}_\lambda))
  \end{multline*}
}{
    \begin{equation*}
      \Pr(D^{u,v}_\lambda) = \Pr(B^{u,v}_\lambda \cap F^{u,v}_\lambda)
      = \Pr(B^{u,v}_\lambda) \Pr(F^{u,v}_{\lambda,m=0}|B^{u,v}_\lambda) \Pr(F^{u,v}_{\lambda,m\geq 1}|B^{u,v}_\lambda) = \Omega(\Pr(B^{u,v}_\lambda))
    \end{equation*}
}
  With the results from above, this concludes the analysis for the case of $\AAC(u)>2$.

  \paraclose{Case 2: \boldmath{$\AAC(u) \leq 2$, $\lambda = 1$ and $2^{-\lambda} = 1/2$}}.





  For this case we redefine $X^{u,v}_m$.
  \[
  X^{u,v}_m := \set{x \in N^3_\Astate(v)\setminus\set{u,v}: 10m \leq \AAC(x) < 10(m+1)}
  \]
  Analogously to (\ref{eq:lb1}) and (\ref{eq:lb2}) we get (with $2^{-\lambda}=1/2$):

  \begin{eqnarray}
    x \in N_\Astate(v)\setminus\set{u}: 
    &\quad&
    \Pr(E^x_\lambda|B^{u,v}_\lambda) 
    =
    \Pr(E^x_1|B^{u,v}_1) 
    \leq 
    \frac{2}{3}\pi_\ell \activity(x) \prodleft{y \in N_\Astate(x)\setminus\set{u,v}}{\bar{q}_\lambda^y}
    \label{eq:lb3}
    \\
    x \notin N^1_\Astate(v):
    &\quad&
    \Pr(E^x_\lambda|B^{u,v}_\lambda) 
    =
    \Pr(E^x_1|B^{u,v}_1) 
    \leq
    \frac{\pi_\ell}{4}\activity(x) \AAC(x) \prodleft{z \in N_\Astate(x)\setminus\set{u,y}}{\bar{q}^z_\lambda}
    \label{eq:lb4}
  \end{eqnarray}

  Like with (\ref{eq:weightedturanAAC}), Lemma \ref{lemma:weightedturan} gives us that 
  \begin{equation}
    \label{eq:weightedturanAAC2}
    \sum_{x \in X^{u,v}_m} \activity(x) \leq 10(m+1)\alpha^3+\frac{1}{2}\alpha^3 \leq 11(m+1)\alpha^3.
  \end{equation}
  \paraclose{Case 2a: \boldmath{$\AAC(u) \leq 2$, $m=0$.}}
  
  From \ref{eq:weightedturanAAC2} we get $\sum_{x \in X^{u,v}_0} \activity(x) \leq 11\alpha^3$. With $\bar{q}^y_\lambda = \bar{q}^y_1 \leq 1$ and $\AAC(x) \leq 10$ we get for all $x \in X^{u,v}_0$ that $\Pr(E^x_\lambda|B^{u,v}_\lambda) = \Pr(E^x_1|B^{u,v}_1) \leq 3\pi_\ell \activity(x)$.

  Thus,
  \begin{equation*}
    \Pr(F^{u,v}_{\lambda,0}|B^{u,v}_\lambda) 
    \geq \prod_{x \in X^{u,v}_0} (1- 3\pi_\ell \activity(x)) 
    \geq e^{-66\pi_\ell\alpha^3 } = \Omega(1)
  \end{equation*}

  \paraclose{Case 2b: \boldmath{$\AAC(u) \leq 2$, $m\geq 1$.}}

  Note that since $\lambda=1$ we have $q^w_1 = (1-\pi_\ell)\activity(w)2^{-\lambda} \geq \activity(w)/4$ and therefore $\bar{q}^w_1 \leq e^{-\activity(w)/4}$. We use (\ref{eq:lb3}) and (\ref{eq:lb4}) to show that for \emph{any} $x \in X^{u,v}_m$

  \begin{equation*}
    \Pr(E^x_\lambda|B^{u,v}_\lambda) \leq \frac{1}{4} \pi_\ell \activity(x) 20m \Bigbrackets{\frac{4}{3}}^2\prodleft{z \in N_\Astate(x)}{\bar{q}^z_\lambda}
    \leq
    10m \pi_\ell \activity(x) e^{-2.5m}
  \end{equation*}

  As indicated by (\ref{eq:weightedturanAAC2}), $\sum_{x \in X^{u,v}_m} \activity(x) \leq 22m\alpha^3$ for $m\geq 1$.\\
  For $F^{u,v}_{\lambda,m\geq 1} = \bigcap_{m\geq 1}F^{u,v}_{1,m}$ we thus get again
  \begin{eqnarray*}
    \Pr(F^{u,v}_{1, m\geq 1}|B^{u,v}_1) 
    &\geq& 
    \prod_{m\geq 1} \prod_{x \in X^{u,v}_m} (1-\Pr(E^x_1|B^{u,v}_1))
    \geq
    \prod_{m\geq 1} \prod_{x \in X^{u,v}_m} \bigbrackets{ 1-10m\pi_\ell \activity(x) e^{-2.5m} }
    \\
    &\geq& 
    \prod_{m\geq 1} e^{-20m\pi_\ell e^{-2.5m}\sum_{x \in X^{u,v}_m} \activity(x)}
    \geq
    \prod_{m\geq 1} e^{-440m\pi_\ell \alpha^3 e^{-5m}}
    \geq 
    e^{ -440\pi_\ell\alpha^3 \sum_{m\geq 1}me^{-2.5m} } 
    \\
    &\geq& 
    e^{-50\pi_\ell\alpha^3}
    =
    \Omega(1).
  \end{eqnarray*}

  Analogously to the case $\AAC(x)>2$ it holds that $\Pr(D^{u,v}_1) = \Omega(\Pr(B^{u,v}_1))$.




  Let $c$ be the constant such that $\Pr(D^{u,v}_\lambda) \geq c \Pr(B^{u,v}_\lambda)$ for any $v \in N_\Astate(u)$. Then, since $D^u_\lambda = \bigdotcup_{v \in N_\Astate(v)} D^{u,v}_\lambda$ and $B^u_\lambda = \bigdotcup_{v \in N_\Astate(v)} B^{u,v}_\lambda$, it most hold that $\Pr(D^u_\lambda) \geq c \Pr(B^u_\lambda)$.
  Also, $\Pr(D^u) \geq \Pr(D^u_\lambda)$ and since by (\ref{eq:BulambdaUB}) $\Pr(B^u_\lambda)=\Omega(\Pr(B^u))$, we get that $\Pr(D^u)=\Omega(\Pr(B^u))$.

  \hide{
    We now focus on events $B^{u,v}_i$. Similar to previous calculations we get.
    \begin{equation}
      \label{eq:BuviLB}
      \Pr(B^{u,v}_i) 
      \geq \frac{1}{2}\pi_\ell \activity(u)\activity(v)2^{-2i} e^{-2\AAC(u)2^{-i}}
    \end{equation}
    Event $D^{u,v}_\lambda$ is the happening of both events $B^{u,v}_\lambda$ and $F^{u,v}_\lambda$. Let $\AAC(u) > 2$, then
    \begin{equation}
      \Pr(D^{u,v}_\lambda) = \Pr(B^{u,v}_\lambda) \Pr(F^{u,v}_\lambda | B^{u,v}_\lambda)
      = \Omega(\pi_\ell \frac{\activity(u)}{\AAC(u)}).
    \end{equation}
    For $\AAC(u) \leq 2$ we similarly get
    \begin{equation}
      \Pr(D^{u,v}_\lambda) = \Pr(D^{u,v}_1) = \Omega(\pi_\ell \activity(u)\AAC(u)).
    \end{equation}
  }

  This finishes the proof for the second part of the claim, equations (\ref{eq:claimequationslower}).
  \qedLNCS
\end{proof}

\hide{
  \begin{lemma}
    \label{lemma:heraldwcpgoodherald}
    Let $B(r)^{u,v}$ be the event that in round $r$ node $u \in \Astate_r$ receives a message from one of its neighbors $v\in \Astate_{r-1}$, neither of them neighboring any leader, herald or herald candidate, and let $\hat H(r')^{u',v'}$ be the event that at the beginning of round $r'$ node $u' \in \Hstate_{r'-1}$ and $v' \in \Lstate_{r'-1}$ form a good \lhp. Then
    \begin{equation}
      \Pr(\hat H(r+8)^{u,v}|B(r)^{u,v}) = \Omega(1)
    \end{equation}
  \end{lemma}

  \sebastian{notes for later: we also don't want herald candidates in a broader neighborhood. leaders and heralds, however, must be allowed to exist nearby. Thus, statement needs to be extended such that fabian's proof does fit.}
  \sebastian{note for next adjustment: hinder that nodes communicate on channel $\lambda$ also in a larger radius around $u$.}

  The proof for the lemma is not straightforward. We consider the two cases of $\Activity(u)>2$ and $\Activity(u) \leq 2$ and for those we establish claims about the probabilities for $u$ to become a herald in general and a good herald in detail before we provide the proof for Lemma \ref{lemma:heraldwcpgoodherald}.
  Denote with $\Activity_\Astate(u)$ the activity mass of all active nodes in $N^1(u)$ and with $\AAC(u)$ the mass of those in $N(u)$, i.e., $\AAC(u)=\Activity_\Astate(u)-\activity(u)$.

  \begin{claim}
    \label{claim:heraldbounds}
    Let $t$ be a round in which node $u$ is in state \Astate and $N_\Astate(u) \neq \emptyset$.
    Let $B^u$ be the event that in round $t$, $u$ moves to state
    $\Hstate'$ due to receiving a message from some node $v \in
    N_\Astate(u)$ on some channel $\Achannel_{\bar{lambda}}$. Further, let $D^u\subseteq B^u$ be the event that
    $B^u$ holds and no other node $v' \in N^3(v)\setminus\set{u}$
    receives any message on channel $\Achannel_{\bar{\lambda}}$ in round $t$. 
    \sebastian{this has been extended; $D^u$ previously only said that no node $v' \in N(v)$ gets the message from $v$.}
    It holds that
    \begin{eqnarray}
      \label{eq:claimequationsupper}
      \Pr(B^u) &=&
      \begin{cases}
        \bigO\brackets{\pi_\ell \frac{\activity(u)}{\AAC(u)}} & \AAC(u) > 2 \\
        \bigO\brackets{\pi_\ell \activity(u) \AAC(u)} & \AAC(u) \leq 2      
      \end{cases},
      \\
      \label{eq:claimequationslower}
      \Pr(D^u) &=&
      \begin{cases}
        \Omega\brackets{\pi_\ell \frac{\activity(u)}{\AAC(u)}} & \AAC(u) > 2 \\
        \Omega\brackets{\pi_\ell \activity(u) \AAC(u)} & \AAC(u) \leq 2      
      \end{cases}.
    \end{eqnarray}
  \end{claim}

  \hideproof{proof:claim:heraldbounds}{provide information}{
    \begin{proof}
      In the calculations below we make use of the following inequalities.
      \begin{eqnarray}
        \label{eq:basic}
        1-\pi_\ell &\geq& 1/2 \\
        1-x &\leq& e^{-x}, \quad \forall x \\
        1-x &\geq& e^{-2x}, \forall x \in [0,1/2]
      \end{eqnarray}

      If $\AAC(u) > 2$, for simplicity, we assume that $\log \AAC(u)$ is a positive integer. It becomes clear from the proof that for non-integer values 
      an adaption is straightforward, but hard to read.

      We let $B^{u,v}_i$ be the event that an \emph{active} node $u$ receives a message from an \emph{active} node $v$ on channel $\Achannel_i$, i.e., $u$ listens and $v$ broadcasts on $\Achannel_i$, while no other node $w \in (N_\Astate(u))\setminus\set{v}$ broadcasts on $\Achannel_i$. For different $v$ these events are disjoint, and we can define $B^u_i := \bigcup_{v \in N_\Astate(u)} B^{u,v}_i$ and we see that $B^u = \bigcup_{i=1}^{n_\Achannel} B^u_i$.


      More restrictive are the analogously defined events $D^{u,v}_i$, $D^u_i$, in which we also require that no node $x \in N^3(v)\setminus\set{u}$ receives a message on channel $i$. In that case, $D^u=\bigcup_{i=1}^{n_\Achannel} D^u_i$ is the event that nodes $u$ and $v$ engage in the handshake protocol in the round after they met on some channel $\Achannel_{\bar{\lambda}}$, without other nodes nearby having received something on that same channel. In particular, no other herald candidates try to engage with $v$ in the handshake protocol.

      \sebastian{the upper bounds should be untouched by the change to the definition of $D^u$.}


      \paraclose{Upper Bounds}
      \begin{equation}
        \begin{split}
          \label{eq:BuviUB}
          \Pr(B^{u,v}_i)
          & = \pi_\ell \activity(u) 2^{-i}(1-\pi_\ell) \activity(v) 2^{-i} \prod_{w \in N_\Astate(u)\setminus\set{v}} \bigbrackets{ 1-(1-\pi_\ell)\activity(w)2^{-i} } \\
          & \leq \pi_\ell \activity(u)\activity(v)2^{-2i}
          \underbrace{\frac{1}{1-(1-\pi_\ell)\activity(v)2^{-i}}}_{\leq 2} e^{-\frac{1}{2}\AAC(u)2^{-i}}
        \end{split}
      \end{equation}

      \begin{equation}
        \label{eq:BuiUB}
        \Pr(B^u_i) \leq \pi_\ell \activity(u) \AAC(u) 2^{-2i+1} e^{-\AAC(u)2^{-i-1}} = 8\pi_\ell \frac{\activity(u)}{\AAC(u)} (\AAC(u) 2^{-i-1})^2 e^{-\AAC(u)2^{-i-1}} =: C^u_i
      \end{equation}

      Consider the case $\AAC(u) >2$. 
      It holds that $x^2e^{-x}=\bigO(1)$, and by using $x=\AAC(u)2^{-i}$ we get that $C^u_i = \bigO(\pi_\ell \frac{\activity(u)}{\AAC(u)})$ for any fixed $i$. Furthermore:
      \begin{align}
        \frac{C_{i+1}^u}{C_i^u} &= \frac{1}{4}e^{\frac{1}{4}\AAC(u)2^{-i}} < \frac{1}{2} & \forall i &\geq \log \AAC(u)\\
        \frac{C_{i}^u}{C_{i+1}^u} &=  4e^{-\frac{1}{4}\AAC(u)2^{-i}} < \frac{1}{2} & \forall i &\leq \log \AAC(u) -4
      \end{align}
      Thus, all $C^u_i$ with channels $\Achannel_i$ outside the range $\Achannel_{\lambda-4}, \dots, \Achannel_\lambda$ can be upper bounded by those within the range using the geometric series. But those within that range are in $\bigO(C^u_\lambda)=\bigO(\pi_\ell \frac{\activity(u)}{\AAC(u)})$.
      In total we get:
      \begin{equation}
        \label{eq:BuUB}
        \Pr(B^u) = \bigO(\pi_\ell \frac{\activity(u)}{\AAC(u)})
      \end{equation}
      Now consider the case $\AAC(u) \leq 2$ and let $\lambda :=1$.
      \begin{equation}
        \Pr(B^u_i) \leq \pi_\ell \activity(u) \AAC(u) 2^{-2i+1} =: \hat C^u_i
      \end{equation}
      Clearly, $C^u_\lambda=C^u_1 \geq \sum_{i=2}^{n_\Achannel} C^u_i$ due to the convergence of the geometric series. In other words:
      \begin{equation}
        \Pr(B^u) = \bigO(\pi_\ell \activity(u)\AAC(u))
      \end{equation}
      
      This
      concludes the proof of the first part of the claim, equations (\ref{eq:claimequationsupper}).

      \sebastian{ANCHOR; the changes to the definition of $D^u$ must be incorporated here. For $\AAC(u) > 2$ and $m=0$ it should be possible to show that no node in any constant neighborhood is on channel $\Achannel_\lambda$ with constant probability. For $m\geq 1$ we should be able to show that any node $x$ with high $\AAC(x)$ has collisions with probability arbitrarily close to $1$ (depending on $\pi_\ell$). The case $\AAC(u) \leq 2$ has not yet been looked at.}

      \vspace{3mm}
      \paraclose{Lower Bounds}
      \\
      For lower bounds we study 2 specific channels, depending whether $\AAC(u)$ is greater than $2$ or not. More precisely, we study $\Achannel_\lambda$, where $\lambda := \max\set{1, \log \AAC(u)}$. 

      Let $E^x_i$ be the event that $x$ \emph{successfully} receives a message on channel $\Achannel_i$. If $x \in N_\Astate(v)\setminus\set{u}$, then $E^x_i|B^{u,v}_i$ implies that $x$ receives $v$'s message. First, let $\AAC(u) > 2$.\\
      Define $X^{u,v}_m := \set{x \in N_\Astate(v)\setminus\set{u}: 10m \AAC(u) \leq \AAC(x) < 10(m+1) \AAC(u)}$. Let $F^{u,v}_{i,m}$ be the event that \emph{no} $x \in X^{u,v}_m$ receives $v$'s message, broadcasted  on $\Achannel_i$. Let $q_i^w$ be the probability that some node $w \in N_\Astate(u)$ does \emph{not} broadcast on channel $\Achannel_i$, i.e., $q_i^w = 1-(1-\pi_\ell)\activity(w)2^{-i} \geq 3/4$, and let $q_i^w = 1$ if $w \notin N_\Astate(u)$. Further, $p_i^{u,v} = \pi_\ell(1-\pi_\ell)\activity(u)\activity(v)2^{-2i}$ is the probability that nodes $u$ and $v$ meet on channel $\Achannel_i$ with $u$ listening and $v$ broadcasting. Then
      \begin{eqnarray}
        \Pr(B_i^{u,v}) 
        &=& 
        p_i^{u,v} q_i^x \prod_{w \in N_\Astate(u) \setminus\set{v,x}} q_i^w,
        \\
        \Pr(E_i^x \cap B_i^{u,v}) 
        &=& 
        \frac{\Pr(B_i^{u,v})}{q_x} \pi_\ell \activity(x) 2^{-i}, \text{ and}
        \\
        \Pr(E_i^x|B_i^{u,v}) 
        &\leq& 
        \frac{4}{3} \pi_\ell \activity(x) 2^{-i}
      \end{eqnarray}


      Now, using Lemma \ref{lemma:weightedturan}, observe that $\sum_{x \in N^3(u)} \frac{\activity(x)}{\Activity(x)} \leq \alpha^3$ and therefore:
      \begin{equation}
        \label{eq:weightedturanAAC}
        \sum_{x \in X^{u,v}_0} \activity(x) 
        = \sum_{x \in X^{u,v}_0} \frac{\activity(x)}{\Activity(x)} \Activity(x) 
        \leq \sum_{x \in X^{u,v}_0} \frac{\activity(x)}{\Activity(x)} (\AAC(x) + \frac{1}{2})
        \leq 10(m+1)\alpha^3\AAC(u) + \alpha^3 
      \end{equation}

      Thus,
      \begin{equation}
        \Pr(F^{u,v}_{i,0}|B^{u,v}_i)
        \geq \prod_{x \in X^{u,v}_0} (1- \frac{4}{3}\pi_\ell \activity(x) 2^{-i}) 
        \stackrel{\AAC(u)>1, \, m=0}{\geq} e^{-30\pi_\ell\alpha^3\AAC(u)2^{-i}}
      \end{equation}
      For $\AAC(u) >2$ and channel $\Achannel_\lambda = \Achannel_{\log \AAC(u)}$ this is in $\Omega(1)$. 

      Let now $m\geq 1$. We show that, \wcp, all $x \in \bigcup_{m\geq 1} X^{u,v}_m$, that listen on channel $\Achannel_\lambda$, have a collision.
      For $E^x_i$ to happen, none of $x$'s neighbors other than $v$ may broadcast. Like before we get
      \begin{equation}
        \Pr(E^x_i|B^{u,v}_i) 
        \leq \frac{4}{3}\pi_\ell \activity(x) 2^{-i} \prod_{y \in N_\Astate(x)\setminus\set{u,v}} \bigbrackets{ 1-(1-\pi_\ell)\activity(y)2^{-i} }
        \leq \frac{4}{3}\pi_\ell \activity(x) 2^{-i+2} e^{-\frac{1}{2}\AAC(x)2^{-i}}
      \end{equation}
      Choosing $i=\lambda=\log \AAC(u)$ results in 
      \begin{equation}
        \Pr(E^x_\lambda|B^{u,v}_\lambda) 
        \leq \frac{8}{3}\pi_\ell \frac{\activity(x)}{\AAC(u)} e^{-\frac{\AAC(x)}{2\AAC(u)}}
        \stackrel{\AAC(x)\geq 10m\AAC(u)}{\leq} 
        \frac{8}{3}\pi_\ell \frac{\activity(x)}{\AAC(u)} e^{-5m}
      \end{equation}
      For $F^{u,v}_{\lambda,m\geq 1} := \bigcap_{m\geq 1}F^{u,v}_{\lambda,m}$ we get
      \begin{equation}
        \Pr(F^{u,v}_{\lambda, m\geq 1}|B^{u,v}_\lambda) 
        \geq 
        \prod_{m\geq 1} \prod_{x \in X^{u,v}_m} \bigbrackets{ 1-\frac{8}{3}\pi_\ell \frac{\activity(x)}{\AAC(u)} e^{-5m} }
        \geq e^{ -30\pi_\ell \sum_{m\geq 1}(m+1)e^{-5m} } \geq e^{-\pi_\ell}
        =\Omega(1)
      \end{equation}


      For the case of $\AAC(u) \leq 2$ we redefine $X^{u,v}_m := \set{x \in N_\Astate(v)\setminus\set{u}: 10m \leq \AAC(x) < 10(m+1)}$. Again, we first look at $x \in X^{u,v}_0$ and get yet again
      \begin{equation}
        \Pr(E^x_i|B^{u,v}_i) \leq \frac{4}{3}\pi_\ell \activity(x) 2^{-i}
      \end{equation}
      Like with (\ref{eq:weightedturanAAC}), Lemma \ref{lemma:weightedturan} gives us that $\sum_{x \in X^{u,v}_m} \activity(x) \leq 10(m+1)\alpha^3+\frac{1}{2}\alpha^3 \leq 11(m+1)\alpha^3$. Thus,
      \begin{equation}
        \Pr(F^{u,v}_{i,0}|B^{u,v}_i) 
        \geq \prod_{x \in X^{u,v}_0} (1- \frac{4}{3}\pi_\ell \activity(x) 2^{-i}) 
        \geq e^{-30\pi_\ell\alpha^3 2^{-i}}
      \end{equation}
      For $i=1$ this is in $\Omega(1)$. Let $m\geq 1$, then
      \begin{equation}
        \Pr(E^x_1|B^{u,v}_1) 
        \leq \frac{4}{3}\pi_\ell \activity(x) e^{-\frac{\AAC(x)}{4}}
        \leq \frac{4}{3}\pi_\ell \activity(x) e^{-2.5m}
      \end{equation}
      For $F^{u,v}_{\lambda,m\geq 1} := \bigcap_{m\geq 1}F^{u,v}_{\lambda,m}$ we get
      \begin{equation}
        \Pr(F^{u,v}_{\lambda,m\geq 1}|B^{u,v}_\lambda) \geq \prod_{m\geq 1} \prod_{x \in X^{u,v}_m} \bigbrackets{ 1-\frac{4}{3}\pi_\ell \activity(x) e^{-2.5m} }
        \geq 
        e^{ -15\pi_\ell \sum_{m\geq 1}(m+1)e^{-2.5m} } \geq e^{-3\pi_\ell}
        =\Omega(1)
      \end{equation}

      Now, for both $\AAC \leq 2$ and $\AAC > 2$ a simple argument again combines the results for $m=0$ and $m\geq 1$ into 
      \begin{equation}
        \Pr(F^{u,v}_\lambda|B^{u,v}_\lambda) := \Pr(\bigcap_{m\geq 0}F^{u,v}_{\lambda,m}|B^{u,v}_\lambda) = \Omega(1).
      \end{equation}

      We now focus on events $B^{u,v}_i$. Similar to previous calculations we get.
      \begin{equation}
        \label{eq:BuviLB}
        \Pr(B^{u,v}_i) 
        \geq \frac{1}{2}\pi_\ell \activity(u)\activity(v)2^{-2i} e^{-2\AAC(u)2^{-i}}
      \end{equation}
      Event $D^{u,v}_\lambda$ is the happening of both events $B^{u,v}_\lambda$ and $F^{u,v}_\lambda$. Let $\AAC(u) > 2$, then
      \begin{equation}
        \Pr(D^{u,v}_\lambda) = \Pr(B^{u,v}_\lambda) \Pr(F^{u,v}_\lambda | B^{u,v}_\lambda)
        = \Omega(\pi_\ell \frac{\activity(u)}{\AAC(u)}).
      \end{equation}
      For $\AAC(u) \leq 2$ we similarly get
      \begin{equation}
        \Pr(D^{u,v}_\lambda) = \Pr(D^{u,v}_1) = \Omega(\pi_\ell \activity(u)\AAC(u)).
      \end{equation}
      This finishes the proof for the second part of the claim, equations (\ref{eq:claimequationslower}).
      \qedLNCS
    \end{proof}
  }

} 



\ifthenelse{\boolean{swapheraldclaim}}{
\begin{corollary}
  \label{corollary:heraldwcpgoodherald}
  Let $B(r)^{u,v}$ be the event that in round $r$ node $u \in \Astate_r$ receives a message from one of its neighbors $v\in \Astate_r$, neither of them neighboring any leader, herald or herald candidate in the $5$th or $6$th round of its handshake protocol. Let $\hat H(r')^{u',v'}$ be the event that at the beginning of round $r'$ node $u' \in \Hstate_{r'}$ and $v' \in \Lstate_{r'}$ form a good \lhp and that $\Hstate' \cap N^3(u') = \emptyset$, i.e., there are no herald candidates in the $3$-neighborhood of $u'$. Then
  \begin{equation}
    \Pr(\hat H(r+8)^{u,v}|B(r)^{u,v}) = \Omega(1)
  \end{equation}
\end{corollary}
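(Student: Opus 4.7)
\begin{proofsketch}
The plan is to condition on $B(r)^{u,v}$ and show that three events occur simultaneously with probability $\Omega(1)$ and together imply $\hat{H}(r+8)^{u,v}$: (i) no other node in $N^3(v)$ receives a message in round $r$ on the channel $\Achannel_{\bar\lambda}$ that carried $v$'s broadcast; (ii) the six-round handshake of $(u,v)$ in rounds $r{+}1,\ldots,r{+}6$ is undisrupted so that the pair reaches $\Lstate,\Hstate$ by the start of round $r{+}7$; and (iii) $\Hstate' \cap N^3(u) = \emptyset$ at the start of round $r{+}8$ and no other leader or herald sits in $N(v)$.

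First I would handle (i) by invoking \Cref{lemma:heraldbounds}. That lemma establishes $\Pr(D^u) = \Omega(\Pr(B^u))$, and since the events $B^{u,v}$ and $D^{u,v}$ for varying $v$ partition $B^u$ and $D^u$ in the same way (the per-$v$ lower bound computed in the proof of \Cref{lemma:heraldbounds} treats each neighbor separately), the per-partner ratio $\Pr(D^{u,v} \mid B^{u,v}) = \Omega(1)$ transfers directly. Event $D^{u,v}$ then ensures that $u$ is the unique newly-promoted herald candidate arising from $v$'s round-$r$ broadcast inside $N^3(v)$.

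Next I would analyze (ii) round by round on the handshake channel $\Hchannel$. The hypothesis of $B(r)^{u,v}$ rules out neighboring leaders, heralds, and herald candidates at count $5$ or $6$ at round $r$; the active state of $u$ and $v$ at round $r$ further excludes neighboring MIS nodes. The residual hazards are (a) pre-existing herald or leader candidates in $N(u)\cup N(v)$ at counts $1$--$4$ at round $r$, whose $\Hchannel$ schedules may clash with the listening slots of $u$ or $v$ for particular phase offsets, and (b) freshly-formed candidates inside $N^4(v)$ during rounds $r{+}1,\ldots,r{+}6$. Hazard (b) is bounded per round by \Cref{lemma:wcpnoherald}(\ref{lemma:wcpnoherald:1}) with $k = 4$, contributing $O(\pi_\ell \alpha^4)$ per round. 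For hazard (a), each dangerous offset $c \in \set{1,2,3,4}$ corresponds to a pair that must have formed inside $N^4(v)$ at round $r{-}c$; applying the same per-round creation bound to the history and summing over the constantly many offsets gives total contribution $O(\pi_\ell \alpha^4)$. Choosing the free parameter $\pi_\ell$ small enough (a liberty the algorithm allows) keeps the combined failure probability below any fixed constant, leaving $\Omega(1)$ success probability for the handshake.

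For (iii), \Cref{lemma:wcpnoherald}(\ref{lemma:wcpnoherald:1}) again bounds per-round spawns in $N^3(u)$ by $O(\pi_\ell \alpha^4)$; summed over the seven rounds $r{+}1,\ldots,r{+}7$ this is still $O(\pi_\ell)$. Any herald candidate already present in $N^3(u)$ at round $r$ has count at most $6$ and so vacates $\Hstate'$ within six rounds---either by aborting back to $\Astate$ or by advancing, where advancement inside $N(v)$ is already charged to hazard (a) and advancement outside $N(v)$ is benign for the good-pair condition provided no new $\Hstate'$ is spawned afterwards. Multiplying the $\Omega(1)$ probabilities of (i), (ii), and (iii) then yields $\Pr(\hat{H}(r{+}8)^{u,v} \mid B(r)^{u,v}) = \Omega(1)$. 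The main obstacle I anticipate is hazard (a) in step (ii): a pre-existing $(\Lstate', \Hstate')$ pair with a $\pm 2$-round phase offset broadcasts on $\Hchannel$ exactly during $v$'s listening slots, and bounding its probability requires looking back to the formation round $r{-}c$ and reusing the herald-creation upper bound there rather than only using randomness at round $r$.
\end{proofsketch}
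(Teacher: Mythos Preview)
Your approach matches the paper's: invoke \Cref{lemma:heraldbounds} to obtain $\Pr(D^{u,v}\mid B^{u,v})=\Omega(1)$, then apply \Cref{lemma:wcpnoherald} repeatedly to exclude further herald candidates in the relevant neighborhood through round $r{+}7$. One step is missing, though. Your event $D^{u,v}$ in step~(i) only controls channel $\Achannel_{\bar\lambda}$ in round $r$, and your hazard~(b) starts at round $r{+}1$; you still need that \emph{in round $r$ itself} no node in $N^4(u)$ receives a message on some \emph{other} channel $\Achannel_j\neq\Achannel_{\bar\lambda}$. Part~(\ref{lemma:wcpnoherald:1}) is not directly usable for this, because conditioning on $D^{u,v}$ has already revealed round-$r$ randomness. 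The paper closes this by further conditioning on the exact channel-$\bar\lambda$ configuration $Z=\mathcal{S}^n_{\neg\bar\lambda}\wedge\mathcal{S}^b_{\bar\lambda}\wedge\mathcal{S}^l_{\bar\lambda}$ and invoking part~(\ref{lemma:wcpnoherald:4}) with $k=4$: since the bound $\Pr(H_{\neg\bar\lambda}\mid Z)=1-O(\pi_\ell\alpha^4)$ holds for every configuration $Z$ consistent with $D^u$, it transfers to $\Pr(H_{\neg\bar\lambda}\mid D^u)$.

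Your hazard~(a) is a legitimate worry, but the ``look back to round $r{-}c$'' fix is not sound as a statement about $\Pr(\,\cdot\mid B(r)^{u,v})$: the state at the start of round $r$ is fixed by the execution history, not by round-$r$ coin flips, so the per-round creation bound from earlier rounds cannot simply be imported under this conditioning. The paper's own proof in fact does not address pre-existing candidates at counts $1$--$4$; it argues only that no \emph{new} candidate is created in $N^3(u)$ in rounds $r,\dots,r{+}7$ and declares that sufficient. Where the corollary is actually used (the proof of \Cref{lemma:alwaysactive}), the dangerous $2$-round offset is handled separately by a direct appeal to \Cref{lemma:wcpnoherald}(\ref{lemma:wcpnoherald:1}) at round $T_i-5$, outside the corollary itself.
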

\begin{proof}
  According to \Cref{lemma:heraldbounds}, 
}{
We are now ready to prove Lemma \ref{lemma:app:heraldwcpgoodherald}.

\begin{proof}
[of Lemma \ref{lemma:app:heraldwcpgoodherald}]
  According to \Cref{claim:heraldbounds}, 
}
since $D^u \subset B^u$, it holds that $\Pr(D^u|B^u) = \Omega(1)$. Thus, if $B(r)^{u,v}$ happens, where $u$ receives the message on channel $\Achannel_i$, then, \wcp, no other node $w' \in N^3(u)\setminus\set{u}$ (which includes $v$'s neighborhood) receives a message on $\Achannel_i$. If no other node $w' \in N^3(u)\setminus\set{u}$ receives a message on $\Achannel_i$, that means that all such nodes $w'$ either not operate on $\Achannel_i$, or, if they do, none or at least $2$ of their neighbors $w'' \in N^4(u)$ send a message on $\Achannel_i$. Using Lemma \ref{lemma:wcpnoherald}.(\ref{lemma:wcpnoherald:4}) by setting $k=4$, $\pi_\ell$ appropriately small and conditioning on the set $S=S^l \dcup S^b \dcup S^n$ of nodes as listening/broadcasting on channel $\Achannel_i$ or not operating on $\Achannel_i$ at all (denoted by event $Z:=\mathcal{S}^n_{\neg i} \wedge \mathcal{S}^b_i \wedge \mathcal{S}^l_i$), we get that $\Pr(H_{\neg i}|D^u)=\Pr(H_{\neg i}|Z) > 1/2$. 

  If in round $r$ both events $H_{\neg i}$ and $D^u$ happen (where $u$ gets its message on channel $\Achannel_i$), then in round $r+1$ nodes $u$ and $v$ meet on channel $\Hchannel$ and can perform the first round of the handshake protocol without any interruption from any other nearby nodes. By applying Lemma \ref{lemma:wcpnoherald}.(\ref{lemma:wcpnoherald:1}) multiple times, in rounds $r+1$ to $r+7$, \wcp, no other new herald is created in $N^3(u)$. Thus, \wcp, by the beginning of round $r+8$, $u$ and $v$ have emerged from the handshake as a good \lhp.
  \qedLNCS
\end{proof}



\ifthenelse{\boolean{swaphandshake}}{
\input{hf-analysis/handshake-redblue-summary-short}
}{
  We next recall the analysis done for the so-called \emph{Handshake} and \emph{Red-Blue Game}, which shows that the creation of good pairs implies progress.
  All lemmas for the handshake protocol and the \rbp have been proven in \cite{tr:podc2013} and their proofs are almost unaffected by the changes made to the algorithm.

\subsection{Handshake Protocol---Nodes in States $\Lstate'$ and $\Hstate'$}
\label{subsec:handshake}

\begin{lemma}\label{lemma:adjpairs}
  In round $r$ consider two \lhps ($l_1$, $h_1$) and ($l_2$, $h_2$) and suppose that the pairs started their most recent handshakes in rounds $r_1$ and $r_2$, $r_1\leq r_2$, respectively. Say that edge $e$ is \emph{crossing} if one of its endpoints is in $\{l_1, h_1\}$ and its other endpoint is in $\{l_2, h_2\}$. Then, either no crossing edge exists or exactly one of the following conditions is satisfied: (1) $r_1 = r_2$ and crossing edges are $\{l_1,l_2\}$ and/or $\{h_1,h_2\}$, (2) $r_2 = r_1 +2$ and the only crossing edge is $\{l_1,h_2\}$.
\end{lemma}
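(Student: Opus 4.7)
The plan is to proceed by case analysis on $\Delta := r_2 - r_1 \geq 0$. The handshake protocol (\Cref{algo:handshake}) prescribes that for pair $i$, the leader $l_i$ listens on $\Hchannel$ in rounds $\{r_i, r_i+1, r_i+4, r_i+5\}$ (when $h_i$ broadcasts) and broadcasts in $\{r_i+2, r_i+3\}$ (when $h_i$ listens); the handshake succeeds only if every listen round yields a collision-free message from the partner. Moreover, once the handshake completes at round $r_i+5$, the pair enters the \rbp (\Cref{algo:rbg}), where both $l_i$ and $h_i$ broadcast $(\state, \ID)$ on $\Hchannel$ in every odd red-blue round, i.e., rounds $r_i+6, r_i+8, r_i+10, \ldots$.

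For each of the four candidate crossing edges $e \in \{\{l_1, l_2\}, \{h_1, l_2\}, \{l_1, h_2\}, \{h_1, h_2\}\}$ and each $\Delta$, I would check whether there exists a round in which one endpoint of $e$ broadcasts on $\Hchannel$ while the other listens on $\Hchannel$ for its own partner. Any such round produces a collision at the listener (its own partner is simultaneously broadcasting), forcing the handshake to fail and contradicting the coexistence of the two pairs in round $r$. Concretely, I would compare, for each endpoint, its set of $\Hchannel$-broadcast rounds---including, for pair~1, the odd red-blue rounds $r_1+6, r_1+8, \ldots$ that fall within pair~2's handshake window---with the $\Hchannel$-listen rounds of the other endpoint.

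Enumerating yields the claimed dichotomy. For $\Delta = 0$ the two schedules coincide exactly, so $\{l_1, l_2\}$ and $\{h_1, h_2\}$ introduce no conflict (both leaders listen in lockstep; both heralds listen in lockstep), whereas any leader--herald crossing forces one of the leaders to hear both heralds simultaneously, giving case~(1). For $\Delta = 2$, pair~1's leader-send slots $\{r_1+2, r_1+3\}$ coincide with pair~2's herald-send slots $\{r_2, r_2+1\}$ and pair~1's herald-send slots $\{r_1+4, r_1+5\}$ coincide with pair~2's herald-listen slots $\{r_2+2, r_2+3\}$---so for $\{l_1, h_2\}$ the two endpoints are always either both broadcasting or listening in disjoint rounds, introducing no collision; the remaining three crossings fail, either directly in the handshake (e.g.\ for $\{h_1, h_2\}$, $h_1$ broadcasts at $r_1+4$ while $h_2$ listens at $r_2+2 = r_1+4$) or through pair~1's first red-blue broadcast at $r_1+6$ coinciding with $l_2$'s handshake listen round $r_2+4 = r_1+6$, giving case~(2). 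For $\Delta \in \{1, 3, 4, 5\}$ every crossing edge incurs a direct handshake or red-blue conflict, and for $\Delta \geq 6$ the whole of pair~2's handshake falls within pair~1's \rbp phase, so at least one odd red-blue broadcast of pair~1 on $\Hchannel$ coincides with a listen round of $l_2$ or $h_2$, again forcing a collision regardless of which crossing edge is present.

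The main obstacle is the bookkeeping at $\Delta = 2$: separating direct handshake conflicts from red-blue-into-handshake conflicts is what singles out $\{l_1, h_2\}$ as the unique surviving edge rather than, say, $\{h_1, l_2\}$. In particular, one must verify that pair~1's transition to the \rbp at round $r_1+6$ is not heard by $h_2$ (whose handshake listen rounds $\{r_1+4, r_1+5\}$ are strictly before $r_1+6$) but would be heard by $l_2$ (whose listen round $r_1+6$ coincides with it); this is the source of the asymmetry between the two ``diagonal'' crossings.
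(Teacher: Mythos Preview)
Your proposal is correct and follows essentially the same approach as the paper. The paper omits the proof here and refers to the technical report, but it explicitly notes that the argument rests on exactly the mechanism you exploit: the handshake send/listen pattern on channel $\Hchannel$ together with the fact that both leader and herald block $\Hchannel$ in every second round of the \rbp. Your case analysis on $\Delta=r_2-r_1$, including the use of pair~1's odd red-blue broadcasts to rule out $\{h_1,l_2\}$ at $\Delta=2$ and all crossings for $\Delta\geq 6$, reproduces that argument faithfully.
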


  Lemma \ref{lemma:adjpairs} corresponds to Lemma 8.10 in \cite{tr:podc2013} and in the proof there it was made use of the fact that in the original \rbg in every second round both leader and herald blocked channel $\Hchannel$. But the new \rbg features exactly the same mechanic. Everything else in the proof stays the same as the handshake protocol has not changed at all, hence we omit the proof here.

\subsection{\RBP---Nodes in States $\Lstate$ and $\Hstate$}
\label{subsec:redblue}

\begin{lemma} 
  \label{lemma:goodpair}
  If a pair $(l,h)$ is good in round $r$ and they started their first \rbg\ in round $r'$, then by the end of round $r'+\tthreshold_\redblue = r'+\Theta(\log n)$, \whp, either
  \begin{compactitem}
    \item the related leader $l$ joins the MIS, or
    \item a node $v \in N(l) \cup N(h)$ joins the MIS by increasing its lonely counter above $\tthreshold_\lonely$.
  \end{compactitem}
\end{lemma}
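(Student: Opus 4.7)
The plan is to show that once $(l,h)$ is good in round $r$, it remains good through the entire red-blue phase, so every subsequent red-blue game succeeds for $l$, and after $\tau_\redblue$ rounds the line in \Cref{algo:rbg} that checks $\roundcount > \tthreshold_\redblue$ promotes $l$ to $\Mstate$. The only way this pipeline can be interrupted is by an MIS node appearing in $N(l) \cup N(h)$, which by the same argument cannot arise through the handshake route within this window and therefore must arise via the loneliness disjunct of the claim.

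First I would argue preservation of goodness. By \Cref{lemma:adjpairs}, any \lhp $(l',h')$ that coexists with $(l,h)$ and shares a crossing edge must have started its latest handshake either simultaneously with $(l,h)$'s current game or with a two-round offset; thus a newly appearing adjacent pair can only complete its six-round handshake in a window that overlaps with odd rounds of $(l,h)$'s ongoing red-blue games, during which both $l$ and $h$ transmit a blocking signal on $\Hchannel$. These blocking transmissions collide with the handshake messages on $\Hchannel$, preventing the handshake from returning $\handshake = \success$ for any node in $N(l)\cup N(h)$. Combined with the absence of pre-existing neighboring leaders/heralds of $l$ (guaranteed by goodness at round $r$), this shows no new neighbor of $l$ can enter $\Lstate$ or $\Hstate$ during $[r, r'+\tau_\redblue]$.

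Second I would argue success of each red-blue game. Since $l$ has no leader/herald neighbor other than $h$ throughout the window, the only transmitter on $\Gchannel$ in $l$'s 1-neighborhood during rounds $2$ and $4$ of any game is $h$ (which transmits in both of those rounds per \Cref{algo:rbg}). Hence when $l$ listens on $\Gchannel$ in the round dictated by its random $\textitcolor$, it hears $h$ without collision and sets $\game \gets \success$; in round $6$, $l$ broadcasts the success message with the next rendezvous channel on $\Rchannel_\rendezvous$, and $h$ reads it and updates accordingly. Iterating over all $\tau_\redblue/8 = \Theta(\log n)$ games and taking a union bound over these deterministic successes (conditioned on the blocking argument above, which itself holds \whp by a union bound over $\bigO(\log^3 n)$ nearby candidates and $\Theta(\log n)$ rounds via \Cref{lemma:wcpnoherald}(\ref{lemma:wcpnoherald:1})), $l$ reaches the $\roundcount > \tthreshold_\redblue$ branch and transitions to $\Mstate$, proving the first disjunct.

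The only escape from the above is that $l$ gets eliminated before promotion, which requires an MIS node $m \in N(l)$ to send on $\Rchannel_k$ during a round when $l$ happens to listen on a report channel. By the isolation argument, $m$ cannot arise as the leader of a newly formed pair in $N^2(l)$ during the window; hence $m$'s entry into $\Mstate$ must be via its own $\lonely$ counter exceeding $\tthreshold_\lonely$, and $m \in N(l) \subseteq N(l)\cup N(h)$, yielding the second disjunct. The main obstacle is the careful round-level accounting in the blocking step: one must verify that the overlap between the six-round handshake window of a potentially nascent neighboring pair and the odd rounds of $(l,h)$'s games always intersects in at least one $\Hchannel$-blocked round, which rests on the exhaustive case distinction of \Cref{lemma:adjpairs} together with the fact that the new red-blue protocol preserves the ``both endpoints block $\Hchannel$ in odd rounds'' invariant that the original algorithm of \cite{tr:podc2013} relied upon.
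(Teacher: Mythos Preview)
The paper defers the proof to \cite{tr:podc2013}; your skeleton matches that approach (the $\Hchannel$-blocking in odd rounds preserves isolation of $l$, so the games succeed and $l$ reaches $\Mstate$ unless an MIS neighbor interferes). But you analyze only $l$'s side, and that leaves a genuine gap.

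Goodness constrains only $N(l)$: the herald $h$ may already have a leader neighbor $m$ at round $r$, and nothing in your sketch excludes $m$ from reaching $\Mstate$ via the leader route. The missing step is that, by \Cref{lemma:adjpairs}, any such $m$ is exactly two rounds ahead of $(l,h)$; then $h\in N(m)\cap\Hstate$ makes $m$'s own pair bad, and in every game in which $m$ picks \textit{red} it listens in its round $4$ (our round $2$) and collides with $h$ on $\Gchannel$, so over $\Theta(\log n)$ games $m$ is \whp\ knocked back to $\Astate$ before finishing. You also omit that the pair can break on $h$'s side: an MIS node in $N(h)$ colliding on $\Rchannel_\rendezvous$ in round $6\pmod 8$ sends $h$ to $\Astate$, after which $l$'s next game fails --- so the loneliness argument must cover $m\in N(h)$, not just $N(l)$. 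Two smaller issues: the $\Hchannel$-blocking is deterministic (in any two consecutive rounds exactly one is an odd round of the game, and both $l$ and $h$ transmit there), so the appeal to \Cref{lemma:wcpnoherald} and a union bound is misplaced; and an MIS neighbor disrupts $l$ via $\Gchannel$ in rounds $2,4\pmod 8$, not via a report channel, and this sends $l$ to $\Astate$, not to $\Estate$.
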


This lemma and its proof are exactly the same as in \cite{tr:podc2013}, hence we omit any proof. Note that the lemma stresses that a good leader does not necessarily become an MIS node. To see how the second case can happen, assume that some node $v$ is in the \hfilter already for a long time, with its lonely counter almost reaching the threshold. If now a neighboring node in $G$ makes it out of the \dfilter, and, after increasing its activity value for a while, communicates successfully with another node. This can happen before $v$ notices the existence of $u$, and shortly after $u$ becomes a leader, $v$ could join the MIS via the lonely counter.

Leaders of bad pairs do not have a justified claim for being MIS nodes, so we do not want them to hinder progress. But they do that by decreasing activity values of their neighbors. The next lemma shows that bad pairs do not last for very long, thus not causing a long stagnation.

\begin{lemma} 
  \label{lemma:extinction} 
  Consider a node $v$ and suppose that in an arbitrary round $r$, there is a leader or herald of a bad pair in $N^3(v)$. Then, \wcp, in round $r+16$, 
  no node in $N^3(v)$ is in state $\Hstate'$ and all leaders and heralds are part of a good pair.
\end{lemma}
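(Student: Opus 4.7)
The plan is to combine two constant-probability events inside the 16-round window: (a) no new herald candidate is created in $N^3(v)$ throughout the window, and (b) every bad pair existing in round $r$ resolves within two red-blue game (RBG) iterations. The conjunction gives the claimed \wcp bound.

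For (a), Lemma \ref{lemma:wcpnoherald}(1) with $k=3$ yields that in any round the probability of no new $\Hstate'$ node in $N^3(v)$ is $1-O(\pi_\ell)$. Since $\pi_\ell$ is a constant that can be chosen arbitrarily small, over 16 rounds the joint no-creation probability is a positive constant. Under this event, any $\Hstate'$ node remaining at round $r+16$ would have to originate from a handshake started at round $r$ or earlier; since a handshake lasts only $6$ rounds, no such node can still be in $\Hstate'$. A further consequence is that no new leader-herald pair can form during the window, so every pair present at round $r+16$ is a descendant of one that existed at round $r$.

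For (b), consider a bad pair $(l,h)$ with $l$ adjacent to a leader, herald, or herald candidate (in round 5 or 6 of its handshake) belonging to another pair $(l',h')$. By Lemma \ref{lemma:adjpairs}, the two pairs started their handshakes within two rounds of each other, so their RBGs are either synchronized or shifted by $2$. In the synchronized case, at the start of each RBG both leaders pick a color uniformly at random; with probability $1/2$ they pick different colors, which forces collisions on $\Gchannel$ in round 2 or round 4 because heralds always broadcast on $\Gchannel$ in both of those rounds. The leader that is listening in such a round then sees multiple sending neighbors, sets $\game = \fail$, and demotes to $\Astate$ in round 6, dragging its herald with it via the round-6 rendezvous message. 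The 2-round shifted case is handled analogously, using interference on $\Hchannel$ in odd RBG rounds and on the rendezvous channel in round 6. Hence each bad pair resolves in a single RBG with constant probability, and over two iterations (which fit inside 16 rounds) this probability amplifies to $1-(1-\Omega(1))^2$, still a constant.

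The main obstacle will be controlling the joint dynamics across multiple coexisting pairs in $N^3(v)$: a single leader's failure demotes its herald and may instantaneously turn an adjacent previously-bad pair good, so one must argue per \emph{cluster} of mutually-adjacent pairs rather than per pair. Bounded independence caps the number of such clusters in $N^3(v)$ by $O(\alpha^3)$, a constant, and Lemma \ref{lemma:adjpairs} bounds the size of each cluster; a union over clusters keeps the conjunction of events (a) and (b) at constant probability, completing the proof.
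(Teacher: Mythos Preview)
The paper omits this proof entirely, referring to \cite{tr:podc2013} and noting only that $r+12$ became $r+16$ because the red-blue game grew from six to eight rounds. Your two-step decomposition---(a) no new herald candidates in the window via Lemma~\ref{lemma:wcpnoherald}(\ref{lemma:wcpnoherald:1}), then (b) every existing bad pair resolves within two games---is exactly the intended structure.

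Two things need repair. First, your mechanism for the $2$-shifted case is wrong. By Lemma~\ref{lemma:adjpairs} the only crossing edge is $\{l_1,h_2\}$ with $l_1$ two rounds \emph{ahead}; the knock-out happens on $\Gchannel$, not on $\Hchannel$ or a report channel. Concretely, when $l_1$ is in its RBG round~$4$, $h_2$ is in its round~$2$ and sends on $\Gchannel$; if $l_1$ picked red it is listening there and sees both $h_1$ and $h_2$, hence collision. So $l_1$ fails with probability $1/2$ per game via its own coin, independently of anything else.

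Second, the multi-pair step is where the real content lies, and your sketch is partly right and partly wrong. Your claim that bounded independence caps the number of clusters by $O(\alpha^{O(1)})$ is correct, and here is the missing justification: two leaders in different crossing-edge components are non-adjacent in $G$ (any $G$-edge between members of two pairs would be a crossing edge by definition, and Lemma~\ref{lemma:adjpairs} forces adjacent leaders to be synchronized, hence in the same component), so one representative leader per component is an independent set of size at most $\alpha^{O(1)}$. However, your claim that ``Lemma~\ref{lemma:adjpairs} bounds the size of each cluster'' is false---a clique of synchronized leaders, each with a private pendant herald, satisfies all constraints of Lemma~\ref{lemma:adjpairs} yet can be as large as $\Delta_{\max}$. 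What you actually need, and do not supply, is that each cluster resolves with probability $\Omega(1)$ in two games \emph{regardless of its size}; this hinges on the fact that a synchronized leader survives a game iff \emph{all} its leader-neighbors chose the same color, so the surviving-bad leaders inside a cluster are governed by the independence number of the cluster (at most $\alpha^{O(1)}$), not its cardinality. That analysis, together with the separate $1/2$-per-game bound for leaders with $2$-shifted herald neighbors and a slight widening of the radius so that $\Hstate'_{5,6}$ neighbors of leaders in $N^3(v)$ are covered, is what closes the argument.
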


  The proof for Lemma \ref{lemma:extinction} 
remains the same, too, 
except that in the original version round $r+12$ was stated. This change is due to the fact that the length of a single \rbg did increase from $6$ to $8$. It does not affect the proving method and we again omit any proof.




}

\subsection{Joining the MIS---Nodes in States $\Mstate$ and $\Estate$}
\label{subsec:joining:the:MIS}


\begin{property}[P]
  The set $\Mstate$ is an independent set at all times.
\end{property}

This intuitive assumption is needed for some of the upcoming statements; it is clearly
true at the beginning of the algorithm, when $\Mstate=\emptyset$. We show in Lemma \ref{lemma:safety}, that if (P) is violated, then \whp a contradiction occurs. The next lemma makes sure that nodes 
in $N(\Mstate)$
soon learn of their coverage. 

\begin{lemma}
  \label{lemma:domination}
  Assume (P) holds. Let $v$ be a node that enters state $\Mstate$ at time $t$. Let $w$ be a node in $N_G(v)$ that is awake at time $t'\geq t$ and, if $w \in \Lstate \cup \Hstate$, that it is at most in round $\frac{9}{10}\tau_\redblue$ of its corresponding \rbg. Then by time $t'+\tthreshold_\notif=t'+\bigO(\log n)$, \whp, $w$ is in state $\Estate$.\footnote{Note that this lemma also considers nodes $w$ from the \dfilter.}
\end{lemma}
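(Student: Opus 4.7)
My plan is to exploit that, per \Cref{algo:MISstate}, an MIS node $v$ broadcasts $(\Mstate,\ID_v)$ on $\Hchannel$ with probability at least $1/2$ (and at least every second round via the \enforce mechanism), on $\Gchannel$ with probability $1/4$, and on a uniformly random $\Rchannel_k$ with probability $1/4$. Property~(P) forbids any other MIS node in $N_G(w)$, so $v$ is the sole MIS broadcaster competing for $w$'s attention. I would split into cases by $w$'s state and show that in each case a constant-probability notification or disruption event occurs per round; a Chernoff bound over $\tthreshold_\notif = \Theta(\log n)$ rounds then yields success \whp.

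First I would handle the ``easy'' states $\Wstate$, $\Dstate$ and $\Astate$. If $w\in\Wstate$, $w$ always listens on a uniformly random report channel $\Rchannel_i$; if $w\in\Dstate$, it does so with probability at least $1/2$ (the $q\in[1/2,1)$ branch of \Cref{algo:decay}); if $w\in\Astate$, it listens on $\Rchannel_k$ with probability $1-\activity(w)\geq 1/2$. With probability $\Theta(1/n_\Rchannel)=\Theta(1)$, $v$'s random report-channel broadcast coincides with $w$'s listening channel. The only competing broadcasters on report channels inside $N_G(w)$ are leaders/heralds in the round-$6$ or round-$8$ slot modulo~$8$ of their RBGs; since their $\rendezvous$ indices were drawn uniformly from $n_\Rchannel\geq 3\alpha^2$ options and bounded independence limits the number of such pairs in $N_G(w)$ to $O(\alpha^2)$, a collision-free reception on $\Rchannel_i$ occurs with constant probability per round. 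Upon reading $\msg.\state=\Mstate$, $w$ sets $\state\gets\Estate$ by \Cref{algo:decay} and \Cref{algo:activeState}. The transient handshake states $\Lstate'\cup\Hstate'$ last only $6$ rounds and thus immediately reduce to one of the other cases.

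For $w\in\Lstate\cup\Hstate$, the hypothesis guarantees that at least $\tthreshold_\redblue/10\gg\tthreshold_\notif$ rounds still remain in $w$'s current RBG, so $w$ cannot become an MIS node within the notification window. I would show that each $8$-round cycle disrupts $w$'s game with constant probability, forcing $w$ back to $\Astate$. For a leader $w$, in the color-chosen $\Gchannel$-listening slot (round $2$ or $4$ mod~$8$) $v$ broadcasts on $\Gchannel$ with probability $1/4$; the resulting collision with the partner herald's simultaneous $\Gchannel$-transmission makes $w$ observe $\msg=\emptyset$ and set $\game\gets\fail$, returning $w$ to $\Astate$ in round $6$ mod~$8$. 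For a herald $w$, in round $6$ mod~$8$ $w$ listens on a specific $\Rchannel_\rendezvous$ expecting its leader; with probability $\Theta(1)$, $v$'s random report-channel choice coincides, so $w$ reads $\msg\neq(\ID_\leader,\success,*)$ and falls back to $\Astate$. A Chernoff bound over the $\Omega(\log n)$ available cycles shows \whp that $w$ returns to $\Astate$ well before $\tthreshold_\notif$ elapses, after which the previous paragraph concludes the notification.

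The main obstacle will be the RBG case, since the partner in $w$'s pair broadcasts on exactly the channels $w$ listens on, ruling out direct notification; instead we must rely on $v$'s transmission inducing a \emph{collision} with the partner, so that $w$'s observed message triggers the $\game\gets\fail$ branch rather than the expected content. Characterizing the other leaders/heralds in $N_G(w)$ via \Cref{lemma:adjpairs} to bound extra report-channel broadcasters, and verifying that disruption events across successive cycles are sufficiently independent for the Chernoff argument, are the main technical steps.
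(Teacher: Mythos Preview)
Your overall scheme—split by $w$'s state, exhibit a constant-probability notification or disruption event per $O(1)$ rounds, then Chernoff over $\tthreshold_\notif=\Theta(\log n)$—is exactly the paper's route. The treatment of the $\Lstate/\Hstate$ cases (collision on $\Gchannel$ for a leader, collision on the rendezvous report channel for a herald) is correct and matches the paper.

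There is, however, a real gap in your $\Wstate/\Dstate/\Astate$ case. You assert that ``bounded independence limits the number of such pairs in $N_G(w)$ to $O(\alpha^2)$,'' but $\alpha$-bounded independence only yields this for \emph{good} pairs: good leaders are mutually non-adjacent, hence at most $\alpha$ of them lie in $N(w)$, and each good herald is tied to a good leader in $N^2(w)$, giving at most $\alpha^2$ good heralds. For \emph{bad} pairs there is no such structural bound; \Cref{lemma:adjpairs} only forces adjacent leaders to be synchronized, it does not bound their number, and in the herald-filter graph $|N(w)|$ can be as large as $\Theta(\log^3 n)$. If many bad leaders/heralds in $N(w)$ happen to be in their report-channel slot, the probability that $v$'s random $\Rchannel_k$ is collision-free at $w$ drops to $e^{-\Theta(\log^3 n)/n_\Rchannel}=o(1)$, and your per-round constant no longer holds. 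The paper closes this gap with \Cref{lemma:extinction}: with constant probability, within $16$ rounds all bad pairs in $N^3(w)$ have been knocked out and (by \Cref{lemma:wcpnoherald}) no new ones were created, after which only the $\leq 3\alpha^2$ good leaders/heralds/MIS nodes remain as competing report-channel broadcasters and your argument goes through.

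A minor point you leave implicit: once $v\in\Mstate$, $v$'s at-least-every-second-round $\Hchannel$ broadcasts guarantee that no handshake adjacent to $v$ can complete, so $w$ cannot freshly enter $\Lstate\cup\Hstate$ after time $t+4$. The paper uses this to argue that after $w$ is driven out of $\Lstate\cup\Hstate\cup\Hstate'_{5,6}$ it never returns, which is what makes the reduction to the $\Astate$ case clean.
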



\begin{proof}
  We proof the statement by showing that within $\bigO(\log n)$ rounds a non-sleeping node $w$ receives a message from $v \in \Mstate$ on some channel $\Rchannel_i$. Since nodes in the \dfilter listen on the report channels in any given round with probability at least $1/2$, and therefore at least half as often as nodes in the \hfilter, we restrict our analysis to nodes in the \hfilter and proof the statement for a time bound of $\tthreshold_\notif/2$. State changes from the \dfilter to the \hfilter do not affect the analysis because of the very same reason. Also note that $w$ is \emph{unable} to move to state \Lstate nor \Hstate after time $t+4$, since $v$ disrupts any handshake in its neighborhood by sending at least once on \Hchannel in every set of two consecutive rounds.

  {\bf Case $\mathbf{w \in \Astate}$.} Consider some round $t'' \geq t'$. If $w$ is in state $\Astate$, \wcp, it also is in that state in round $t''+1$. Further, \wcp, MIS node $v$ has its variable $\enforce$ set to false in round $t''+1$, and thus, \wcp, broadcasts on some channel $\Rchannel_i$ in that round. Assume $w$ does not neighbor any bad herald or bad leader, then it neighbors at most $\alpha$ good leaders, $\alpha^2$ good heralds, and $\alpha$ MIS nodes. To see this, note that while good heralds are allowed to be adjacent to each other, 
  each has a neighboring good leader, and thus the number of adjacent good heralds in the direct neighborhood of $w$ is upper bounded by the number of good leaders in $N^2(w)$, and thus by $\alpha^2$. We can therefore upper bound the number of adjacent good leaders, good heralds, or MIS nodes by $3\alpha^2$. The probability that $v$ chooses channel $\Rchannel_i$ while no good herald, good leader or other MIS node neighboring $w$ chooses to operate on the same channel $\Rchannel_i$, is constant, since $n_\Rchannel$ is greater than $3\alpha^2$, but still a constant. With probability at least $\frac{1}{2n_\Rchannel}=\Omega(1)$, $w$ listens on $\Rchannel_i$ in round $t''+1$, and therefore, $w$ learns of $v$'s state with constant probability.

  Now let there be bad pairs in $w$'s neighborhood in round $t''$ and $w$ be in $\Astate$. Then, \wcp, by Lemma \ref{lemma:extinction}, $16$ rounds later $w$ is still/again in state \Astate\ while all bad pairs in $N^2(w)$ are knocked out and no new bad pair has been created due to Lemma \ref{lemma:wcpnoherald}.(\ref{lemma:wcpnoherald:1}). As before, $w$ learns of $v$'s state \wcp after $\bigO(1)$ rounds. 

  Let now $w$ be in different states. If $w$ is in $\Hstate'$ (but at most in the fourth round of the handshake protocol) or in $\Lstate'$, its handshake will fail due to $v$'s routine of disrupting channel $\Hchannel$ at least once every $2$ rounds, reverting $w$ back to state $\Astate$. 
  For the cases of $w$ being either in $\Lstate$, $\Hstate$ or in the last two rounds of the handshake protocol as a herald candidate ($\Hstate'$) (which we denote by $\Hstate'_{5,6}$ in this proof), note that if $w$ ever moves to state $\Astate$ from these states, it is unable to return.
  If $w$ is in $\Lstate$, in each \rbg, there is a constant probability of $\Theta(1)$ that $v$ disrupts the game, bringing $w$ back to state $\Astate$. Instead, if $w$ is in $\Hstate$, during each \rbg, there is a $\Omega(1/n_\Rchannel)$ probability that $v$ operates on the same channel $\Rchannel_\rendezvous$ as $w$ does in round $8$ of its respective \rbg, disrupting the ongoing \rbg\ and sending $w$ back to state $\Astate$. Due to our condition of $w$ not being too far in its \rbp, we can choose $\tau_\redblue$ large enough to make sure that $w$, whether it is in state $\Lstate$ or $\Hstate$, hears from $v$ \whp before it can join the MIS itself. If $w$ is in the last two rounds of the handshake protocol, then this case can be reduced to $w$ being a herald.

  Thus, if $w$ is in $\Lstate \cup \Hstate \cup \Hstate'_{5,6}$, then it leaves this set within $\bigO(\log n)$ rounds and never returns to it. Thus let us assume that $w$ is not in this set (anymore).

  Clearly, if $w$ is not in \Astate in any round $t'' \geq t'$, \wcp, it returns to \Astate in $\bigO(1)$ rounds. And as argued above, if bad pairs are in $w$'s neighborhood, they get eliminated \wcp in $\bigO(1)$ rounds as well. Choosing $\tthreshold_\notif=\bigO(\log n)$ sufficiently large and applying a Chernoff bound proves the statement.
\qedLNCS
\end{proof}


\subsection{Progress and \Runtime}
\label{subsec:progress}

  In Lemma 8.13 of \cite{tr:podc2013} we have shown that once a good \lhp is created, its leader (or another node in distance $2$) joins the MIS within $\bigO(\log n)$ rounds. Also, we used the fact that within close proximity of \emph{fat} nodes (which exist in any $\delta$-neighborhood of any node in the \hfilter) such solitary pairs are created with a constant probability. In the algorithm in \cite{tr:podc2013} it might happen that after a good pair is created within radius $\delta$ of some node $u$, the only fat node within $N^\delta(u)$ is close to that pair. A good pair blocks the creation of other pairs around it, so progress might be stalled until the leader of the good pair joins the MIS, causing it to eliminate its neighbors (and therefore their activity) and finally, forcing the local condition of fatness to move to a different area of the graph.

  Here we changed the algorithm to take care of this potential stagnancy issue. We want the attribute of fatness to move away from a good pair long before the leader joins the MIS. More precisely, a node not neighboring good pairs should become fat within $o(\log n)$ rounds. For this we require good pairs to reduce the activity levels of their neighborhoods. However, a \lhp does not know whether it forms a good pair or a bad one before the $\tau_\redblue = \Theta(\log n)$ \rbgs are over. The idea to deal with this difficulty is the following. Good pairs manage in expectation within $\bigO(\loglogn)$ rounds to reduce their neighborhood's activity far enough such that most of the time those nodes can be considered inactive. Bad pairs, however, last for only a constant number of rounds in expectation, and are created rarely enough\footnote{controlled by reducing the parameter $\pi_\ell$} for affected nodes to recover their lost activity quickly. In other words, the longer a node is a leader, the more likely it is that this node is good.
  
  Careful analysis allows us to transform these observations into high probability results.






In the following $\activity(u,t)$ denotes the activity level of node $u$ in round $t$. 
Also, let $\eps$ be a constant smaller than $1$---about $0.1$ is sufficiently small for the upcoming proofs.
\begin{lemma}
  \label{lemma:alwaysactive}
  Let $t$ be a time at which a node $u \notin N^1(\Mstate)$ is in the \hfilter. Then, \whp, one of following holds:
  \begin{compactenum}[(a)]
    \item Within $\tthreshold_\progress=\bigO(\misRTstrong)$ rounds, 
    $u \in N^1(\Mstate)$, or
    \item $|\set{t' \in [t+1, t+\tthreshold_\progress]: \activity(u,t')=1/2}| \geq (1-\eps) \tthreshold_\progress$.
  \end{compactenum}
\end{lemma}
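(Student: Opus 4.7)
The plan is to assume case (a) fails---no MIS node appears in $N^1(u)$ during $[t, t+\tthreshold_\progress]$---and bound the number of \emph{low rounds} $t' \in [t+1, t+\tthreshold_\progress]$ with $\activity(u, t') < 1/2$ by $\eps \tthreshold_\progress$ \whp. Since $\activity(u)$ multiplies by $\increaseactivity$ each round in which $u$ does not hear from a neighboring leader or herald on a report channel, every low round lies in a recovery tail of length $O(\loglogn)$ following some \emph{decrease event}. I will partition decrease events by whether the transmitting neighbor belongs to a bad or a good pair (\Cref{def:goodpair}) and sum the two contributions.

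For the bad-pair contribution, \Cref{lemma:wcpnoherald}(\ref{lemma:wcpnoherald:1}) bounds the rate of new herald candidate creation in $N^3(u)$ by $O(\pi_\ell \alpha^3)$ per round, and \Cref{lemma:extinction} shows each bad pair in $N^3(u)$ is knocked out within $O(1)$ rounds \wcp. By conditioning round-by-round on the state of $N^4(u)$ to decouple dependencies, and applying a Chernoff tail bound to a stochastic domination by independent Bernoulli arrivals paired with independent geometric lifetimes, the total number of rounds of bad-pair presence in $N^2(u)$ is $O(\pi_\ell \tthreshold_\progress + \log n)$ \whp. Since all decreases stemming from a single bad pair compress into its $O(1)$ lifetime and share one $O(\loglogn)$ recovery tail, the resulting low rounds total at most $\eps \tthreshold_\progress / 2$ once $\pi_\ell$ is taken small enough and the hidden constant in $\tthreshold_\progress = \Theta(\misRTstrong)$ is tuned accordingly.

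For the good-pair contribution, the failure of case~(a) implies that any good pair $(l, h)$ with leader or herald in $N^1(u)$ must have $\{l\} \cup N(l) \cup N(h) \subseteq N^3(u)$, and by \Cref{lemma:goodpair} some node in this set joins the MIS within $\tthreshold_\redblue$ rounds. Since MIS nodes form an independent set (Property~(P)), at most $\alpha(3) = O(1)$ of them ever appear in $N^3(u)$, and each such MIS node can be the ``cause'' of at most $O(1)$ good pairs because good-pair leaders themselves form an independent set inside its $1$-neighborhood. Hence only $O(1)$ good pairs can affect $u$ throughout the window; each contributes at most $\tthreshold_\redblue + O(\loglogn) = O(\tthreshold_\redblue)$ low rounds, so the total is $o(\tthreshold_\progress)$ by the ordering $\tthreshold_\progress \gg \tthreshold_\redblue$.

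Adding the two contributions yields at most $\eps \tthreshold_\progress$ low rounds \whp. The main obstacle is the concentration bound on bad-pair presence: the creation and extinction events in different rounds are coupled through the global state of the graph, so applying Chernoff requires either a martingale formulation or an explicit coupling with an independent dominating process. A subsidiary subtlety is amortizing the $O(\loglogn)$ recovery cost across clustered decreases so that the bad-pair low-round total does not pick up a spurious $\loglogn$ factor that would force $\pi_\ell$ to shrink with $n$.
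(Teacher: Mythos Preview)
Your decomposition into bad-pair and good-pair contributions is sound, and your good-pair accounting is essentially the paper's (which phrases it as temporarily adding and then discharging an auxiliary condition ``a good leader exists in $N^2(u)$'', absorbing the resulting $\alpha^3\cdot\tthreshold_\redblue$ overhead into $\tthreshold_\progress$). The bad-pair part is where your route diverges from the paper and where the gap you flag actually bites.

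You bound the \emph{number} of bad pairs touching $N(u)$ by $O(\pi_\ell\tthreshold_\progress+\log n)$ via the per-round creation rate from \Cref{lemma:wcpnoherald}(\ref{lemma:wcpnoherald:1}), and then charge each pair an $O(\loglogn)$ recovery tail. That product is $O(\pi_\ell\tthreshold_\progress\loglogn)$, which indeed forces $\pi_\ell=O(1/\loglogn)$ and breaks the argument. The fix is that the recovery cost is \emph{not} $O(\loglogn)$ per pair but $O(1)$ per decrease event: one $\decreaseactivity$-drop is undone by exactly $\log_{\increaseactivity}\decreaseactivity=20\actm=O(1)$ increment rounds. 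Hence if $Y$ is the total number of rounds in which some bad leader/herald sits in $N(u)$, the low-round count is at most $O(Y)+O(\loglogn)$ (the additive term being the one-time warm-up from $\activity_\textmin$). With $Y=O(\pi_\ell\tthreshold_\progress+\log n)$ this is $\le \eps\tthreshold_\progress/2$ for small constant $\pi_\ell$ and suitably large $\tthreshold_\progress$. Once you make this potential-function amortization explicit, your argument closes; the ``subsidiary subtlety'' is the real issue, not a side concern.

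For comparison, the paper avoids both of your obstacles at once by using \Cref{corollary:heraldwcpgoodherald} rather than \Cref{lemma:wcpnoherald}: conditioned on a pair with leader or herald in $N(u)$ being created at all, it is \emph{good} with probability $\Omega(1)$. Under the assumption that no good leader appears in $N^2(u)$, every such creation must be bad, so a direct coin-flip Chernoff caps the number $m$ of bad creations at $O(\log n)$ \whp, independently of $\pi_\ell$ and $\tthreshold_\progress$. Then $Y\le\sum_i X_i$ with $\E[X_i]=O(1)$ (from \Cref{lemma:extinction}) gives $Y=O(\log n)$, and the same $O(Y)$ recovery bound finishes. This sidesteps the coupling/martingale issue entirely, since the conditional-goodness bound already furnishes the independent coin needed for concentration.
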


\begin{proof}
  Initially assume that the lemma allows in addition to conditions (a) and (b) the following:
  \begin{compactenum}[(c)]
    {\it \item or within $\tthreshold_\progress=\bigO(\misRTstrong)$ rounds, 
      there is a good leader in $N^2(u)$.}
  \end{compactenum}
  We prove the statement by contradiction, thus assume that neither (a), (b) nor (c) holds.

  Let $t < T_1 < T_2 < \dots < T_m \leq t+\tthreshold_\progress-\tau_\redblue$ be the rounds in which a respective series of leaders or heralds $v_1, v_2, \dots, v_m \in N(u)$ neighbors $u$ for the first time, i.e., $v_i$ successfully finished its handshake in round $T_i-1$ and managed to reach/was reached by a single node $v'_i \in N^2(u)$ in round $T_i-7$ on some channel $\Achannel_{k_i}$. For $T_i < T_j$ it can hold that $v_i=v_j$ --- in this case node $v_i$ moved from $\Lstate \cup \Hstate$ to state \Astate in the time interval $[T_i, T_j]$. Also, at time $T_i$ more than one leader/herald could neighbor $u$ for the first time, in which case we let $v_i$ be any of these. For the pair $(v_i,v'_i)$ denote with $l_i$ the leader and with $h_i$ the corresponding herald.

  Assume that all corresponding pairs $(l_i, h_i)$ are bad pairs as otherwise the lemma would be trivially fulfilled. 

  Fix $i$. 
  $l_i$ can become a bad leader in round $T_i$, if in round $T_i-7$ a node $l' \in N_\Astate(l_i)$ reaches another node $h' \in \Astate$ in that round and manages to get through its handshake protocol as well. For $(l_i,h_i)$ to get through their handshake protocol, they cannot neighbor a leader or a herald or a herald candidate in round $5$ or $6$ at that time. By \Cref{corollary:heraldwcpgoodherald}, the probability for $(l_i, h_i)$ to turn out a bad pair is in $1-\Omega(1)$. 
  Another way for $l_i$ to become a bad leader is if in round $T_i-5$ a node $h' \in N(l_i)$ successfully receives a message. By Lemma \ref{lemma:wcpnoherald}.(\ref{lemma:wcpnoherald:1}) this happens only with probability $\bigO(\pi_\ell)$.
  Hence, in both cases, the probability for $l_i$ to be a bad leader conditioned on the event that $l_i$ becomes a leader in round $T_i$, is at most $1-\Omega(1)$.
  Since we assume that all leaders $l_1,l_2,\dots, l_m$ are bad leaders, a Chernoff bound then gives us that, \whp, $m = \bigO(\log n)$. Assume this is the case. Let $X_i$ be the random variable that measures the number of \rbgs leader $l_i$ survives before it either becomes part of a good pair or it gets knocked out. If $l_i$ becomes part of a good pair 
  then the extended lemma statement would be fulfilled, hence we assume otherwise. Note that, \whp, $l_i$ (like any other node) cannot finish its \rbp as a bad leader and join the MIS.




  So, for all $i$, let us assume that $l_i$ gets knocked out of state \Lstate while being part of a bad pair. According to Lemma \ref{lemma:extinction} this happens in each \rbg with constant probability, i.e., $\E[X_i]=\Theta(1)$. Let $Y$ be the number of rounds in which at least one node in $N(u)$ plays the \rbg. Then, since $m=\bigO(\log n)$, applying Chernoff once more we have for $Y \leq X:= \sum_{i=1}^m X_i$ that $\E[Y]\leq \E[X]=\bigO(\log n)$. If $u$ always picks a channel $R_\rendezvous$ for the corresponding rounds on which a neighboring bad pair communicates, then $u$ reduces its activity by the factor $\decreaseactivity$ in each such round. 
  This totals to a reduction of at most $\decreaseactivity^Y$, spread over $\tau_\progress$ rounds.
  But since $Y=\bigO(\log n)$ it also takes at most $Y \cdot \log_{\increaseactivity} \decreaseactivity =Y \cdot 20\actm =  \Theta(Y) = \bigO(\log n)$ rounds to recover from those activity reductions. Choosing parameter $\tthreshold_\progress$ large enough, $|\set{t' \in [t+1, t+\tthreshold_\progress]: \activity(u,t')=1/2}| \geq (1-\eps) \tthreshold_\progress$ holds with high probability.
  
  What remains is to remove condition (c). But this follows from the simple fact that if a good leader is created in $N^2(u)$, then within $\tthreshold_\redblue$ rounds it either joins the MIS or it gets knocked out by an MIS node due to Lemma \ref{lemma:goodpair}. The latter can happen at most $\alpha^3$ times. Thus we can omit condition (c) by extending $\tthreshold_\progress$ additively by $\tthreshold_\redblue$ and then by a factor of $\alpha^3$.
\end{proof}

Next we upper bound the number of rounds in which \emph{any} neighbors of good pairs within distance $\delta$ from $u$ manage to exceed the activity threshold $\activity_\low$.

\begin{definition}
  For a node $u$ and a round $r$ let $I(u,r)$ be the event that
  \begin{compactitem}
    \item \emph{all} nodes $x \in N^\delta(u)$, which neighbor an MIS
    node, are in state \Estate, and
    \item \emph{all} nodes $x \in N^\delta(u)$, which neighbor a good herald $h$ or good leader $l$, $h,l \in (\Hstate \cup \Lstate)\setminus N(\Mstate)$, have $\activity(x) \leq \activity_\low = \sqrt{\activity_{\min}} =\log^{-12} n$ and are neither bad leaders nor bad heralds.
  \end{compactitem}
\end{definition}


\begin{lemma}
  \label{lemma:repression}
  Assume that (P) holds. Further, let $\bar{r}$ be a round in which node $u$ is in the \hfilter and set $J:=[\bar{r}+1, \bar{r}+\tthreshold_\progress]$. Then, \whp, one of the following holds:
  \begin{compactitem}
    \item Within $\tthreshold_\progress=\bigO(\log n)$ rounds, there is an MIS node in $N^1(u)$, or
    \item $|\set{r \in J: I(u,r) \text{holds}}| \geq (1-\eps) \tthreshold_\progress$.
  \end{compactitem}
\end{lemma}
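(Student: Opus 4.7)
The plan is to assume that no MIS node appears in $N^1(u)$ during $J$ (otherwise case (a) holds) and bound, w.h.p., the number of rounds $r\in J$ violating $I(u,r)$ by $\eps\tau_\progress$, giving case (b).

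First, I bound the total number $K$ of distinct good pairs that ever arise in $N^{\delta+1}(u)$ during $J$. A good leader only stops being a good leader by joining the MIS (good pairs only strengthen, cf.\ the note after Definition~\ref{def:goodpair}), and by Lemma~\ref{lemma:goodpair} a good leader $l$ with herald $h$ guarantees an MIS node in $N^1(l)\cup N^1(h)\subseteq N^{\delta+2}(u)$ within $\tau_\redblue$ rounds w.h.p. Good leaders in $N^{\delta+1}(u)$ form an independent set of size at most $\alpha^{\delta+1}$ at any fixed time, and by (P) the MIS nodes in $N^{\delta+2}(u)$ are an independent set of size at most $\alpha^{\delta+2}$. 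Pairing each new good-leader creation during $J$ with its guaranteed MIS node in $N^{\delta+2}(u)$ and adding the good leaders that remain at the end of $J$ gives $K=O(\alpha^{\delta+2})=O(\sqrt{\log n})$.

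Second, I show that each good pair $(l,h)$ born at round $r_0$ suppresses the activity of each of its neighbors within $O(\loglogn)$ rounds. By the new lines \ref{line:algo:heraldRBG:round8:h1}--\ref{line:algo:heraldRBG:round8:l2} of Algorithm~\ref{algo:rbg}, in every round $\equiv 6\pmod 8$ the leader and in every round $\equiv 0\pmod 8$ the herald broadcasts on the currently scheduled report channel $\Rchannel_\rendezvous$. For a fixed neighbor $x$ of $l$ or $h$, the event that $x$ listens on the matching channel and no competing broadcaster is present has constant probability $p_0$, using that $n_\Rchannel=\Theta(1)$ and that the number of competing report-channel broadcasters in $N(x)$ is controlled via bounded independence and $\Delta_\textmax=O(\log^3 n)$. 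Each such hit decrements $\activity(x)$ by $\decreaseactivity=\increaseactivity^{20\actm}$, so $\log\activity(x,\cdot)$ has uniform negative drift provided $\actm$ is chosen with $20\actm\cdot p_0\gg 1$. A Chernoff bound on a window of $c_1\loglogn$ rounds gives $\Omega(\loglogn)$ decrements, which pin $\activity(x)$ below $\activity_\low$; sliding the window across the $\leq\tau_\redblue$-round pair lifetime yields that $\activity(x,r)\leq\activity_\low$ in all but $O(\loglogn)$ rounds, with failure probability $\log^{-c}n$ for any constant $c$.

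Third, I union-bound over the $\Delta_\textmax=O(\log^3 n)$ neighbors per pair and then over the $K=O(\sqrt{\log n})$ good pairs, yielding total activity-related violating rounds of $O(\sqrt{\log n}\cdot\loglogn)$. For the ``no bad leader/herald'' part of $I(u,r)$, Definition~\ref{def:goodpair} forbids bad leaders/heralds adjacent to the good leader $l$; the only remaining case is synchronous-herald neighbors of $h$ (Lemma~\ref{lemma:adjpairs}), and any such bad pair is knocked out within $16$ rounds w.c.p.\ by Lemma~\ref{lemma:extinction}. A Chernoff argument mirroring the one in the proof of Lemma~\ref{lemma:alwaysactive} then caps the total round-count in which a bad pair lives adjacent to a good pair in $N^{\delta+1}(u)$ during $J$ by $O(\log n)$ w.h.p. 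Summing, the violating rounds total $O(\log n)$; choosing the hidden constant in $\tau_\progress=\Theta(\misRTstrong)$ large enough makes this at most $\eps\tau_\progress$, proving case (b). The main obstacle is that the natural time-scale of the drift analysis is $\loglogn$, so obtaining $n$-high-probability requires sliding the Chernoff window across the $\Theta(\log n)$ pair lifetime and carefully tuning $\increaseactivity,\decreaseactivity,\actm,n_\Rchannel$ so that the decrement drift uniformly dominates the per-round increment.
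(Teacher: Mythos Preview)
Your overall decomposition matches the paper's: bound the number of good pairs in $N^{\delta+1}(u)$ by $O(\sqrt{\log n})$, show each good pair suppresses its neighbors' activity within $O(\loglogn)$ rounds via a drift/Chernoff argument on the report-channel messages, and handle bad pairs adjacent to good heralds via Lemma~\ref{lemma:extinction}. That part is fine.

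The gap is the final probability amplification. Your sliding-window Chernoff on a length-$c_1\loglogn$ window gives, as you yourself state, failure probability $\log^{-c}n$ per (pair, neighbor). Union-bounding over $\le\tau_\redblue=\Theta(\log n)$ window positions and then over the $O(\sqrt{\log n}\cdot\Delta_\textmax)=\polylog n$ many (pair, neighbor) combinations still leaves you with total failure probability $\log^{-c'}n$ --- ``decent probability'' in the paper's terminology --- \emph{not} $n^{-c}$. Nothing in your argument converts this to \whp, so the lemma's conclusion is not established. The obstacle you flag at the end is exactly the real one, but sliding the window does not solve it: each window is still a $\Theta(\loglogn)$-scale experiment, and no union bound over $\polylog n$ such experiments reaches $n^{-c}$.

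The paper closes this gap with an additional aggregation step (its ``fourth part''). It partitions $J$ into $\ell=\Theta(\tau_\progress/\loglogn)=\Omega(\log n/\loglogn)$ intervals of length $c\loglogn$, marks the $O(\sqrt{\log n})$ intervals containing (or immediately following) a good-pair/MIS creation as red/orange, and observes that if the suppression event $\mEvent_{i-1}$ succeeds in the preceding interval then $I(u,r)$ holds throughout the blue interval $J_i$ (because recovery from $\activity_\textmin$ over one interval is only a factor $\increaseactivity^{c\loglogn}\ll\activity_\low/\activity_\textmin$). Each $\mEvent_{i-1}$ fails with probability $\le\log^{-20}n$, and now a binomial tail bound over the $\ell'\ge(1-\eps/2)\ell$ blue intervals shows that at most an $\eps'$-fraction of them fail, with probability $\binom{\ell'}{\eps'\ell'}(\log^{-20}n)^{\eps'\ell'}\le n^{-\Omega(1)}$. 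It is this aggregation over $\Omega(\log n/\loglogn)$ approximately independent $\loglogn$-scale trials that buys the \whp\ guarantee; your argument is missing an analogue of it.

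A minor secondary point: your constant $p_0$ needs the number of competing report-channel broadcasters in $N(x)$ to be $\le n_\Rchannel$, which holds only once bad pairs near $x$ have been cleared (so that only good leaders/heralds and MIS nodes remain, at most $3\alpha^2$ of them). The paper handles this with a case analysis on the state of $x$ and an appeal to Lemma~\ref{lemma:extinction}; you should make that dependency explicit rather than citing $\Delta_\textmax$.
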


\begin{proof}
  Initially, $\tthreshold_\progress$ is chosen large enough to comply with Lemma \ref{lemma:alwaysactive}.
  
  The proof is divided into four parts. In the first part we show that there are only $\polylog n$ many nodes for which $I(u,r)$ can be violated---i.e., nodes that neighbor good leaders, good heralds or MIS nodes. In the second part we show that each such node hears from one of these neighbors every $\bigO(1)$ rounds with constant probability. Since neighbors of MIS nodes are immediately eliminated upon hearing a message from them, the case of a node neighboring a good pair is the more difficult one. In the third part we argue that a node neighboring good pairs reduces its activity value to $\activity_{\min}$ at least once within $\bigO(\loglogn)$ rounds with considerable probability: $1-\log^{-c} n$ for some constant $c>1$. 
  In the last part we combine those results to show that within $\tthreshold_\progress$ rounds, \whp, $I(u,r)$ is only violated for a small constant fraction of those rounds. 
  
  \paragraph{First part.}
  To count the number of nodes that can violate $I(u,r)$, we have to count the number of good heralds, good leaders and good MIS nodes in $N^{\delta+1}(u)$ in round $r$; we denote that latter set by $W_r \subseteq (\Lstate \cup \Hstate \cup \Mstate) \cap N^{\delta+1}(u)$ and their neighboring nodes in the \hfilter by $N_r:=N(W_r)$. Due to (P) and Lemma \ref{lemma:goodpair}, \whp\ at all times all MIS nodes form an independent set, as do all leaders of good pairs, so there are no more than $2\alpha^{\delta+1}$ of these in $W_r$. The number of heralds of good pairs in $N^{\delta+1}(u)$ is at most the number of good leaders in $N^{\delta+2}(u)$, so in total $|W_r|$ amounts to at most $3\alpha^{\delta+2} \leq 3\alpha^2\sqrt{\log n} = o(\log n)$. Due to the guarantee we get from the \dfilter, each such node has at most 
$\bigO(\log^3 n)$
neighbors in the \hfilter, so $N_r < \log^4 n$. Note that even over $\tthreshold_\progress$ rounds the total amount of those nodes cannot exceed $o(\log^6 n)$.
  
  \paragraph{Second part.}
  An MIS node has to reach all its neighbors only once each, in order for them to fulfill the requirement for $I(u,r)$. Leaders and heralds on the other hand need to inform their neighbors multiple times and continuously over the course of $\tthreshold_\progress$ rounds. 
  In most cases a node $v \in N_r$ has a constant probability to hear of at least one of its neighbors in $W_r$ within a $\bigO(1)$-round time interval. More precisely, we claim that within $\bigO(\loglogn)$ rounds, in expectation, $v$ reduces its activity by a factor polylogarithmic in $n$. We prove this by analyzing the $3$ types of neighbors $v$ can have in $W_r$ and the states $v$ can be in.

  \begin{itemize}
    \item[$v \in \Astate$:] At first let $v$ neighbor an MIS node $s$. $s$ might be forced to broadcast on $\Hchannel$ in round $r$, but then with probability $1/4$ it sends on one of the report channels $\Rchannel_k$ in round $r+1$. Also with probability at least $1/2$, $v$ does not act on one of the active channels $\Achannel_i$ in round $r$ and therefore either gets knocked out or is also in state \Astate in round $r+1$, where it listens on some report channel with probability at least $1/2$. Thus, with probability at least $1/16$, in round $r' \in [r,r+1]$, $s$ sends on a report channel and $v$ listens on a report channel. If that is the case, since $|W_{r'} \cap N^1(v)| \leq n_\Rchannel$, with probability at least $\frac{1}{en_\Rchannel}$, $v$ and $s$ act on the same channel $\Rchannel_k$ while no other neighbor of $v$ in $W_{r'}$ operates on $\Rchannel_k$. Once that happens, $v$ moves to state $\Estate$.\\
    Thus let us now assume that $v$ does not neighbor an MIS node, but at least one good leader $l$. With similar reasoning, there is a round $r' \in [r,r+7]$ in which $l$ broadcasts on some channel $\Rchannel_k$ as a apart of its \rbg, unless $l$ transitioned to state \Mstate or $\Estate$---the first case we already covered, the second does not happen with probability at least $1/2$. With probability at least $2^{-7}$, $v$ is still or again in state $\Astate$ in that round $r'$. With similar argumentation as above, $v$ then gets $l$'s message with probability at least $\frac{1}{2en_\Rchannel}$. In total, in each $8$-round interval, with probability at least $\frac{1}{2^8en_\Rchannel}$, $v$ receives a message from $l$.\\
    If $v$ does not neighbor a good leader nor an MIS node, but at least one good herald $h$, then the same logic applies for an $8$-round interval, unless $h$ gets knocked out---which only happens if its leader joins the MIS or another MIS node is created nearby. 
    But, with probability at least $1/2$, $h$ nor its leader gets knocked out by a neighboring MIS node. If its leader joined the MIS, then just a new MIS node has been created in $N^2(v)$, which can happen at most $\alpha^2$ times and therefore delay $v$ to hear from any of its neighbors by at most $\bigO(\alpha^2)$ rounds during the whole execution of the algorithm. For simplicity we ignore those $\bigO(1)$ rounds. 
    As above, with probability at least $\frac{1}{2^8en_\Rchannel}$, $v$ hears from $h$ in any $8$-round interval.
  \end{itemize}
  In total we get that for every $2^{11}e n_\Rchannel$ rounds that $v$ spends in $\Astate \cap N_r$, in expectation it hears at least once from one of its neighbors in $W_r$.
  At this point we fix $\actm$ to be eight times as large, i.e., $2^{14}e n_\Rchannel$. Note that $\actm$ is a constant depending on $\alpha$ only, via $n_\Rchannel$.
  \begin{itemize}
    \item[$v \in \Hstate \cup \Lstate$:] As long as $v$ is in one of these states, $v$ hears from the node it partnered up with every $8$ rounds as long as the pair remains. In the algorithm it is accounted for that by decreasing $\activity(v)$ every round by a factor of $\increaseactivity^{20}$.
    \item[$v \in \Lstate' \cup \Hstate'$:] 
    If $v$ was already in $\Lstate'$ or $\Hstate'$ when it joined $N_r$, then it might manage to finish its handshake, in which case we refer to the previous case. Otherwise, if its handshake gets disrupted, then $v$ returns to state \Astate at latest $6$ rounds later, which we also already covered. This also implies, that a node switching between states $\Astate$ and $\Hstate'$ or $\Lstate'$ spends at least one eight of these rounds in $\Astate$.
  \end{itemize}
  We get that $v$ hears from one of its neighbors in $W_r$ at least once every $\actm$ rounds, irrelevant of its own state.
  Recalling that $\decreaseactivity=\increaseactivity^{20 \actm}$, $v$ reduces its activity each $8$ rounds by a factor of at least $\increaseactivity^{152}$ in expectation. 

  \hide{
    \sebastian{this should be reworked so it fits the style used to show that neighboring a leader leads to messages heard.}
    Let us first assume that $v$ neighbors at least one MIS node $s$. In any $8$-round interval $[r',r'+7]$, with probability at least $1/4$ there is a round $r$ in which $s$ broadcasts on some channel $\Rchannel_i$ and by choice of $n_\Rchannel$ and the bounded independence property, with probability at least $1/e$, $s$ is the only node in $v$'s neighborhood to broadcast on that channel. If $v \in \Astate_r$, then with probability at least $1/2n_\Rchannel$, it listens on the channel $\Rchannel_i$, i.e., it hears from $s$ with constant probability. 
    If $v \in Lstate_{r'}$, then with constant probability round $r$ is the one in which $v$ is listening for a message from its herald. As with the report channel $\Rchannel_i$, with constant probability $s$ chooses to broadcast in round $r$ on channel $\Gchannel$, disrupting $v$'s \rbg and sending $v$ back to state \Astate, the case we analyzed earlier.
    Similarly, if $v \in Hstate_{r'}$, then with constant probability round $r$ is the one in which $v$ is listening for a message from its leader on some channel $\Rchannel_i$, which, with probability $1/n_\Rchannel$ is also chosen by $s$ in round $r$ for a broadcast.
    If $v \in Hstate'$ or $v \in Lstate'$, then $s$ either disrupts $v$'s handshake protocol, or $v$ manages to move to state $\Lstate$ or $\Hstate$, reducing the problem to an earlier scenario. In all cases, in expectation within $\bigO(n_\Rchannel)$ rounds, $s$ eliminates $v$. 
    The expected total number of rounds in which some non-eliminated node neighbors an MIS node is in $\bigO(|W_{\bar{r}+\tthreshold_\progress}| n_\Rchannel)$. A simple Chernoff bound shows that, \whp, this is in $\bigO(\log n) = \bigO(\tthreshold_\progress)$.

    Let us therefore assume that $v$ does not neighbor an MIS node. If $v$ is part of a good pair, then with probability $1$ it hears a message from its partner every $8$ rounds. For simplicity we account for that by letting $v$ reduce its activity by a factor of $\increaseactivity^{8}$ every round, which corresponds to a reduction of $\decreaseactivity$ every $\bigO(n_\Rchannel)$ rounds, see Algorithm \ref{algo:heraldRGB}. Therefore assume that $v$ is not part of a good pair.
  }
  
  \hide{
    \sebastian{by definition of $N_{r'}$, the next part is stupid. $v$ cannot be part of a good pair. why did i wrote this? think about it another time....}
    \TODO{Initially, consider the case of $v$ being part of a good pair. If it is the leader, then within $\tthreshold_\redblue \ll \tthreshold_\progress$ it joins the MIS or one of its neighbors does, by Lemma \ref{lemma:goodpair}. If $v$ is the herald, then the same is true, except that the MIS node can be created in distance $2$. This, however, can happen at most $\alpha^2$ times, and by choosing $\tthreshold_\progress \gg 100\alpha^2 \tthreshold_\redblue$, $v$ is a good herald or good leader for at most $0.01\tthreshold_\progress$ rounds in $[r+1,r+\tthreshold_\progress]$. Thus assume, that $v$ is not part of a good pair.}
  }

  \hide{
    Let $v$ neighbor a good leader $l$ and let $l$ broadcast in some round $r\geq \bar{r}+5$ on a report channel $\Rchannel_i$, i.e., $l$ is in the $6$th round of its respective \rbg. By the definition of a good pair, $v$ has to be in one of the states \Astate, $\Lstate'$ or $\Hstate'$
    at the beginning of round $r':=r-4$.
    If it is in $\Astate_{r'}$, with probability at least $1/16$ it stays in state \Astate until round $r$. If $v$ is in $\Lstate'_{r'}$, then it returns to state $\Astate$ at the end of round $r-4$ or $r-3$ (since $l$ disrupts $v$'s handshake protocol), and with probability at least $1/8$ it also stays in \Astate. If it is in $\Hstate'_{r'}$, it latest returns to state $\Astate'$ at the end of round $r-1$ due to its failing handshake. Therefore, with probability at least $1/16$, $v$ is in $\Astate_{r}$, no matter the initial state of $v$ in round $r'$.
    With probability at least $1/2$, $v$ decides to listen on a report channel in round $r$, with probability $1/n_\Rchannel$ it chooses the same channel as the one $l$ is broadcasting on and with probability at least $1-e^{-n_\Rchannel/(3\alpha^2)} \geq 1/2$, $l$ is the only neighbor of $v$ in $W_{r}$ that broadcasts in this round on channel $\Rchannel_i$. This holds true for every $8$-round interval, and thus, in expectation, $v$ hears of at least one of its neighboring good leaders every $\actm := \bigO(n_\Rchannel)$ rounds.

    Assume $v$ does not neighbor any good leader, but at least one good herald $h$. Unlike with good leaders, however, good heralds can neighbor bad leaders. We therefore also have to look at the case that $v$ is part of a bad pair $(v,h')$.
    First, though, if $v$ is not a bad leader, the same logic as above applies for each $8$-round interval, as leaders and heralds basically use the same mechanism to inform their neighbors.
    If $v$ is a bad leader, then, in expectation, $v$ stays a bad leader for at most $16$ rounds---each $8$-round interval with probability at least $1/2$, $v$ gets knocked back to state \Astate, because $h$ interferes with the message from $v$'s herald $h'$. Conferring Algorithm \ref{algo:heraldRBG}, during the time $v$ is a bad leader, it reduces its activity by $\increaseactivity^{-8}$ each round.
    After $v$ stops being a bad leader, the analysis is a reduction to a known case.


  }


  \paragraph{Third part.}
  Let $r_i$ be the round in which the $i$th MIS node or good \lhp in $N^\delta(u)$ is created and we talk of event $\eEvent_{r_i}$; if more than one is created in a single round, we ignore additional ones. 
  In round $r_i$ all nodes $v \in N_{r_i}$ have $\activity(v)\leq 1/2$.
  Let us assume that no other good pair or MIS node is created for $\Omega(\loglogn)$ rounds, as otherwise we assume all rounds between $r_i$ and $r_{i+1}$ as \emph{violated}---more details on that in part four. We want to show that all nodes $v \in N_{r_i}$ decrease their activity quickly. For as long as $v$ is part of a pair, this reduction is guaranteed by design of the algorithm, so we also only consider rounds in which $v$ is not part of a pair. For some arbitrary constant $c$, in the next $c \loglogn$ rounds $v$ can increase its activity by at most a factor of $\increaseactivity^{c \loglogn} = \Theta(\polylog n)$. Let $D_v$ be the random variable that counts the number of times $v$ receives a message of one of its neighboring good leaders or heralds in those $c \loglogn$ rounds. In expectation, $D_v$ increases at least every $\actm$ rounds by one. With Chernoff we get
  \begin{equation}
    \label{eq:lemmarepression:thirdpart}
    \Pr\left(D_v \leq \Bigbrackets{1-\frac{1}{2}} \frac{c \loglogn}{\actm}\right) \leq e^{-\frac{c\loglogn}{4\actm}}
  \end{equation}
  Recall that $|N_{r_i}| \leq \log^6 n$. We choose $c \coloneqq 100 \actm\geq (20+6)\cdot 4\cdot \ln 2 \cdot\actm$. If $D_v > \frac{c\loglogn}{2\actm}$ as in (\ref{eq:lemmarepression:thirdpart}), then $\activity(v)$ decreases enough in those $c\loglogn$ rounds to ``touch'' $\activity_\textmin$ at least once.
  Now we can make a union bound over all nodes in $N_{r_i}$ and we get that with probability 
  $1-\frac{1}{\log^{20} n}$ 
  \emph{all} nodes in $N_{r_i}$ touch $\activity_\textmin$ at least once in those $c \loglogn$ rounds.



  \hide{
    Let $v$ be such a node, whose activity value touched $\activity_{min}$ in some round $r'$. We claim that for each such round $r'$, the probability for $v$ to next increase its activity value back to $\activity_\low$ before it hits $\activity_{min}$ again is $o(\log^{-y} n)$. 
    \sebastian{missing: what is $y$? deduct from $\increaseactivity$, for which a value at about $exp(m_\textmin / 1000\actm)$ should suffice}
    For that event to occur, $v$ needs to increase its activity value for at least 
    $$T_{\textit{return}}
    :=\log_{\increaseactivity} \frac{\activity_\low}{\activity_{min}} 
    = \frac{\log (\log^{3\ttkdelta} n)}{\log \increaseactivity}
    = \frac{2000\actm}{3\ttkdelta} {3\ttkdelta}\loglogn
    = \Theta(\loglogn)$$
    times. Also, the amount of incrementing rounds divided by the amount of decrementing rounds needs to be at least $c_\textred=\log_{\increaseactivity} \decreaseactivity  = 16\actm =\bigO(1)$. 
    With $T \geq T_{\textit{return}}$ define a time interval $[r',r'+T]$ starting at some round $r'$ and let 
    $m=\lfloor\frac{T}{8}\rfloor$
    be the number of $8$-round intervals within $[r',r'+T]$.
    For each interval of length $8$, $v$ has a probability of $p\geq \frac{4}{\actm}$ to hear from a neighboring good leader, good herald or MIS node. Let $X=\sum_{i=1}^m X_m$ be the random variable that counts those events; clearly, $\E[X]\geq 4\frac{m}{\actm}$. By Chernoff we have
    \begin{equation}
      \Pr(X \leq \frac{1}{2} \frac{4m}{\actm}) \leq e^{-\frac{1}{2}\frac{m}{\actm}} \leq e^{-\frac{T_{\textit{return}}}{16\actm}} \leq \frac{1}{\log^{-100} n} 
    \end{equation}
    Thus, with decent probability, $X > \frac{T}{4\actm}$. With $c_\textred \gg 4\actm$ we get that with probability $1-o(\frac{1}{\log^{-100} n})$ the number of incrementing rounds does not surpass the number of decrementing rounds by a factor of $c_\textred=16\actm$, which is necessary for $v$ to increase its activity to $\activity_\low$.

    Recall that $\tthreshold_\progress = \bigORT$. With a union bound over all possible rounds $r'$ in $[r, r+\tthreshold_\progress]$ and all nodes $v \in N_{r'}$ we get that no node while it neighbors a good pair and hits $\activity_{\textmin}$, recovers its activity above $\activity_\low$, with probability $1-\log^{-90} n$. 
  }

  \paragraph{Fourth part.}
  Let us now count the amount of rounds $r$ in which $I(u,r)$ does not hold. 
  We keep the definition of rounds $r_1 < r_2 < \dots < r_k$ and of events $\eEvent_{r_i}$.
  Due to the definition of good pairs, there can be no more good heralds in $N^{\delta+1}(u)$ than there are good leaders in $N^{\delta+2}$. Once being part of a good pair, a node can only stop being good by being knocked out by an MIS node, which then prevents the creation of new leaders, heralds and MIS in its neighborhood. MIS nodes in distance $\delta+3$ can still influence good leaders and heralds in $N^{\delta+1}(u)$. However, no more than $\alpha$ good leaders (and their corresponding good heralds) and no more than $\alpha^2$ good heralds (and their corresponding good leaders) can be knocked back to state \Astate by an MIS node. 
  Therefore, no more than $2\alpha^{\delta+5} = \bigO(\sqrt{\log n})$ such events $\eEvent_{r}$ can happen, i.e., $k = \bigO(\sqrt{\log n})$. 


  We split the interval $[\bar{r},\bar{r}+\tthreshold_\progress]$ into $\ell$ smaller intervals $(J_i)_{1 \leq i \leq \ell}$ of length $c\loglogn$ each, i.e., $\ell=\frac{\tthreshold_\progress}{c\loglogn}$. Then we color each such interval $J_i$ \emph{red} if it contains one of the events $\eEvent_{r}$ and we color $J_i$ \emph{orange} if its preceding interval $J_{i-1}$ contains such an event, but not $J_i$ itself. All other intervals are colored \emph{blue}. From the third part we know the following. Independently of the activity values of all nodes in $N_{r}=N(W_{r})\setminus W_{r}$ in some round $r$, if no event $\eEvent_{r'}$ happens in round $r' < r+c\loglogn$, then with probability $1-\log^{-20} n$ all those nodes touch $\activity_\textmin$ within the next $c\loglogn$ rounds after round $r$. I.e., in every orange and blue interval, this is likely to happen at least once. If this happens in some orange or blue interval $J_i$, let us call this event $\mEvent_i$. Any node that touches $\activity_\textmin$ in interval $J_i$ cannot recover its activity by a factor higher than
  $$
  \increaseactivity^{c\loglogn} 
  = 2^{\frac{24 c \loglogn}{2000 \actm}} 
  = \log^{\frac{24\cdot100\actm}{2000 \actm}} n
  =
  \log^{1.2} n,
  $$
\hide{
  $$
  \increaseactivity^{c\loglogn} 
  = 2^{\frac{6\ttkdelta c \loglogn}{2000 \actm}} 
  = \log^{\frac{6c\ttkdelta}{2000 \actm}} n
  \stackrel{c \geq (80+4\kappa')\actm\ln 2}{\leq}
  \log^{\frac{3\ttkdelta(320 \actm+80\actm)}{2000 \actm}}n
  \stackrel{\actc \geq 20}{\leq}
  \log^{0.6\ttkdelta} n,
  $$
}
  until the end of interval $J_{i+1}$.
  In other words, if $J_i$ is blue, then $J_{i-1}$ cannot be red and if $\mEvent_{i-1}$ happened, then $I(u,r)$ holds throughout the whole interval $J_i$. Let $\iEvent$ be the index set for all blue intervals. Since $k=\bigO(\sqrt{\log n})$, the total number of rounds in red and orange intervals are both in $o(\log n)$. Thus the number $\ell' \coloneqq |\iEvent|$ of blue intervals is in $(1-o(1))\ell$, i.e., $\ell' \geq (1-\eps/2)\ell$. We define random variables $X_i$ that evaluate to $1$ if $\mEvent_{i-1}$ does \emph{not} hold and to $0$ otherwise. Let $X \coloneqq \sum_{i \in \iEvent} X_i$, $p \coloneqq \log^{-20} n$, $\eps' \coloneqq \eps/2$ and $\tau_\progress \geq \frac{c}{\eps'(1-\eps')}\log n$, then
  \begin{eqnarray*}
    \Pr(X \geq \eps' \ell') 
    &\leq& 
    \binom{\ell'}{\eps' \ell'} p^{\eps' \ell'}
    \stackrel{\binom{n}{k} \leq \bb{\frac{en}{k}}^k}{\leq}
    \left(\frac{e \ell'}{\eps' \ell'}\right)^{\eps' \ell'} \log^{-20\eps' \ell'} n
    \\
    &\leq&
    2^{((\log \frac{e}{\eps'}) -20 \loglogn)) \eps' (1-\eps')\frac{\tthreshold_\progress}{c\loglogn}}
    \leq
    n^{-19}.
  \end{eqnarray*}
  Hence the number of intervals of length $c\loglogn$, in which no violation of $I(u,r)$ occurs, is \whp at least $(1-\eps)\ell$, which concludes the proof.
\end{proof}

  \hide{
    In total we remove at most $k$ intervals and thus $o(\log n)$ rounds. 
    Let $Y_i$ be the random variable that evaluates to $1$ if by the end of interval $J_i$ not all nodes neighboring good leaders or good heralds have touched $\activity_\textmin$. While the $Y_i$ are not independent, the probability $\Pr(Y_i=0|Y_{i-1}=x)$ is still at least $1-\log^{-c'+1}n$ for both $x=0,1$. 
    Therefore, we can upper bound $(Y_i)_{i \geq 1}$ by i.i.d. Bernoulli random variables $(\hat{Y}_i)_{i\geq 1}$ that evaluate to $1$ with probability $\log^{-c'+1$} and $0$ otherwise.
    \sebastian{if we go that route, we would never have needed to show that activity values stay low....}





    \sebastian{next sentence barely makes any sense. rephrase.}
    Using the probability results above for recovery after hitting $\activity_\textmin$, a Chernoff bound is enough to show that the condition $I(u,r')$ is broken at most $\bigO(\sqrt{\log n})$ times in between (for each $c\loglogn$ interval we count a violation only once) and that all those violations add only $o(\log n)$ further rounds to the cost.
  } 




\begin{lemma}
  \label{lemma:progress}
  Assume that (P) holds. Let $t_u$ be a round in which a node $u \notin N^1(\Mstate)$ has a neighbor $u' \notin N^1(\Mstate)$ in the \hfilter.
  Then, \whp, within $\tthreshold_\progress = \bigORT$ rounds a node in $N^1(\set{u,u'})$ joins the MIS.
\end{lemma}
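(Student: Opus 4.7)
My plan is to argue by contradiction. Suppose that in the window $W := [t_u, t_u+\tthreshold_\progress]$, with $\tthreshold_\progress = \Theta(\RT)$ chosen sufficiently large, no node in $N^1(\set{u,u'})$ joins the MIS. The goal is to force enough \emph{progress events}---creations of good \lhps with leader in $N^{\delta+1}(u)$, or of MIS nodes in $N^{\delta+2}(u)$---that $\alpha$-bounded independence inside $N^{\delta+\bigO(1)}(u)$ can no longer accommodate them, producing a contradiction.

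Call a round $r \in W$ \emph{promising for $u$} if $\activity(u,r)=\activity(u',r)=1/2$ and event $I(u,r)$ of Lemma~\ref{lemma:repression} holds. Applying Lemma~\ref{lemma:alwaysactive} to each of $u$ and $u'$ and Lemma~\ref{lemma:repression} to $u$, and union-bounding the three failure modes, yields \whp\ at least $(1-3\eps)\tthreshold_\progress$ promising rounds. In any such round one has $\Activity(u) \geq \activity(u)+\activity(u') \geq 1$, every node in $N^\delta(u)$ neighboring an already-existing good pair has activity at most $\activity_\low$, and every node neighboring an MIS node is eliminated.

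In each promising round, the chain argument behind~(\ref{eq:etachain})---iteratively passing from the current node to a neighbor with activity mass at least $\eta^{-1}$ times larger must terminate within $\delta$ steps before the mass would exceed $\Delta_\textmax$---produces an $\eta$-fat node $w \in N^\delta(u)$ with $\Activity(w) \geq 1$. The only clauses of Lemma~\ref{lemma:fatgivesherald} at $w$ not already guaranteed are the absence of herald candidates in $N^2(w)$ and the silencing of nodes neighboring \emph{bad} leaders or heralds; both fail only rarely, because herald-candidate creation in a constant-radius neighborhood has per-round probability $\bigO(\pi_\ell)$ by Lemma~\ref{lemma:wcpnoherald}.(\ref{lemma:wcpnoherald:1}) and bad \lhps are extinguished within $\bigO(1)$ expected rounds by Lemma~\ref{lemma:extinction}. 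Choosing $\pi_\ell$ small, I may shave off a further $\bigO(\eps)\tthreshold_\progress$ rounds and restrict to \emph{truly promising} rounds on which case~(a) or case~(b) of Lemma~\ref{lemma:fatgivesherald} applies to $w$; in each such round, with probability $\Omega(\pi_\ell)$, a good pair with leader in $N^1(w) \subseteq N^{\delta+1}(u)$ or an MIS node in $N^2(w) \subseteq N^{\delta+2}(u)$ arises within $7$ rounds.

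The main obstacle is the final counting/Chernoff step, because the target neighborhood $N^1(\set{u,u'}) \subseteq N^2(u)$ is much smaller than $N^{\delta+2}(u)$. I would partition the truly promising portion of $W$ into disjoint $7$-round blocks and, after each successful block, skip $\bigO(\loglogn)$ recovery rounds during which the newly created good pair silences its neighborhood via the activity-decay rule of Section~\ref{subsec:redblue}, so that the conditioning of subsequent blocks survives. The surviving blocks number $\Omega(\tthreshold_\progress/\loglogn)$ and succeed independently with probability $\Omega(\pi_\ell)$, so by a Chernoff bound \whp\ $\Omega(\tthreshold_\progress/\loglogn)$ progress events occur. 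By $\alpha$-bounded independence, however, the leaders of simultaneously alive good pairs in $N^{\delta+1}(u)$ form an independent set of size at most $\alpha^{\delta+1} = \bigO(\sqrt{\log n})$, each of which can be ``retired'' only through the creation of a neighboring MIS node in $N^{\delta+2}(u)$, of which there can be at most $\alpha^{\delta+2} = \bigO(\sqrt{\log n})$; so the total number of progress events that can occur \emph{without} an MIS node entering $N^1(\set{u,u'})$ is $\bigO(\sqrt{\log n})$. Fixing $\tthreshold_\progress = \Theta(\RT)$ so that $\tthreshold_\progress/\loglogn \gg \sqrt{\log n}$ contradicts the combinatorial bound, forcing a node in $N^1(\set{u,u'})$ to join the MIS within $\tthreshold_\progress$ rounds and completing the proof.
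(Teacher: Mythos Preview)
Your approach mirrors the paper's: contradiction via Lemmas~\ref{lemma:alwaysactive} and~\ref{lemma:repression} to secure $(1-\bigO(\eps))\tthreshold_\progress$ promising rounds, then repeated application of Lemma~\ref{lemma:fatgivesherald} at an $\eta$-fat node in $N^\delta(u)$, and finally a counting argument against the $\bigO(\sqrt{\log n})$ combinatorial bound on good-pair/MIS creations in $N^{\delta+\bigO(1)}(u)$.

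There is, however, one concrete gap. Your chain argument yields an $\eta$-fat $w$ with $\Activity(w)\geq 1$, but this alone does not place $w$ in case~(a) or~(b) of Lemma~\ref{lemma:fatgivesherald}. If the chain has length at least one, then indeed $\Activity(w)\geq\eta^{-1}=\alpha^8\geq 5\alpha$ and case~(b) applies. But if the chain has length zero, i.e.\ $w=u$, you only know that $u$ is $\eta$-fat; since $\eta=\alpha^{-8}<\frac{1}{5\alpha}$, $\eta$-fatness is strictly \emph{weaker} than $\frac{1}{5\alpha}$-fatness, so case~(a) is not guaranteed even with $\activity(u)=\tfrac12$. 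The paper closes this by first checking whether $u$ itself satisfies case~(a): if so, the resulting good pair lies in $N^1(u)$ and one finishes directly; if not, then either $\Activity(u)\geq 5\alpha$, or $u$ (being not $\frac{1}{5\alpha}$-fat with $\Activity(u)\geq 1$) has a neighbor $v$ with $\Activity(v)>5\alpha$, and in either sub-case the chain, started from $u$ respectively $v$, terminates at some $w$ with $\Activity(w)\geq 5\alpha$, i.e.\ case~(b).

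A second, smaller issue: your ``shave off a further $\bigO(\eps)\tthreshold_\progress$ rounds'' step for herald candidates appeals to Lemma~\ref{lemma:wcpnoherald}.(\ref{lemma:wcpnoherald:1}) around a fixed center, but $w$ drifts from round to round within $N^\delta(u)$, and a candidate surviving from an earlier round need not have been created near the current~$w$; a union bound over all of $N^{\delta+2}(u)$ would cost a $\sqrt{\log n}$ factor that cannot be absorbed into the constant $\pi_\ell$. The paper sidesteps this by invoking the remark following Lemma~\ref{lemma:fatgivesherald} that its first two preconditions can be relaxed and restored \wcp within $\bigO(1)$ rounds, rather than verifying them globally per round.
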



\begin{proof}
  The constant $\tthreshold_\progress$ is chosen by the end of this proof, and it is in $\bigORT$. Initially we only set it to be large enough to comply with Lemma \ref{lemma:alwaysactive} and Lemma \ref{lemma:repression}.
  Assume that the statement is not true, i.e., no node in $N^1(\set{u,u'})$ joins the MIS in the given time bounds. 

  
  Both nodes $u$ and $u'$ stay competitive, i.e., they stay in the states $\Astate$, $\Hstate'$, $\Lstate'$, $\Hstate$ or $\Lstate$, as states \Mstate and \Estate would clearly imply the creation of an MIS node in $N^1(\set{u,u'})$.
%
  We now apply Lemma \ref{lemma:alwaysactive} to both nodes to get
  \begin{align}\label{eq:uuprimeactive}
    \left|\set{t' \in [t_u+1, t_u+\tthreshold_\progress]: \activity(u,t')=\activity(u',t')=1/2}\right| \quad\geq\quad (1-\eps) \tthreshold_\progress .
  \end{align}
  

  We also apply Lemma \ref{lemma:repression} to extend (\ref{eq:uuprimeactive}) to
  \begin{equation}\label{eq:numbergoodrounds}
    \left|\set{t' \in [t_u+1, t_u+\tthreshold_\progress]: \activity(u,t')=\activity(u',t')=1/2 \wedge I(u,t') \text{ holds}}\right| \quad \geq\quad (1-2\eps) \tthreshold_\progress .
  \end{equation}

  We call rounds $r$ in which $\activity(u)=\activity(u')=1/2$ and $I(u,r)$ holds, a \emph{promising round}. In such a round clearly $\Activity(u) \geq 1$. 

  We first look at the case in which a round is promising, 
  $\Activity(u) < 5\alpha$ and $u$ is $\frac{1}{5\alpha}$-fat. By Lemma \ref{lemma:fatgivesherald} with probability $\Omega(\pi_\ell)$ within $7$ rounds either a good pair or an MIS node is created nearby, an event that would contradict our assumption. Hence, as long as $\Theta(\pi_\ell^{-1}\log n)$ such rounds appear, \whp such a contradicting event occurs. We choose $\tthreshold_\progress$ sufficiently large such that $\eps \tthreshold_\progress$ such rounds would cause a contradiction. 

  W.l.o.g. we thus assume that within $\tthreshold_\progress$ rounds, a $(1-3\eps)$ fraction of these are promising, but it does not hold that both $\Activity(u) < 5\alpha$ and $u$ is $\frac{1}{5\alpha}$-fat. For simplicity we exclude these cases from our definition of a promising round.

  \newcommand{\prom}{\textit{prom}}


  In all promising rounds, in distance $\delta$ there must be a node $w$, for which $\Activity(w) \geq 5\alpha$ and $w$ is $\eta$-fat. This is obvious if $\Activity(u) \geq 5\alpha$, as then either $u$ is already that node, or there exists a chain of activity sums, which increase in each step by a factor of at least $\eta^{-1}$. By definition of $\eta$, such a chain has length at most $\delta$. If $\Activity(u) < 5\alpha$, then $u$ cannot be $\frac{1}{5\alpha}$-fat by our renewed definition of a promising round. But since $\Activity(u) \geq 1$, $u$ then must have a neighbor $v$ for which $\Activity(v) \geq 5\alpha$ and we can apply the same logic as before to find that $\eta$-fat node $w$.


  For such a node $w$ the requirements for Lemma \ref{lemma:fatgivesherald} hold.

  Now let $t_{1,\prom}$ be the first promising round, $t_{2,\prom}$ the second and so on.

  Let $T_{1,\prom}$ be the random variable that counts the number of promising rounds $t_{i,\prom}$ until the first time a good pair or an MIS node is created in $N^\delta(u)$ or \emph{has been} created in a non-promising round between $t_{i-1,\prom}$ and $t_{i,\prom}$ (we denote such an event as $A$). I.e., 
  \[
  T_{1,\prom} := \min\set{i: \text{ A happens in }(t_{i-1,\prom},t_{i,\prom}]}.
  \]
  $T_{j,\prom}$ for $j>1$ is defined accordingly. By Lemma \ref{lemma:fatgivesherald} event $A$ happens in expectation at least every $(c_1\pi_\ell)^{-1}$ promising rounds for some constant $c_1$ depending only on $\eta$. 

  Let $M_u$ be the random variable that counts the number of events $A$ happening in $N^\delta(u)$ until one happens in $N^1(\set{u,u'})$, i.e., until either a good herald or good leader is in $N^1(\set{u,u'})$. The number of good leaders and the number of MIS nodes that can co-exist in $N^\delta(u)$ is at most $2\alpha^\delta \leq 2\sqrt{\log n}$. If a node stops being a good leader, then it must have been knocked out by an MIS node, which still limits the total number of these events to $3\sqrt{\log n}$.

  Note also that usually after an event $A$ there is an $\Omega(\loglogn)$ pause until the next promising round happens, because it takes at least that many rounds for good pairs to reduce a neighbor's activity beneath the threshold $\activity_\low$, if that neighbors activity level is in $\Omega(1)$ by the time $A$ happens. We account for that by ``paying'' $c_3\loglogn$ rounds for each such event, for some constant $c_3$.

  Now, using a Chernoff bound we get that $\sum_{i=1}^{M_u} T_{i,\prom} \leq 6\sqrt{\log n} \bb{\frac{1}{c_1\pi_\ell}+c_3\loglogn} + c_2 \log n$, \whp, for some constant $c_2$.
  With suitably large chosen $\tthreshold_\progress$, this is less than $(1-3\eps) \tthreshold_\progress$, causing the last contradiction to our assumption and thus finishing the proof.
\end{proof}



\begin{lemma}
  \label{lemma:safety}
  \Whp, property (P) is not violated throughout the \runtime of the algorithm.
\end{lemma}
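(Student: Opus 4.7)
The plan is to prove this by induction on time. Let $t^\star$ denote the first round at which (P) fails, so that the induction hypothesis supplies (P) on $[0, t^\star-1]$, legitimately putting Lemmas~\ref{lemma:domination}, \ref{lemma:adjpairs}, \ref{lemma:extinction} and \ref{lemma:goodpair} at our disposal on that interval. At $t^\star$, two neighbors $u,v$ are simultaneously in \Mstate; let $t_u \leq t_v = t^\star$ be their entry rounds. Since each node enters \Mstate via exactly one of two routes---the \emph{loneliness route} (counter $\lonely$ hits $\tthreshold_\lonely$) or the \emph{leader route} (a leader completes $\tthreshold_\redblue$ rounds of the \rbp)---there are at most four cases.

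First I will pin down $t_u$ and the route by which $v$ joined by applying Lemma~\ref{lemma:domination} to $u$'s join event with $w = v$. At every round $t'\in [t_u, t^\star]$ in which $v$ is not in the ``protected'' window---i.e., $v \in \Lstate \cup \Hstate$ with more than $\tfrac{9}{10}\tthreshold_\redblue$ of its \rbg elapsed---the lemma forces $v \in \Estate$ by round $t' + \tthreshold_\notif$, and \Estate\ is absorbing. So $v$ must occupy the protected window throughout $[t_u, t^\star - \tthreshold_\notif]$; because that window lasts at most $\tthreshold_\redblue/10$ rounds, I get $t^\star - t_u \leq \tthreshold_\redblue/10 + \tthreshold_\notif \ll \tthreshold_\lonely$ and, in particular, $v$ cannot have joined via loneliness: such a join happens from outside $\Lstate \cup \Hstate$ (since entering \Lstate\ or \Hstate\ resets $\lonely$, and those sojourns last at most $\tthreshold_\redblue \ll \tthreshold_\lonely$ rounds), which would immediately activate Lemma~\ref{lemma:domination}. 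Hence $v$ joined via the leader route, started its \rbg in round $r_v := t^\star - \tthreshold_\redblue$, and $t_u \geq r_v + \tfrac{9}{10}\tthreshold_\redblue$.

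Next I split on how $u$ joined. If $u$ also joined via the leader route, then $u$ started its \rbg at $r_u := t_u - \tthreshold_\redblue$, and on the overlap $[r_v, t_u]$ of length at least $\tfrac{9}{10}\tthreshold_\redblue$ the neighbors $u, v$ are both in \Lstate, so Lemma~\ref{lemma:adjpairs} forces $r_u = r_v$ and thus $t_u = t^\star$. But then $(u, h_u), (v, h_v)$ constitute a bad pair throughout the $\tthreshold_\redblue = \Theta(\log n)$ shared \rbg rounds, and iterating Lemma~\ref{lemma:extinction} across the $\Omega(\log n)$ disjoint $16$-round sub-windows yields, by a Chernoff bound, that w.h.p.\ at least one of $u, v$ is knocked back to \Astate\ before $t^\star$, contradicting that both complete the \rbg. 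If instead $u$ joined via loneliness, $\lonely$ reached $\tthreshold_\lonely$ at $t_u$, so $u$ received no message in $[t_u - \tthreshold_\lonely, t_u]$; yet throughout the sub-window $[r_v, t_u]$ of length $\Omega(\log n)$ node $v$ was a leader broadcasting on $\Rchannel_\rendezvous$ in round $6$ of every 8-round \rbg block. Using the channel-occupancy arithmetic already deployed in the second part of the proof of Lemma~\ref{lemma:repression} (together with the degree bound $\Delta_\textmax = O(\polylog n)$ from Lemma~\ref{lemma:dfiltermaxdegree} and the choice $n_\Rchannel = \Theta(\alpha^2)$), each such block in which $u$ is in \Astate\ delivers $v$'s broadcast to $u$ with probability $\Omega(1/n_\Rchannel)$; a Chernoff bound over $\Omega(\log n)$ such blocks shows that w.h.p.\ $u$ hears $v$ at least once, resetting $\lonely$ and contradicting its loneliness-join.

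The main obstacle is the loneliness subcase just above: I must preclude the scenario in which $u$ sits throughout $[t_u - \tthreshold_\lonely, t_u]$ in $\Lstate'/\Hstate'/\Lstate/\Hstate$ with a listening schedule that systematically avoids $v$'s broadcasts. I will handle this by observing that each sojourn of $u$ in any of those states lasts at most $\tthreshold_\redblue$ rounds and must end either by a successful transition to \Mstate\ (excluded by the ``first violation'' choice) or by an exit that explicitly resets $\lonely$; thus, if $\lonely$ genuinely reaches $\tthreshold_\lonely$, then $u$ sits in \Astate\ for at least $\tthreshold_\lonely - O(\tthreshold_\redblue) = \Omega(\tthreshold_\lonely)$ rounds of the silent window, which is ample for the Chernoff bound above to apply. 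A final union bound over the $\poly{n}$ candidate triples $(u, v, t^\star)$ preserves the \whp\ conclusion.
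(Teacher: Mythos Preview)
Your overall strategy---pick the first violation, use Lemma~\ref{lemma:domination} to pin down $t^\star-t_u$ and force $v$ onto the leader route, then split on $u$'s route---is clean, and your leader--leader and loneliness--leader subcases are essentially the paper's cases (A) and (C). However, there is a genuine gap in your elimination of the ``$v$ via loneliness'' possibility.

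Your argument runs: if $v$ is unprotected at $t'\in[t_u,t^\star-\tthreshold_\notif]$, Lemma~\ref{lemma:domination} puts $v$ in \Estate by $t'+\tthreshold_\notif\le t^\star$, a contradiction. This is fine, but it only bites when the interval $[t_u,t^\star-\tthreshold_\notif]$ is nonempty, i.e.\ when $t^\star-t_u\ge\tthreshold_\notif$. When $t^\star-t_u<\tthreshold_\notif$, applying Lemma~\ref{lemma:domination} at $t'=t_u$ merely promises $v\in\Estate$ by $t_u+\tthreshold_\notif>t^\star$, which does \emph{not} prevent $v$ from slipping into \Mstate at $t^\star$ first (and note the lemma itself assumes (P), which you lose precisely at $t^\star$). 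Hence the sentence ``which would immediately activate Lemma~\ref{lemma:domination}'' does not close the case. Concretely, nothing in your proof rules out both $u$ and $v$ hitting $\tthreshold_\lonely$ within $\tthreshold_\notif$ rounds of one another.

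The paper closes exactly this hole via Lemma~\ref{lemma:progress}, which you do not invoke. If $t_2-t_1<\tthreshold_\lonely/2$, then during $[t_1-\tthreshold_\lonely/2,t_1]$ both nodes sit in the \hfilter without an adjacent MIS node (the minimality choice of the violating pair excludes an earlier MIS neighbor), and Lemma~\ref{lemma:progress} forces an MIS node in $N^1(\{u_1,u_2\})$ by time $t_1-\tthreshold_\lonely/2+\tthreshold_\progress\ll t_1$; Lemma~\ref{lemma:domination} then has $\gg\tthreshold_\notif$ rounds to eliminate one of $u_1,u_2$ before $t_1$. Your decomposition can be salvaged, but it needs this appeal to Lemma~\ref{lemma:progress} (or an equivalent direct argument that two adjacent nodes in \Vhf cannot both run their lonely counters to $\tthreshold_\lonely$ without communicating) to cover the small-gap both-loneliness scenario.
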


\begin{proof}
  Assume in round $\bar{t}$ it happens for the first time that $2$ neighboring nodes $u_1$ and $u_2$ are in $\Mstate$. W.l.o.g., we let $u_1$ enter the MIS before $u_2$ or at the same time and denote those times correspondingly $t_1 \leq t_2 = \bar{t}$. Additionally, for all pairs of nodes that violate (P) in round $t_2$ we choose the pair for which $t_1$, the 'age' of the older MIS node, is minimized, and ties among candidates for those positions are broken by \IDs, first for $u_1$, then for $u_2$.

  There are two ways for a node in $\Astate$ to join the MIS: either by increasing its $\lonely$ counter to $\tthreshold_\lonely$ or by passing through stages $\Lstate'$ and $\Lstate$.

  \begin{enumerate}[(A)]
    \item 
    Let us first assume that both nodes $u_1$ and $u_2$ enter state \Mstate\ by finishing their respective \rbgs. If $t_1 < t_2$, then $u_1$ started its first \rbg before $u_2$ did, but by Lemma \ref{lemma:adjpairs}, they cannot be in state $\Lstate$ at the same time. Thus, $u_1$ joins the MIS before $u_2$ becomes a leader. But as a leader, $u_1$ disrupts handshake channel \Hchannel\ every second round, and then as an MIS node, it does so at least every second round, preventing $u_2$ from ever becoming a leader. Therefore, let $t_1=t_2$. Then $u_1$ and $u_2$ were neighboring leaders for $\tthreshold_\redblue/8 = \Omega(\log n)$ \rbgs, and in each such game having a probability of at least $1/2$ to conflict with each others' \rbg. By Chernoff, \whp, this is not possible.
    \item
    Let us next assume that both nodes enter state \Mstate through their $\lonely$ counters. First assume that $t_2-t_1 \geq \tthreshold_\lonely/2 \geq \tau_\notif$. But then in $[t_1, t_2-1]$ (P) holds true and with Lemma \ref{lemma:domination}, $u_2$ gets eliminated by $u_1$ before it can become an MIS node. Thus, let $t_2-t_1 < \tthreshold_\lonely/2$. But then during rounds $[t_1-\tthreshold_\lonely/2,t_1]$ both nodes are in the active states $\Astate$, $\Hstate'$, $\Lstate'$, $\Hstate$ and $\Lstate$ and they do not neighbor an MIS node. The latter stems from the following. If $u_1$ would neighbor an MIS node $u_3$, then the first violation of (P) would happen in round $t_1$, a contradiction if $t_1 < t_2$. For $t_1=t_2$, nodes $u_1$ and $u_2$ contradict our choice of the pair being investigated, as $u_3$ would have had a higher age. Similar argumentation keeps $u_2$ from neighboring an MIS node. Hence, assume that none of both nodes neighbor an MIS node. 
    Lemma \ref{lemma:progress} tells us that by round $t_1-\tthreshold_\lonely+\tthreshold_\progress \ll t_1$, an MIS node $v$ would arise in $N^1(\set{u_1,u_2})$, with $v \neq u_1,u_2$. The remaining rounds, by Lemma \ref{lemma:domination} suffice for $v$ to eliminate $u_1$ or $u_2$ \whp, again contradicting our initial assumption.
    \item
    Now let $u_1$ join the MIS via loneliness and $u_2$ by being a leader. If $t_2-t_1 \geq \tthreshold_\redblue/2= \Omega(\log n)$, then in each $8$-round \rbg after round $t_1+1$, $u_1$ has a constant chance to disrupt $u_2$'s game on channel $\Gchannel$. Choosing $\tthreshold_\redblue$ large enough guarantees us that, \whp, this cannot happen. If $t_2-t_1 < \tthreshold_\redblue/2$, then in each $8$-round \rbg in $(t_2-\tthreshold_\redblue, t_1)$, there is a constant probability that $u_2$ transmits on a report channel $\Rchannel_i$ on which no other neighbor of $u_1$ transmits. The argumentation is similar to the one in Lemma \ref{lemma:repression}, except that $u_1$ does not neighbor an MIS node yet (see argumentation above w.r.t. to our choice of $u_1$ and $u_2$): 
    There are at most $2\alpha^2$ good leaders and heralds neighboring $u_1$, and \emph{all} bad leaders/heralds get knocked out with constant probability in each \rbg, see Lemma \ref{lemma:extinction}. At the same time $u_1$ listens on the same report channel $\Rchannel_i$ with probability at least $\frac{1}{2n_\Rchannel}=\Omega(1)$. Therefore, with $\tthreshold_\redblue$ large enough, \whp $u_1$ hears from $u_2$ before round $t_1$ and thus resets its $\lonely$ counter.
    \item
    Last switch $u_1$'s and $u_2$'s roles from the previous case. In this case, $u_2$ is neighboring $u_1$ throughout its leadership state, and with analogous argumentation, regardless whether $u_1 \in \Lstate$ or $u_1 \in \Mstate$, $u_2$ hears from $u_1$ with high probability.
  \qedhere
  \end{enumerate}
\end{proof}



Now we have everything at hand to prove Theorem \ref{thm:mis:maintheorem}.

\begin{proof}[\optionalproofparamLNCS of Theorem \ref{thm:mis:maintheorem}]
  As stated earlier, Lemma \ref{lemma:dfilterrunningtime} provides that the \runtime of the \dfilter is within $\bigO\bb{\misRTstrong}$, i.e., for a node $u$ executing Algorithm \ref{algo:decay} (the \dfilter) by that time one node $v \in N^1_G(u)$ enters the \hfilter. From Lemma \ref{lemma:dfiltermaxdegree} we get that over the course of $\bigO(\log^2 n)$ rounds the maximum degree of the graph $G'$ induced by all nodes in the \hfilter is at most $\bigO(\log^3 n)$.

  Let thus $u$ be a node that enters the \hfilter. If it stays lonely for $\tau_\lonely = \Theta\bb{\misRTstrong}$ rounds, then $u$ joins the MIS and we are done. Hence assume that $u$ does hear from a neighboring node $u'$ in the \hfilter before $\tau_\lonely$ rounds have passed. We can now apply Lemma \ref{lemma:progress} to get an MIS node $v$ created within $\tau_\progress=\bigO\bb{\misRTstrong}$ rounds. It either neighbors $u$, in which case within $\tau_\notif=\bigO(\log n)$ rounds $u$ is decided \whp, or it neighbors $u'$, which is also then eliminated in $\tau_\notif$ rounds. That way $u$ can become lonely again. However, since an MIS node has been created in $N^2(u)$, this can happen at most $\alpha^2$ times. Thus, at most $\tau_\itruntime = 2\alpha^2 \tau_\lonely$ rounds after $u$ entered the \hfilter, $u$ is decided.
\end{proof}

\hide{
\begin{proofsketch}[\optionalproofparamLNCS of Theorem \ref{thm:mis:maintheorem}]
  As stated earlier, the \runtime of the \dfilter is within $\bigO\bb{\misRTstrong}$, i.e., for every node $u$ in the \dfilter, by that time one node $v \in N^1_G(u)$ enters the \hfilter. We also know that over the course of $\bigO(\log^2 n)$ rounds the maximum degree of the graph $G'$ induced by all nodes in the \hfilter is at most $\bigO(\log^3 n)$.

  Let thus $u$ be a node that enters the \hfilter. If it stays lonely for $\tau_\lonely = \Theta\bb{\misRTstrong}$ rounds, then $u$ joins the MIS and we are done. Hence assume that $u$ does hear from a neighboring node $u'$ in the \hfilter before $\tau_\lonely$ rounds have passed. We can now apply Lemma \ref{lemma:progress} to get an MIS node $v$ created within $\tau_\progress=\bigO\bb{\misRTstrong}$ rounds. It either neighbors $u$, in which case within $\tau_\notif=\bigO(\log n)$ rounds $u$ is decided \whp, or it neighbors $u'$, which is also then eliminated in $\tau_\notif$ rounds. That way $u$ can become lonely again. However, since an MIS node has been created in $N^2(u)$, this can happen at most $\alpha^2$ times. Thus, at most $\tau_\itruntime = 2\alpha^2 \tau_\lonely$ rounds after $u$ entered the \hfilter, $u$ is decided.
\end{proofsketch}
}


\pagebreak
\bibliography{references}

\begin{thebibliography}{10}

\bibitem{802.11}
I.~802.11.
\newblock {Wireless {\sc LAN MAC} and Physical Layer Specifications}, March
  2012.

\bibitem{Zigbee}
Z.~Alliance.
\newblock Zigbee specification.
\newblock {\em ZigBee Document 053474r06}, 1, 2005.

\bibitem{ABI86}
N.~Alon, L.~Babai, and A.~Itai.
\newblock {A Fast and Simple Randomized Parallel Algorithm for the Maximal
  Independent Set Problem}.
\newblock {\em Journal of Algorithms}, 1986.

\bibitem{bgi}
R.~Bar-Yehuda, O.~Goldreich, and A.~Itai.
\newblock {On the Time-Complexity of Broadcast in Multi-Hop Radio Networks: An
  Exponential Gap Between Determinism and Randomization}.
\newblock {\em Journal of Computer and System Sciences}, 45(1):104--126, 1992.

\bibitem{Bluetooth}
Bluetooth Consortium.
\newblock {\em {Bluetooth Specification Version 4.2}}, December 2014.

\bibitem{censor:2011}
K.~Censor-Hillel, S.~Gilbert, F.~Kuhn, N.~Lynch, and C.~Newport.
\newblock {Structuring Unreliable Radio Networks}.
\newblock In {\em {Proc. ACM Symp. on Principles of Distributed Computing
  (PODC)}}, pages 79--88, 2011.

\bibitem{chlamtac:1985}
I.~Chlamtac and S.~Kutten.
\newblock {On Broadcasting in Radio Networks--Problem Analysis and Protocol
  Design}.
\newblock {\em IEEE Transactions on Communications}, 33(12):1240--1246, 1985.

\bibitem{podc2013}
S.~Daum, M.~Ghaffari, S.~Gilbert, F.~Kuhn, and C.~Newport.
\newblock {Maximal Independent Sets in Multichannel Radio Networks}.
\newblock In {\em {Proc. ACM Symp. on Principles of Distributed Computing
  (PODC)}}, 2013.

\bibitem{tr:podc2013}
S.~Daum, M.~Ghaffari, S.~Gilbert, F.~Kuhn, and C.~Newport.
\newblock {Maximal Independent Sets in Multichannel Radio Networks}.
\newblock Technical Report 275, University of Freiburg, Dept. of Computer
  Science, 2013.

\bibitem{podc2012}
S.~Daum, S.~Gilbert, F.~Kuhn, and C.~Newport.
\newblock {Leader Election in Shared Spectrum Networks}.
\newblock In {\em {Proc. ACM Symp. on Principles of Distributed Computing
  (PODC)}}, 2012.

\bibitem{DKN:disc:2012}
S.~Daum, F.~Kuhn, and C.~Newport.
\newblock {Efficient Symmetry Breaking in Multi-Channel Radio Networks}.
\newblock In {\em {Proc. Int. Symp. on Distributed Computing (DISC)}}, 2012.

\bibitem{disc2012}
S.~Daum, F.~Kuhn, and C.~Newport.
\newblock {Efficient Symmetry Breaking in Multi-Channel Radio Networks}.
\newblock Technical Report 271, University of Freiburg, Dept. of Computer
  Science, 2012.

\bibitem{dolev:2009b}
S.~Dolev, S.~Gilbert, R.~Guerraoui, D.~R. Kowalski, C.~Newport, F.~Kuhn, and
  N.~Lynch.
\newblock {Reliable Distributed Computing on Unreliable Radio Channels}.
\newblock In {\em Proceedings of the 2009 MobiHoc $S^3$ Workshop}, 2009.

\bibitem{podc09}
S.~Dolev, S.~Gilbert, R.~Guerraoui, F.~Kuhn, and C.~Newport.
\newblock The wireless synchronization problem.
\newblock In {\em {Proc. ACM Symp. on Principles of Distributed Computing
  (PODC)}}, pages 190--199, 2009.

\bibitem{dolev2007}
S.~Dolev, S.~Gilbert, R.~Guerraoui, and C.~Newport.
\newblock {Gossiping in a Multi-Channel Radio Network: {A}n Oblivious Approach
  to Coping with Malicious Interference}.
\newblock In {\em {Proc. Int. Symp. on Distributed Computing (DISC)}}, 2007.

\bibitem{dolev2008}
S.~Dolev, S.~Gilbert, R.~Guerraoui, and C.~Newport.
\newblock {Secure Communication Over Radio Channels}.
\newblock In {\em {Proc. ACM Symp. on Principles of Distributed Computing
  (PODC)}}, 2008.

\bibitem{dolev:2011}
S.~Dolev, S.~Gilbert, M.~Khabbazian, and C.~Newport.
\newblock {Leveraging Channel Diversity to Gain Efficiency and Robustness for
  Wireless Broadcast}.
\newblock In {\em {Proc. Int. Symp. on Distributed Computing (DISC)}}, 2011.

\bibitem{ephremides87}
A.~Ephremides, J.~E. Wieselthier, and D.~J. Baker.
\newblock {A Design Concept for Reliable Mobile Radio Networks with Frequency
  Hopping Signaling}.
\newblock {\em Proceedings of the IEEE}, 75(56--73), 1987.

\bibitem{gilbert2009b}
S.~Gilbert, R.~Guerraoui, D.~Kowalski, and C.~Newport.
\newblock {Interference-Resilient Information Exchange}.
\newblock In {\em {Proc. IEEE Int. Conf. on Computer Communications
  (INFOCOM)}}, 2009.

\bibitem{KMNW05}
F.~Kuhn, T.~Moscibroda, T.~Nieberg, and R.~Wattenhofer.
\newblock {Fast Deterministic Distributed Maximal Independent Set Computation
  on Growth-Bounded Graphs}.
\newblock In {\em {Proc. Int. Symp. on Distributed Computing (DISC)}}, 2005.

\bibitem{Luby86}
M.~Luby.
\newblock {A Simple Parallel Algorithm for the Maximal Independent Set
  Problem}.
\newblock {\em SIAM Journal on Computing}, 15(4):1036--1053, 1986.

\bibitem{moscibroda05}
T.~Moscibroda and R.~Wattenhofer.
\newblock Maximal independent sets in radio networks.
\newblock In {\em {Proc. ACM Symp. on Principles of Distributed Computing
  (PODC)}}, pages 148--157, 2005.

\bibitem{newport:2014}
C.~Newport.
\newblock {Radio Network Lower Bounds Made Easy}.
\newblock In {\em {Proceedings of the International Symposium on Distributed
  Computing (DISC)}}, 2014.

\bibitem{schmid06}
S.~Schmid and R.~Wattenhofer.
\newblock Algorithmic models for sensor networks.
\newblock In {\em {Proc. Int. Workshop on Parallel and Distributed Real-Time
  Systmes (WPDRTS)}}, pages 1--11, 2006.

\bibitem{schneider08}
J.~Schneider and R.~Wattenhofer.
\newblock {A Log-Star Distributed Maximal Independent Set Algorithm for
  Growth-Bounded Graphs}.
\newblock In {\em {Proc. ACM Symp. on Principles of Distributed Computing
  (PODC)}}, pages 35--44, 2008.

\bibitem{sherman:2008}
M.~Sherman, A.~Mody, R.~Martinez, C.~Rodriguez, and R.~Reddy.
\newblock {IEEE Standards Supporting Cognitive Radio and Networks, Dynamic
  Spectrum Access, and Coexistence}.
\newblock {\em IEEE Communications Magazine}, 46(7):72--79, 2008.

\bibitem{strasser2009}
M.~Strasser, C.~P{\"o}pper, and S.~Capkun.
\newblock {Efficient Uncoordinated FHSS Anti-jamming Communication}.
\newblock In {\em {Proc. ACM Int. Symp. on Mobile Ad Hoc Networking and
  Computing (MOBIHOC)}}, 2009.

\bibitem{strasser2008}
M.~Strasser, C.~P{\"o}pper, S.~Capkun, and M.~Cagalj.
\newblock {Jamming-resistant Key Establishment using Uncoordinated Frequency
  Hopping}.
\newblock In {\em {Proc. of the IEEE Symposium on Security and Privacy (S \&
  P)}}, 2008.

\bibitem{alzoubi:1}
P.-J. Wan, K.~M. Alzoubi, and O.~Frieder.
\newblock {Distributed Construction of Connected Dominating Set in Wireless Ad
  Hoc Networks}.
\newblock In {\em {Proceedings of the IEEE International Conference on Computer
  Communications (INFOCOM)}}, 2002.

\end{thebibliography}
\bibliographystyle{abbrv}
\pagebreak
\appendix
\bookmarksetupnext{level=-1}
\addappheadtotoc
\Needspace{15\baselineskip}
\section{Properties of Graphs with Bounded Independence}
\label{sec:boundedindep}

We need a few statements about bounded independence graphs. The proofs are provided in \cite{tr:podc2013}.

\begin{lemma}\label{lemma:weightedturan}
  Let $G=(V,E)$ be a graph and assume that every node $u\in V$ has a positive
  weight $w_u>0$. Define $W:=\sum_{v\in V} w_v$ and for
  each $u\in V$, $W_u:=\sum_{v\in N_G^1(u)} w_v$. It then holds that
  \begin{eqnarray}
  \sum_{v\in V} \frac{w_v}{W_v} \leq \alpha(G) \quad \text{and} \label{eq:weightedturan:a}\\
  \sum_{v\in V} w_v\cdot W_v \geq \frac{W^2}{\alpha(G)}, \label{eq:weightedturan:b} 
  \end{eqnarray}
  where $\alpha(G)$ is the independence number of $G$.
\end{lemma}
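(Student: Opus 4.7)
The plan is to prove (\ref{eq:weightedturan:a}) by constructing a random independent set whose expected weight picks up exactly $\sum_v w_v/W_v$, and then derive (\ref{eq:weightedturan:b}) as a one-line consequence via Cauchy--Schwarz. This is the natural weighted generalisation of the classical Caro--Wei argument for Tur\'an's theorem; the only subtlety is choosing the random ordering so that the probability of a vertex ``winning'' against its closed neighbourhood equals $w_v/W_v$ rather than $1/|N^1_G(v)|$.

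First, for (\ref{eq:weightedturan:a}) I would assign to every $v \in V$ an independent exponential random variable $X_v$ with rate $w_v$ (i.e.\ density $w_v e^{-w_v x}$ on $[0,\infty)$). By the standard property of independent exponentials, for any fixed $u$ the minimum of $\{X_v : v \in N^1_G(u)\}$ is attained at vertex $v$ with probability exactly $w_v/W_u$. Define
\[
I \coloneqq \bigl\{v \in V \;:\; X_v = \min_{u \in N^1_G(v)} X_u \bigr\}.
\]
Almost surely all $X_v$ are distinct, so $I$ is well defined. If two adjacent vertices $v_1, v_2$ both lay in $I$, then $X_{v_1} \le X_{v_2}$ (since $v_2 \in N^1_G(v_1)$) and $X_{v_2} \le X_{v_1}$ (since $v_1 \in N^1_G(v_2)$), forcing $X_{v_1}=X_{v_2}$, an event of probability zero. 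Hence $I$ is \whp\ an independent set, and $|I| \le \alpha(G)$ with probability one. Because $\Pr(v \in I) = w_v/W_v$, linearity of expectation gives
\[
\sum_{v \in V} \frac{w_v}{W_v} \;=\; \E[|I|] \;\le\; \alpha(G),
\]
which is (\ref{eq:weightedturan:a}).

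For (\ref{eq:weightedturan:b}), I would apply the Cauchy--Schwarz inequality to the factorisation $w_v = \sqrt{w_v W_v}\cdot\sqrt{w_v/W_v}$:
\[
W^2 \;=\; \Bigl(\sum_{v\in V} \sqrt{w_v W_v}\cdot\sqrt{w_v/W_v}\Bigr)^{2}
\;\le\; \Bigl(\sum_{v\in V} w_v W_v\Bigr)\Bigl(\sum_{v\in V} \frac{w_v}{W_v}\Bigr).
\]
Plugging in (\ref{eq:weightedturan:a}) to bound the second factor by $\alpha(G)$ and rearranging yields $\sum_v w_v W_v \ge W^2/\alpha(G)$, as desired.

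I do not foresee a substantive obstacle: the only care needed is verifying that $\Pr(v \in I)=w_v/W_v$, which follows from the memorylessness/min-of-exponentials calculation. An alternative, avoiding continuous distributions, is to draw independent $Y_v$ uniform on $[0,1]$ and to compare $-\log(Y_v)/w_v$ across $N^1_G(v)$; this is essentially the same argument. Either way the two inequalities in the lemma fall out together, with (b) coming cheaply from (a) by Cauchy--Schwarz.
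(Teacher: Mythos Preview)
Your proof is correct. The paper does not actually prove this lemma here but defers to \cite{tr:podc2013}; your argument is the standard weighted Caro--Wei construction (random greedy independent set via exponentially-distributed labels, giving $\Pr(v\in I)=w_v/W_v$) together with Cauchy--Schwarz for the second inequality, which is the classical route and almost certainly what the cited technical report contains.
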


\begin{lemma}\label{lemma:heavyneighborhoods}
  Let $G=(V,E)$ be a graph and assume that every node $u\in V$ has a
  positive 
  weight $w_u>0$. Define $W:=\sum_{v\in V} w_v$ and
  for each $u\in V$, $W_u:=\sum_{v\in N_G^+(u)} w_v$. Let
  $V_{\mathit{heavy}}\subseteq V$ be the set of nodes $v$ for which
  $W_v\geq \frac{W}{2\alpha(G)}$. The total weight of nodes in
  $V_{\mathit{heavy}}$ is at least
  \[
  \sum_{v\in V_{\mathit{heavy}}} w_v > \frac{W}{2\alpha(G)}.
  \]
\end{lemma}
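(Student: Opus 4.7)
The plan is to reduce the statement to a direct averaging argument on top of inequality \eqref{eq:weightedturan:b} of Lemma \ref{lemma:weightedturan}. That inequality already captures the ``total self-and-neighbor weight mass'' as a lower bound, namely $\sum_{v\in V} w_v \cdot W_v \geq W^2/\alpha(G)$. The heavy/light dichotomy of $V_{\mathit{heavy}}$ is exactly tuned to this bound, so the proof should be a two-line accounting.

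First I would partition the sum from \eqref{eq:weightedturan:b} as
\[
\sum_{v\in V} w_v W_v \;=\; \sum_{v\in V_{\mathit{heavy}}} w_v W_v \;+\; \sum_{v\notin V_{\mathit{heavy}}} w_v W_v.
\]
For the light part I would use the definition of $V_{\mathit{heavy}}$: every $v \notin V_{\mathit{heavy}}$ satisfies $W_v < W/(2\alpha(G))$, so
\[
\sum_{v\notin V_{\mathit{heavy}}} w_v W_v \;<\; \frac{W}{2\alpha(G)} \sum_{v\notin V_{\mathit{heavy}}} w_v \;\leq\; \frac{W^2}{2\alpha(G)}.
\]
For the heavy part I would use only the trivial bound $W_v \leq W$, giving $\sum_{v\in V_{\mathit{heavy}}} w_v W_v \leq W \cdot \sum_{v\in V_{\mathit{heavy}}} w_v$.

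Combining these two estimates with \eqref{eq:weightedturan:b} yields
\[
\frac{W^2}{\alpha(G)} \;\leq\; W \cdot \sum_{v\in V_{\mathit{heavy}}} w_v \;+\; \frac{W^2}{2\alpha(G)},
\]
and rearranging and dividing by $W>0$ delivers $\sum_{v\in V_{\mathit{heavy}}} w_v > W/(2\alpha(G))$. There is no real obstacle here; the only subtle point is being consistent about open versus closed neighborhoods so that $W_v \le W$ is immediate, and noting that the strict inequality in the conclusion comes from the strict inequality in the light-vertex estimate (which in turn requires $V \setminus V_{\mathit{heavy}}$ to be nonempty in the boundary case, or else $V_{\mathit{heavy}} = V$ and the bound is trivial since $\sum_v w_v = W > W/(2\alpha(G))$).
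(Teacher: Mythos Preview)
Your argument is correct: splitting $\sum_v w_v W_v$ into heavy and light parts, bounding the light part by $W^2/(2\alpha(G))$ and the heavy part by $W\cdot\sum_{v\in V_{\mathit{heavy}}} w_v$, and then invoking \eqref{eq:weightedturan:b} gives exactly the claimed bound, with the strictness handled as you describe. The paper does not actually include a proof of this lemma---it defers to \cite{tr:podc2013}---so a line-by-line comparison is not possible, but your derivation is the natural one given that Lemma~\ref{lemma:weightedturan}\eqref{eq:weightedturan:b} is stated immediately beforehand and is precisely the ingredient needed.
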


\section{Candidate Election---Statements and Proofs for $\Astate$ (and $\Lstate'$)}
\label{appendix:candidate:election}

\subsection{Lemma \ref{lemma:wcpnoherald}, ``\Whp nothing happens''}

Let $k$ be a positive integer \emph{constant}, $r$ some round, $u$ some node in the \hfilter in round $r$, $i$ an index from $1, \dots, \nac$, $S$ be a (possibly empty) subset of $N^k(u) \cap \Astate$. Furthermore let $\partial S \subset S$ be \emph{the} subset of $S$ that has connections outside of $S$, but in $N^k(u)$, i.e., $\partial S := S \cap N\big(N^k(u)\setminus S\big)$. At last, let $S=S^n \dotcup S^b \dotcup S^l$ be a partition of $S$. We call the tuple $(k,u,r,i,S)$) a \emph{constellation}. For a constellation the following events are defined for round $r$:

\newcommand{\prebullet}{\hspace{-5mm}}
\newcommand{\postbullet}{\hspace{-0mm}}
\begin{tabular}{lll}
  \prebullet\textbullet & \postbullet$\mathcal{S}_{\neg i}  /  \mathcal{S}^n_{\neg i}$: & \emph{no} node in $S  /  S^n$ operates on $\Achannel_i$, \\
  \prebullet\textbullet & \postbullet$\mathcal{S}_{i}  /  \mathcal{S}^b_{i}  /  \mathcal{S}^l_{i}$: & \emph{all} nodes in $S  /  S^b  /  S^l$ operate on $\Achannel_i$, \\
  \prebullet\textbullet & \postbullet$\partial \mathcal{S}_{\neg i}$: & \emph{no} node in $\partial S$ operates on $\Achannel_i$, \\
  \prebullet\textbullet & \postbullet$H_i$: & \emph{no} node in $N^k(u)\setminus S$ receives a message on channel $\Achannel_i$, \\
  \prebullet\textbullet & \postbullet$H_{\neg i}$: & \emph{no} node in $N^k(u)$ receives a message on some channel $\Achannel_j \neq \Achannel_i$ and \\
  \prebullet\textbullet & \postbullet$H$: & \emph{no} node in $N^k(u)$ receives a message on any channel $\Achannel_1,\dots,\Achannel_\nac$.
\end{tabular}

\begin{lemma}
  \label{lemma:app:wcpnoherald}
  Let $(k,u,r,i,S)$ be a constellation. Then,
  \TabPositions{4.1cm}
  \begin{enumerate}[(1)]
    \item $\Pr(H) =$ \tab $ 1 - \bigO(\pi_\ell \alpha^k)$     
    \label{lemma:app:wcpnoherald:1}
    \item $\Pr(H_{\neg i}|\mathcal{S}_{\neg i}) =$ \tab $ 1 - \bigO(\pi_\ell \alpha^k)$    
    \label{lemma:app:wcpnoherald:2}
    \item $\Pr(H_{\neg i}|\mathcal{S}_{i}) =$ \tab $ 1 - \bigO(\pi_\ell \alpha^k)$    
    \label{lemma:app:wcpnoherald:3}
    \item $\Pr(H_{\neg i}|\mathcal{S}^n_{\neg i} \wedge \mathcal{S}^b_{i} \wedge \mathcal{S}^l_{i}) =$ \tab $ 1 - \bigO(\pi_\ell \alpha^k)$    
    \label{lemma:app:wcpnoherald:4}
    \item $\Pr(H_i|\partial \mathcal{S}^n_{\neg i}) =$ \tab $ 1 - \bigO(\pi_\ell \alpha^k)$    
    \label{lemma:app:wcpnoherald:5}
  \end{enumerate}
\end{lemma}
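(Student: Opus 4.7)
\textbf{Plan for Lemma \ref{lemma:app:wcpnoherald}.} The key computation for all five parts is the following bound on the chance that a specific node $v \in N^k(u)$ receives a message on channel $\Achannel_i$: in a single round,
$$\Pr[v \text{ receives on } \Achannel_i] \;\leq\; \underbrace{\pi_\ell \activity(v) 2^{-i}}_{v \text{ listens on } \Achannel_i} \cdot \underbrace{\sum_{w \in N(v)} (1-\pi_\ell)\activity(w)2^{-i}}_{\text{some neighbor broadcasts on } \Achannel_i} \;\leq\; \pi_\ell\, \activity(v)\, \Activity^\circ(v)\, 4^{-i}.$$
The plan is to start from this estimate and derive (1) by a union bound over $v$ and $i$; then obtain (2)--(5) by showing that the conditioning events only inflate each probability above by at most a constant factor, leaving the same $O(\pi_\ell \alpha^k)$ asymptotics.

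For part (1), I sum the above over $i \ge 1$ to get $\Pr[v \text{ receives}] = O(\pi_\ell \activity(v)\Activity^\circ(v))$ when $\Activity^\circ(v) \le 1$, and I sharpen the channel choice to $\Achannel_\lambda$ with $\lambda \approx \log \Activity^\circ(v)$ to get $\Pr[v \text{ receives}] = O\!\left(\pi_\ell \activity(v)/\Activity^\circ(v)\right)$ when $\Activity^\circ(v) > 1$ (exactly as in the upper-bound case of Claim \ref{claim:heraldbounds}). Summing over $v \in N^k(u)$, I use Lemma \ref{lemma:weightedturan}\,(\ref{eq:weightedturan:a}) applied to the subgraph induced by $N^k(u)$ with weights $\activity(\cdot)$: the ``high-activity'' contribution is bounded by $2\pi_\ell \sum_v \activity(v)/\Activity(v) \le 2\pi_\ell \alpha^k$ (using $\Activity^\circ(v) \ge \Activity(v)/2$ whenever $\Activity(v) \ge 1$); and the ``low-activity'' contribution is bounded by $\pi_\ell \sum_{v:\Activity(v)<1} \activity(v) \le \pi_\ell \alpha^k$, since on this set $\activity(v)/\Activity(v) \ge \activity(v)$ and $\Activity^\circ(v) \le 1$. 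Together these give $\Pr[\overline{H}] = O(\pi_\ell \alpha^k)$.

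For parts (2)--(4), I observe that on channels $\Achannel_j$ with $j \ne i$, the randomness governing receptions is essentially unchanged by conditioning on $\Achannel_i$-choices: nodes of $S^b$ and $S^l$ are now fixed to $\Achannel_i$ and hence have no effect on other channels (only removing potential interference); nodes of $S^n$ (or of $S$ in (2)) now operate on channels $\ne \Achannel_i$ with their conditional distribution, which is the original distribution scaled up by the factor $1/\Pr[\text{not on } \Achannel_i] \le 1/(1 - 1/2) = 2$; nodes outside $S$ are untouched. Each factor entering the bound for $\Pr[\text{some } v \text{ receives on some } \Achannel_j, j \ne i]$ is therefore altered by at most a constant factor, and repeating the argument of part (1) over the remaining channels preserves the $O(\pi_\ell \alpha^k)$ bound. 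Part (3) is the special case where all of $S$ is forced onto $\Achannel_i$; part (4) combines the three constraints using the same constant-factor-loss principle. Part (5) follows the same template but restricted to the single channel $\Achannel_i$: conditioning on $\partial \mathcal{S}^n_{\neg i}$ ``detaches'' $S^n$ from the outside-$S$ portion of $N^k(u)$ on $\Achannel_i$, and the union bound over potential receivers $v \in N^k(u) \setminus S$ with broadcasters in $N(v) \setminus S^n$ yields $O(\pi_\ell \alpha^k)$ by the same Turán-style summation.

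\emph{Main obstacle.} The delicate step is the conditioning management in (4), where three simultaneous constraints on the partition $S = S^n \dcup S^b \dcup S^l$ must be controlled jointly. The plan is to factor the conditional probability as $\Pr[H_{\neg i} \wedge \mathcal{S}^n_{\neg i} \wedge \mathcal{S}^b_i \wedge \mathcal{S}^l_i] / \Pr[\mathcal{S}^n_{\neg i} \wedge \mathcal{S}^b_i \wedge \mathcal{S}^l_i]$, recognize that the denominator equals $\prod_{v \in S^n}(1 - 2^{-i}\activity(v))\cdot \prod_{v \in S^b} (1-\pi_\ell)\activity(v)2^{-i} \cdot \prod_{v \in S^l} \pi_\ell \activity(v) 2^{-i}$ by independence of the private coins, and then note that in the numerator the ``$H_{\neg i}$'' part depends only on actions on channels $\Achannel_j, j \ne i$, which are independent of the $\Achannel_i$-events governing $S$. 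Thus the denominator cancels up to the constant-factor inflation for $S^n$ mentioned above, and the argument reduces to part (1) applied to the other channels.
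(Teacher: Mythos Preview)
Your proposal is correct and follows essentially the same approach as the paper: bound the probability that any fixed $v\in N^k(u)$ receives on some herald channel by $O(\pi_\ell\,\activity(v)/\Activity(v))$ via the $x^2e^{-x}=O(1)$ trick on the optimal channel $\lambda\approx\log\Activity(v)$, then sum over $v$ using Lemma~\ref{lemma:weightedturan} to get $O(\pi_\ell\alpha^k)$; for (2)--(5) argue that the conditioning only perturbs each per-node probability by a constant factor (at most $1/(1-2^{-i})\le 2$ for nodes forced off $\Achannel_i$, and removal from channels $j\neq i$ for nodes forced onto $\Achannel_i$). The only cosmetic difference is that the paper does not split (1) into low/high activity cases but uses the uniform bound $\sum_j C_j^v=O(\pi_\ell\,\activity(v)/\Activity(v))$ directly; your split works too, modulo the harmless constant in the ``low'' case where $\Activity^\circ(v)\le 1$ implies $\Activity(v)\le 3/2$ rather than $<1$.
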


Look at (\ref{lemma:app:wcpnoherald:1}). The lemma says that the probability for a herald candidate to be created in any single round for any neighborhood of constant radius is at most linear in $\pi_\ell$. Since $\pi_\ell$ is an arbitrarily small constant parameter chosen by us, we can make the probability for this event arbitrarily small. The proof for (\ref{lemma:app:wcpnoherald:1}) is exactly the same as the proof for Lemma 8.6 in \cite{tr:podc2013}, with $\alpha(k)$ replaced by $\alpha^k$. We provide its proof nevertheless, as (\ref{lemma:app:wcpnoherald:2}) and (\ref{lemma:app:wcpnoherald:3}) are new results that do directly depend on (\ref{lemma:app:wcpnoherald:1}). What (\ref{lemma:app:wcpnoherald:2}) says, is, that even if we condition on some nodes $S \subset N^k(u)$ \emph{not} to operate on channel $\Achannel_i$, this does not increase the chance (significantly---i.e., by more than a constant factor) for these or other nodes to receive anything on some channel $\Achannel_j \neq \Achannel_i$. Analogously, (\ref{lemma:app:wcpnoherald:3}) claims that if we condition on some nodes $S \subset N^k(u)$ to \emph{definitely} operate on channel $\Achannel_i$, this does still not affect the probability for any other node to receive any message on some other channel $\Achannel_j \neq \Achannel_i$.
(\ref{lemma:app:wcpnoherald:4}) is a combination of (\ref{lemma:app:wcpnoherald:2}) and (\ref{lemma:app:wcpnoherald:3}), plus we even fix the knowledge of which nodes, that operate on $\Achannel_i$, do broadcast ($S^b$) or listen ($S^l$).
(\ref{lemma:app:wcpnoherald:5}) has already been proven in \cite{tr:podc2013} as Claim 8.9, except that there $k$ was fixed to $2$. The analogous proof is provided as well for sake of completeness.

\iftrue
\begin{proof}
  We first prove (\ref{lemma:app:wcpnoherald:1}). For the whole proof we only use the graph $G_\Astate$ induced by nodes in state \Astate\ in round $r$. We also solely focus on nodes $v$ that do have at least one active neighbor in $G_\Astate$, as isolated nodes cannot become herald candidates. We will use the notation $N_\Astate(v)$ and $N_\Astate^d(v)$ to refer to $N_{G_\Astate}(v)$ and $N_{G_\Astate}^d(v)$, respectively.

  To become a herald candidate, a node $v$ in state $\Astate$ must receive a message from one of its neighbors on one of the channels $\Achannel_1,\dots,\Achannel_{n_\Achannel}$. This is only possible if in round $r$, $v$ chooses to listen on a channel $\Achannel_j$ and exactly one of $v$'s neighbors in $G_\Astate$ broadcasts on channel $\Achannel_j$.

Consider an arbitrary channel $\Achannel_j$ from the herald election channels $\Achannel_1, \dots, \Achannel_{n_\Achannel}$. Let $p_v(j)=2^{-j} \cdot \activity(v)$ be the probability that an active node $v$ chooses to broadcast or listen on channel $\Achannel_j$. In addition, we define $P_v(j) := 2^{-j} \Activity(v) = \sum_{w\in N_\Astate^1(v)} p_w(j)$. Let $B_j^{v,w}$ be the event that $v$ listens on channel $\Achannel_j$, while exactly one of its neighbors $w\in N_\Astate(v)$ transmits on channel $\Achannel_j$ and all other neighbors $w'\in N_\Astate(v)$ are either not on channel $\Achannel_j$ or they choose to listen as well.
  
  \begin{equation}\label{eq:wcpnoherald:1:1}
    \begin{split}
    \Pr(B_j^{v,w}) 
    &= 
    \pi_\ell p_v(j) \cdot (1-\pi_\ell) p_w(j) \cdot \prod_{\mathclap{w' \in N_\Astate(v)\setminus\{w\}}}\big(1-p_{w'}(j)(1-\pi_\ell)\big) \\
    &\leq
    \pi_\ell p_v(j) p_w(j) \cdot \prod_{\mathclap{w'\in \{v,w\}}}
    \frac{1}{1-(1-\pi_\ell)p_{w'}(j)} \cdot
    \prod_{\mathclap{w' \in N_\Astate^1(v)}}\big(1-p_{w'}(j)(1-\pi_\ell)\big)\\
    &\leq
    \pi_\ell p_v(j) p_w(j) \cdot 4\cdot e^{-\frac{1}{2}P_v(j)}\\
    &=
    \pi_\ell 2^{-2j}\activity(v)\activity(w) \cdot 4 e^{-2^{1-j}\Activity(v)}
    \end{split}
  \end{equation}
  In the last inequality, we use that $p_{w'}(j)\leq \frac{1}{2}$ and that $\pi_\ell\leq \frac{1}{2}$. Define $B_j^v$ to be the event that $v$ listens on $\Achannel_j$ and exactly one of its neighbors transmits on that channel. Since $B_j^v=\bigcup_{w\in N_\Astate(v)} B_j^{v,w}$ and the events $B_j^{v,w}$ are disjoint for different $w$, we have
  
  \begin{eqnarray*}
    \Pr(B_j^v)= \sum_{\mathclap{w\in N_\Astate(v)}} \Pr(B_j^{v,w}) \leq \pi_\ell p_v(j) P_v(j) \cdot 4e^{-\frac{1}{2}P_v(j)} =: C_j^v.
  \end{eqnarray*}
  
  For any $x>0$ and constant $c$, $cx^2e^{-x} = \bigO(1)$, which by using $x=P_v(j)$ implies that $C_j^v = \bigO \big(\pi_\ell \frac{p_v(j)}{P_v(j)}\big)=\bigO\big(\pi_\ell \frac{\activity(v)}{\Activity(v)}\big)$ for any fixed $j$.  Next we show that $\sum_{j=1}^{n_\Achannel} C_j^v =\bigO\big(\pi_\ell \frac{\activity(v)}{\Activity(v)}\big)$, too.
\ifthenelse{\boolean{hasLNCSformat}}{
  \begin{equation}
    \begin{split}
      \frac{C_{j+1}^v}{C_j^v} 
      &= \frac{p_v(j+1)}{p_v(j)} \frac{P_v(j+1)}{P_v(j)} e^{-\frac{1}{2}P_v(j+1)+\frac{1}{2}P_v(j)}
      \\
      &= \frac{1}{4}e^{\frac{1}{4}\Activity(v)2^{-j}} < \frac{1}{2} 
      \qquad \forall j 
      \geq \log \Activity(v)
    \end{split}
  \end{equation}
  \begin{equation}
    \begin{split}
      \frac{C_{j}^v}{C_{j+1}^v} 
      &= \frac{p_v(j)}{p_v(j+1)} \frac{P_v(j)}{P_v(j+1)} e^{-\frac{1}{2}P_v(j)+\frac{1}{2}P_v(j+1)} 
      \\
      &= 4e^{-\frac{1}{4}\Activity(v)2^{-j}} < \frac{1}{2} 
      \qquad \forall j 
      \leq \log \Activity(v) -4
    \end{split}
  \end{equation}
}{
  \begin{equation}
      \frac{C_{j+1}^v}{C_j^v} = \frac{p_v(j+1)}{p_v(j)} \frac{P_v(j+1)}{P_v(j)} e^{-\frac{1}{2}P_v(j+1)+\frac{1}{2}P_v(j)}
      = \frac{1}{4}e^{\frac{1}{4}\Activity(v)2^{-j}} < \frac{1}{2} 
      \qquad \forall j 
      \geq \log \Activity(v)
  \end{equation}
  \begin{equation}
      \frac{C_{j}^v}{C_{j+1}^v} 
      = \frac{p_v(j)}{p_v(j+1)} \frac{P_v(j)}{P_v(j+1)} e^{-\frac{1}{2}P_v(j)+\frac{1}{2}P_v(j+1)} 
      = 4e^{-\frac{1}{4}\Activity(v)2^{-j}} < \frac{1}{2} 
      \qquad \forall j 
      \leq \log \Activity(v) -4
  \end{equation}
}

  We can therefore deduce the upper bounds
  $$\sum_{j \geq \log \Activity(v)}C_j^v \leq 2C_{\lceil\log\Activity(v)\rceil}^v \text{ and } \sum_{j \leq \log \Activity(v)} C_j^v \leq 2C_{\max\{1,\lfloor\log\Activity(v)-4\rfloor\}}^v,$$

  \noindent proving the claim that $\sum_{j=1}^{n_\Achannel} C_j^v =\bigO\big(\pi_\ell \frac{\activity(v)}{\Activity(v)}\big)$.

  Using Lemma \ref{lemma:weightedturan}, choosing $G':=G_\Astate[N_\Astate^k(u)]$, $w(v):=\activity(v)$ and $W(v):=\Activity(v)$, we get that $\sum_{v \in G'} \frac{\activity(v)}{\Activity(v)} \leq \alpha(G') \leq \alpha^k$. (Note that the independence number of a graph is larger than or  equal to the independence number of any induced subgraph.)

  Let $B^v$ be the event that $v$ moves from state \Astate\ to $\Hstate'$ and $B=\bigcup_{v\in N^k(u)}=\bigcup_{v\in N_\Astate^k(u)}$. Then,
  \begin{equation*}
    \Pr(B) 
    \leq \sum_{\mathclap{v\in N_\Astate^k(u)}} \Pr(B^v) 
    \leq \sum_{v\in N_\Astate^k(u)} \sum_{j=1}^{n_\Achannel} C_j^v
    = \sum_{\mathclap{v\in N_\Astate^k(u)}} \bigO\left(\pi_\ell \frac{\activity(v)}{\Activity(v)}\right) 
    = \bigO\left(\pi_\ell \alpha^k\right).
  \end{equation*}
  Choosing a sufficiently small $\pi_\ell$ concludes the proof of (\ref{lemma:app:wcpnoherald:1}).


  For (\ref{lemma:app:wcpnoherald:2}) let a proper set $S$ be given. We apply the same calculations as before but we condition now on $S_i$, i.e., that all nodes in $S$ do \emph{not} operate on channel $\Achannel_i$. We let $p_v(j)$ still be the probability that node $v$ operates on channel $\Achannel_j$, but its value might now be different. If $v \notin S$, then this probability is exactly $p_v(j)$ as above, i.e., equals $2^{-j}\activity(v)$. Otherwise, this probability is larger than that by a factor of $\frac{1}{1-2^{-i}}$ due to conditioning on $v$ not operating on $\Achannel_i$. This term maximizes at $i=1$ and then evaluates to $2$, i.e., $p_v(j) \leq 2\cdot 2^{-j}\activity(v)$. The same thought needs to be applied to all nodes and therefore also $P_v(j)$ can change: It now ranges between $2^{-j}\Activity(v)$ and $2^{1-j}\Activity(v)$. Since we are looking for an upper bound in Equation \ref{eq:wcpnoherald:1:1}, we use that $e^{-\frac{1}{2}P_v(j)} \leq e^{-2^{1-j}\Activity(v)}$. In total we get that the right hand side of Equation \ref{eq:wcpnoherald:1:1} increases by a factor at most $4$.

The remainder of the proof is completely analogous to the one for (\ref{lemma:app:wcpnoherald:1}), except that in any summations over all channels we omit channel $\Achannel_i$ (resp. index $i$), which just benefits the cause.


  For (\ref{lemma:app:wcpnoherald:3}) we assume again that a proper set $S$ is given and we condition on $\tilde{S}_i$, i.e., all nodes in $S$ \emph{do} operate on channel $\Achannel_i$. But we are only interested in heralds being created on channels $\Achannel_j \neq \Achannel_i$. We apply the following simple adaption to our initial setup.
  We only focus on the graph $G'_\Astate$ induced by nodes in state \Astate\ that will \emph{not} operate on channel $\Achannel_i$ in round $r$, i.e., we exclude all nodes from $S$, but also nodes outside of $S$ whose local random bits indicate that they choose to operate on $\Achannel_i$ this round. In this case all calculations stay \emph{exactly} the same, except that channel $\Achannel_i$ (resp. index $i$) needs to be removed from all summations, unions (as a matter of fact, the probability of any analyzed event referring to channel $\Achannel_i$ equals zero in $G'_\Astate$) and observations in general. When analyzing values $C^v_j$ the index $i$ needs to be skipped. Finally, the subgraph $G'$ used in the last step must be based on $G'_\Astate$.


  For (\ref{lemma:app:wcpnoherald:4}) let initially $S^n=\emptyset$. Then the statement is the same as in (\ref{lemma:app:wcpnoherald:3}), except that we not only know that nodes in $S$ operate on channel $\Achannel_i$, we also know whether they broadcast (set $S^b$) or listen (set $S^l$). However, this additional knowledge does not affect the proof of (\ref{lemma:app:wcpnoherald:3})---what happens among nodes on other channels than $\Achannel_i$ is completely detached from events on $\Achannel_i$. Let now $S^n \neq \emptyset$, then (\ref{lemma:app:wcpnoherald:4}) is a simple combination of (\ref{lemma:app:wcpnoherald:2}) and this expanded version of (\ref{lemma:app:wcpnoherald:3}).

  
  The proof of (\ref{lemma:app:wcpnoherald:5}) is exactly the same as the proof for Claim 8.9 in \cite{podc2013}, which is equivalent to setting $k=2$.

  In the following, we use the notation $\mathcal{N}:=N_\Astate^k(u,r)\setminus S$. For a node $x\in \mathcal{N}$, let $B_x$ be the event that node $x$ receives a message on channel $\Achannel_i$. Event $B_x$ occurs iff $x$ listens on channel $\Achannel_i$ and exactly one of its neighbors broadcasts on channel $\Achannel_i$. The probability for a node $x\in\mathcal{N}$ to pick channel $\Achannel_i$ is $\activity(x)\cdot 2^{-i}$. We therefore have
\ifthenelse{\boolean{hasLNCSformat}}
{
    \begin{multline*}
      \Pr(B_x|\partial \mathcal{S}_{\neg i})
      \\ 
      = \frac{\pi_\ell\activity(x,r)}{2^i}\sum_{z \in N_\Astate(x)\setminus S}\frac{(1-\pi_\ell)\activity(z,r)}{2^i}\cdot\!\!\!\!
      \prod_{y\in N_\Astate(x)\setminus (S \cup \{z\})} \!\!\left(
        1- \frac{(1-\pi_\ell)\activity(y,r)}{2^i}
      \right) 
      \\
      \leq
      \frac{\pi_\ell\activity(x,r)}{2^i}\frac{\Activity(x,r)}{2^i}\cdot
      e^{-\Activity(x,r)2^{-i}} =
      \bigO\left(\frac{\activity(x,r)}{\Activity(x,r)}\cdot \pi_\ell\right).
    \end{multline*}
}{
  \begin{eqnarray*}
      \Pr(B_x|\partial \mathcal{S}_{\neg i})
      &=&
      \frac{\pi_\ell\activity(x,r)}{2^i}\sum_{z \in N_\Astate(x)\setminus S}\frac{(1-\pi_\ell)\activity(z,r)}{2^i}\cdot\!\!\!\!
      \prod_{y\in N_\Astate(x)\setminus (S \cup \{z\})} \!\!\left(
        1- \frac{(1-\pi_\ell)\activity(y,r)}{2^i}
      \right) 
      \\
      &\leq&
      \frac{\pi_\ell\activity(x,r)}{2^i}\frac{\Activity(x,r)}{2^i}\cdot
      e^{-\Activity(x,r)2^{-i}} =
      \bigO\left(\frac{\activity(x,r)}{\Activity(x,r)}\cdot \pi_\ell\right).
    \end{eqnarray*}
}
  Let $X$ be the number of nodes $x\in \mathcal{N}$ that receive a message on channel $\Achannel_i$ in round $r$. For the expectation of $X$, we then get
  \[
  \E[X|\partial \mathcal{S}_{\neg i}] = \bigO(\pi_\ell)\cdot 
  \sum_{x\in \mathcal{N}}\frac{\activity(x,r)}{\Activity(x,r)} = \bigO(\pi_\ell).
  \]
  The second equation follows from Lemma \ref{lemma:heavyneighborhoods} because the graph induced by $\mathcal{N}$ has independence at most $\alpha^k$. Applying the Markov inequality, we get $\Pr(X\geq1|\partial \mathcal{S}_{\neg i})\leq \E[X|\partial \mathcal{S}_{\neg i}]=\bigO(\pi_\ell)$, which concludes the proof of (\ref{lemma:app:wcpnoherald:5}).
\qedLNCS
\end{proof}
\fi


\subsection{Lemma \ref{lemma:fatgivesherald}} 

\begin{lemma}
  \label{lemma:app:fatgivesherald}
  Let $t$ be a round in which for a node $u$ in state \Astate\ in the \hfilter the following holds:
  \begin{compactitem}
    \item there is no herald candidate in $N^2(u)$,
    \item all nodes $v \in N^2(u)$ that neighbor a herald or leader, have $\activity(v) \leq \activity_\low$,
    \item all nodes in $N^2(u)$ neighboring MIS nodes are eliminated,
    \item $\Activity(u) \geq 1$,
  \end{compactitem}
  If in addition it holds that either 
  \begin{compactenum}[\bfseries (a)]
    \item $\Activity(u) <5\alpha$, $u$ is $\frac{1}{5\alpha}$-fat and $\activity(u)=\frac{1}{2}$, or
    \item $\Activity(u) \geq 5\alpha$ and $u$ is $\eta$-fat.
  \end{compactenum}
  Then by the end of round $t' \in [t, t+7]$, with probability $\Omega(\pi_\ell)$ either a node in $N^2(u)$ joins the MIS or a good pair $(l,h)\in (\Lstate \cap N^1(u)) \times (\Hstate \cap N^1(u))$ is created. 
\end{lemma}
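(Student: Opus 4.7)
The plan is to show that in round $t$ (or, in the edge case that some $\Hstate'$ node in rounds $5$--$6$ of a handshake still lingers nearby, one round later), with probability $\Omega(\pi_\ell)$ some node $v \in N^1(u)$ experiences the event $D^v$ from Lemma \ref{lemma:heraldbounds}, i.e.\ receives a message on some channel $\Achannel_{\bar\lambda}$ with no other node in $N^3(v)$ also receiving on $\Achannel_{\bar\lambda}$. Then I would invoke Corollary \ref{corollary:heraldwcpgoodherald} to conclude that by the beginning of round $t+8$ the pair $(l,h)$ has emerged from the handshake as a good leader--herald pair with no herald candidates in $N^3(l)\cup N^3(h)$, which places it in $N^1(u)$ as required. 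Throughout, if in the meantime an MIS node appears in $N^2(u)$ via loneliness, we are already done.

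For case \textbf{(a)}, with $\Activity(u)<5\alpha$, $u$ being $\tfrac{1}{5\alpha}$-fat and $\activity(u)=\tfrac{1}{2}$, I would apply Lemma \ref{lemma:heraldbounds} directly to $u$. The hypothesis $\Activity(u)\geq 1$ yields $\AAC(u)\geq \tfrac{1}{2}$, while $\AAC(u) < 5\alpha$ is a constant. A case split on whether $\AAC(u)\leq 2$ or $\AAC(u)>2$ gives either $\Pr(D^u)=\Omega(\pi_\ell\,\activity(u)\,\AAC(u))=\Omega(\pi_\ell)$ or $\Pr(D^u)=\Omega(\pi_\ell\,\activity(u)/\AAC(u))=\Omega(\pi_\ell/\alpha)=\Omega(\pi_\ell)$.

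For case \textbf{(b)}, with $\Activity(u)\geq 5\alpha$ and $u$ being $\eta$-fat, the mass is spread across neighbors, so I would bound $\Pr\bigl[\bigcup_{v\in N^1(u)} D^v\bigr]\geq \sum_{v} \Pr(D^v)$ (the $D^v$ are disjoint because each specifies a different unique receiver on its channel). By $\eta$-fatness, $\Activity(v)\leq \eta^{-1}\Activity(u)$ for every $v\in N^1(u)$. Partition $N^1(u)$ into $S_{\mathrm{big}}:=\{v:\AAC(v)>2\}$ and $S_{\mathrm{sm}}:=\{v:\AAC(v)\leq 2\}$ and consider which side carries at least half of $\Activity(u)$. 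If $S_{\mathrm{big}}$ does, then $\sum_{v\in S_{\mathrm{big}}}\activity(v)/\AAC(v)\geq \eta/2$ and the lower bound of Lemma \ref{lemma:heraldbounds} gives $\sum \Pr(D^v)=\Omega(\pi_\ell\eta)$. If $S_{\mathrm{sm}}$ does, then by Lemma \ref{lemma:weightedturan}\,(b) applied with weights $\activity(v)$ on the graph induced by $S_{\mathrm{sm}}$ we obtain $\sum_{v\in S_{\mathrm{sm}}}\activity(v)\,\AAC(v)=\Omega(\Activity(u)^2/\alpha)=\Omega(\alpha)$, and the lower bound of Lemma \ref{lemma:heraldbounds} again yields $\sum \Pr(D^v)=\Omega(\pi_\ell)$. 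Since $\alpha$ and $\eta$ are constants, both subcases conclude with probability $\Omega(\pi_\ell)$.

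The main obstacle is to shepherd this nascent pair through the $6$-round handshake and arrive at a \emph{good} pair. The hypotheses of the lemma were chosen exactly for this: nodes in $N^2(u)$ that neighbor a herald or leader have $\activity(v)\leq \activity_\low=\log^{-12}n$, so with probability $1-o(1)$ none of them broadcasts on the rendezvous channel $\Achannel_{\bar\lambda}$ or on $\Hchannel$ during the handshake; and the assumption of no herald candidates in $N^2(u)$ rules out a competing pair that is already mid-handshake. The single slack round in the statement $t'\in[t,t+7]$ absorbs the one corner case where a herald candidate in rounds $5$--$6$ of a handshake elsewhere in $N^3$ must first terminate, after which the argument above applies starting from $t+1$. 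Finally, invoking Lemma \ref{lemma:wcpnoherald}.(\ref{lemma:wcpnoherald:1}) once per round of the ensuing handshake shows that, with probability $1-\bigO(\pi_\ell)$, no new herald candidate arises anywhere in $N^3(v)$ during those seven rounds, so a union bound preserves the overall probability of $\Omega(\pi_\ell)$ for the creation of a good pair $(l,h)$ with both endpoints in $N^1(u)$.
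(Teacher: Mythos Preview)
Your approach of reducing to Lemma~\ref{lemma:heraldbounds} and Corollary~\ref{corollary:heraldwcpgoodherald} is appealing, but case~(b) has genuine gaps.

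First, the events $D^v$ for distinct $v\in N^1(u)$ are \emph{not} disjoint: $D^v$ ranges over all channels $\Achannel_1,\dots,\Achannel_{n_\Achannel}$, and nothing prevents $D^{v_1}$ from occurring on $\Achannel_{\lambda_1}$ while $D^{v_2}$ occurs on some $\Achannel_{\lambda_2}\neq\Achannel_{\lambda_1}$. Your inequality $\Pr[\bigcup_v D^v]\geq \sum_v \Pr(D^v)$ therefore points the wrong way. Disjointness does hold if you restrict every $D^v$ to one common channel $\lambda$ (since any two receivers in $N^1(u)$ lie in $N^3$ of each other's sender), but Lemma~\ref{lemma:heraldbounds} only delivers the lower bound on the channel $\lambda_v=\max\{1,\log\AAC(v)\}$, which varies with $v$.

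Second, and more seriously, even if $D^v$ occurs for some $v\in N^1(u)$, the sender $w$ lies only in $N(v)\subseteq N^2(u)$ and need not be in $N^1(u)$. The lemma requires \emph{both} endpoints of the good pair to lie in $N^1(u)$; your argument gives no control over where the leader lands.

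Third, you implicitly assume $\AAC(u)\approx\Activity^\circ(u)$. But at time $t$ a large fraction of $N(u)$ may sit in state $\Lstate'$ (leader candidates whose handshake is about to fail), so $\Activity_\Astate(u,t)$ can be much smaller than $\Activity(u,t)$; then in case~(a) your bound $\Pr(D^u)=\Omega(\pi_\ell\,\activity(u)\,\AAC(u))$ may be $o(\pi_\ell)$. The paper handles this by arguing that either in round $t$ or in round $t+1$ (after those $\Lstate'$ nodes revert to $\Astate$) at least a $3/5$ fraction of $\Activity(u)$ is carried by active nodes. This is the actual reason for the slack round, not late-stage herald candidates in $N^3$.

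The paper's proof sidesteps all three issues by working directly on one fixed channel $\lambda=\lceil\log\Activity(u,t)\rceil$ and defining the target event $H_u$ so that both members of the pair are drawn from $N^1_\Astate(u,t')\setminus N(\Lstate\cup\Hstate)$. In case~(b) it invokes Lemma~\ref{lemma:heavyneighborhoods} to extract the heavy subset $\hat N^1_\Astate(u)$ of nodes $v$ whose activity mass restricted to $N^1(u)$ is at least $\Activity_\Astate(u)/(2\alpha)$; this is precisely what forces the partner $w$ of each heavy $v$ to also lie inside $N^1(u)$. The $\eta$-fatness is used only to bound the collision term via $\Activity(v)+\Activity(w)\leq 2\eta^{-1}\Activity(u)$. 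Your black-box use of Lemma~\ref{lemma:heraldbounds} cannot replicate this because that lemma neither fixes a common channel nor restricts the sender's location.
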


This proof is an adaption to the following lemma, from \cite{tr:podc2013}:

\begin{lemma}
\label{lemma:orig:fatgivesherald}
  Let $t$ be a round in which for a node $u$ in state $\Astate$ in the
  herald filter it holds that there is no herald, leader, or herald
  candidate in $N^2(u)$. Furthermore, all neighbors of MIS nodes in
  $N^2(u)$ are in state $\Estate$, $\Activity(u) \geq 1$, and either
  \setlength{\pltopsep}{0.6ex}
  \begin{compactenum}[\bfseries (a)]
    \item $\Activity(u) < 3\alpha(1)$, $u$ is $\frac{1}{3\alpha(1)}$-fat,
      and $\activity(u) = \frac{1}{2}$, or
    \item $u$ is $\frac{1}{2}$-fat and $\Activity(u) \geq 3\alpha(1)$.
  \end{compactenum}
  Then by 
round $t' \in [t, t+7]$, with
  probability $\Omega(\pi_\ell)$ either a node in $N^2(u)$ joins the
  MIS or a pair $(l,h)\in \Lstate\times\Hstate$ is
  created in $N^1(u)$ such that
  $(\!N(\{l,h\})\setminus\{l,h\}) \cap
  (\Hstate' \!\cup \Hstate \cup \Lstate) \!=\! \emptyset$. 
\end{lemma}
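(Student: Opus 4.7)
The plan is to adapt the proof of Lemma \ref{lemma:orig:fatgivesherald} from \cite{tr:podc2013} by handling two new complications introduced here: (i) the fatness constant is weakened from $1/2$ to $\eta = \alpha^{-8}$ in case (b), and (ii) nodes in $N^2(u)$ are allowed to neighbor leaders/heralds provided their activity has been suppressed below $\activity_\low = \log^{-12} n$. I will split the argument along cases (a) and (b), as in \cite{tr:podc2013}, then verify that the relaxed hypotheses still deliver a \emph{good} pair in $N^1(u)$.

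First I would dispose of case (a). Since $\activity(u) = 1/2$ and $\Activity(u) < 5\alpha$, the conditional bound of Lemma \ref{lemma:heraldbounds} in the regime $\AAC(u) \leq 2$ (or the other regime, if $\AAC(u) \in (2, 5\alpha]$) yields $\Pr(D^u) = \Omega(\pi_\ell \activity(u)) = \Omega(\pi_\ell)$; that is, with probability $\Omega(\pi_\ell)$ the node $u$ becomes a herald candidate in round $t$ with no other node in $N^3$ of its prospective leader $v$ receiving a message on the same $\Achannel_i$. Then Lemma \ref{lemma:wcpnoherald}(\ref{lemma:app:wcpnoherald:1}) applied in rounds $t+1, \ldots, t+6$ shows that with constant probability no further herald candidate is born in $N^3(u)$ during the handshake window, so the handshake of $(u,v)$ completes and produces a pair in $N^1(u)$ by round $t+6$. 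I then need to argue the pair is \emph{good}: by Definition \ref{def:goodpair} this requires that $v$'s neighbors other than $u$ are not in states $\Lstate$, $\Hstate$, or $\Hstate'_{5,6}$ at the moment of transition. No herald candidates exist in $N^2(u) \supseteq N^1(v)$ by hypothesis, and although leaders/heralds may neighbor nodes in $N^2(u)$, those neighbors carry $\activity \leq \activity_\low$ and hence are unlikely to be the matching partner $v$; a union bound over the polylogarithmic number of such suppressed candidates loses only an additive $\activity_\low \cdot \Delta_\textmax = o(1)$ factor in the probability, still leaving $\Omega(\pi_\ell)$.

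For case (b), the strategy is to walk along an activity-increasing chain, but now each step only multiplies by $\eta^{-1} = \alpha^8$ rather than by $2$. First I locate some $v \in N^1(u)$ with $\Activity(v) \geq \Activity(u)$; by $\eta$-fatness of $u$ this $v$ exists with $\Activity(u) \geq \eta \Activity(v)$, i.e. $\Activity(v) \leq \eta^{-1}\Activity(u)$. Iterating this walk, by (\ref{eq:etachain}) the chain must terminate within $\delta$ steps at a node $w$ whose neighbors are all of comparable or smaller activity mass. On such a $w$, Lemma \ref{lemma:heraldbounds} applied to each $x \in N_\Astate(w)$ sums (via Lemma \ref{lemma:weightedturan}) to $\sum_{x} \Pr(D^x) = \Omega(\pi_\ell \sum_x \activity(x)/\AAC(x)) = \Omega(\pi_\ell)$ in the heavy regime, delivering a new herald candidate adjacent to $w$ with constant probability. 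The subtle point is that this places the pair in $N^1(w) \subseteq N^{\delta+1}(u)$, not in $N^1(u)$, so I would then restrict attention to the first step of the walk (as in \cite{tr:podc2013}) and argue instead that some $v \in N^1(u)$ already has enough activity to host its own heavy-regime argument: since $\Activity(u) \geq 5\alpha$ and $\alpha$ neighbors of $u$ must share this mass, some $v \in N(u)$ carries $\Activity(v) = \Omega(\alpha)$, to which Lemma \ref{lemma:heraldbounds} gives the herald creation directly inside $N^1(u)$.

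Finally I verify the \emph{goodness} clause and handle the 7-round slack. The hypothesis ``all nodes in $N^2(u)$ neighboring a herald or leader have $\activity \leq \activity_\low$'' is exactly what is needed to make the prospective leader $v$ have, with probability at least $1 - O(\activity_\low \Delta_\textmax) = 1 - o(1)$, no simultaneously-active neighbor in $\Lstate \cup \Hstate$; combined with Lemma \ref{lemma:wcpnoherald}(\ref{lemma:app:wcpnoherald:1}) for the absence of competing herald candidates in the surrounding $7$ rounds, the emerging pair is good. The main obstacle I anticipate is reconciling the very weak fatness $\eta$ with the need to pin the new pair inside $N^1(u)$ rather than inside $N^\delta(u)$: the walk-to-$w$ argument would only guarantee a pair near $w$, and the cleanest fix is to treat case (b) in two sub-cases depending on whether $\Activity(u) \geq 5\alpha$ forces a heavy $v$ in $N(u)$ directly, versus relying on transitive fatness only to certify that $u$ itself sits in a heavy region so that Lemma \ref{lemma:heraldbounds} applies to $u$ in its high-$\AAC$ form.
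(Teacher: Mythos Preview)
Your case (b) argument has a genuine gap. You correctly diagnose that the activity-increasing walk lands the pair near some terminal $w\in N^{\delta}(u)$ rather than in $N^1(u)$, but your proposed repair does not fix this: applying Lemma~\ref{lemma:heraldbounds} to some heavy $v\in N(u)$ still only manufactures a herald somewhere in $N^1(v)\subseteq N^2(u)$, and even applying it to every $x\in N^1_\Astate(u)$ controls only one endpoint---the partner of $x$ need not lie in $N^1(u)$, whereas the lemma demands both $l,h\in N^1(u)$. Your appeal to Lemma~\ref{lemma:weightedturan} for $\sum_x \Pr(D^x)=\Omega(\pi_\ell)$ is also in the wrong direction: that lemma gives $\sum_x \activity(x)/\Activity(x)\leq \alpha$, an \emph{upper} bound, so it cannot deliver the lower bound you need.

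The paper neither walks nor uses Lemma~\ref{lemma:heraldbounds} for case (b). It fixes the single channel $\lambda=\lceil\log\Activity(u,t)\rceil$ and, for pairs $v,w\in N^1_\Astate(u,t')\setminus N(\Lstate\cup\Hstate)$, directly lower-bounds the probability $\Pr(L_{v,w})$ that exactly $v,w$ sit on $\lambda$ among $N(v)\cup N(w)$; the $\eta$-fatness enters only here, via $\Activity(v)+\Activity(w)\leq 2\Activity(u)/\eta$, contributing a harmless constant factor $4^{-2\increaseactivity/\eta}$. The correct combinatorial tool is Lemma~\ref{lemma:heavyneighborhoods} (not Lemma~\ref{lemma:weightedturan}): applied to the graph induced by $N^1_\Astate(u,t')$ it yields a ``heavy'' subset $\hat N^1_\Astate(u,t')$ of total activity at least $\Activity_\Astate(u,t')/(2\alpha)$ whose members $v$ each satisfy $\Activity^u_\Astate(v,t')\geq \Activity_\Astate(u,t')/(2\alpha)$. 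Summing $\Pr(L_{v,w})$ over $v\in\hat N^1_\Astate(u,t')$ and $w\in N(v)\cap N^1_\Astate(u,t')$ then gives $\Pr(H_u)=\Omega(1)$, with both endpoints in $N^1(u)$ by construction; the factor $\pi_\ell$ enters only at the very end, when one of $v,w$ is asked to listen. You also omit the preliminary step (used in both cases) where the paper first argues that in some $t'\in\{t,t+1\}$, with probability $\geq 1/4$, at least $\tfrac{3}{5}\Activity(u,t)$ of the mass is carried by nodes actually in state $\Astate$ rather than temporarily in $\Lstate'$.
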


Note that we do not exclude the existence of leaders and heralds anymore. Instead we require such nodes to have 'silenced' their neighborhood, i.e., to have drove back such nodes' activity values. Good pairs are supposed to do so with a good probability, and while bad pairs can manage to do that, too (with a low probability), it does not affect the lemma's correctness if they do. Last, the probability for a new isolated leader-herald pair goes down a bit, but is still in $\Omega(\pi_\ell)$, yet the hidden constant now depends heavily on $\eta$. One of the main reason to change $3\alpha$ from the original lemma to $5\alpha$ was to ease argumentation in one part and to differentiate it from $n_\Rchannel \geq 3\alpha^2$---the value in this proof is completely unrelated to $n_\Rchannel$.

The proof follows mostly the lines of the one in \cite{tr:podc2013}, but in that paper a constant number of factors of $1/2$ accumulate, while here they have to be exchanged by factors of $\eta$---but this only causes changes in asymptotically irrelevant constants. In addition, nodes neighboring leaders or heralds need special attention. We give a full detailed proof here, despite its similarity to the proof in \cite{tr:podc2013}.

\iftrue
\begin{proof}
  We make use of the notation $X(u,t)$ to indicate the value of local variable $X$ at node $u$ in round $t$.
  For the remainder of the proof, assume that
  in rounds $t,\dots,t+7$, no node in $N^2(u)$ joins the MIS, as
  otherwise, the claim of the lemma is trivially satisfied.  In order
  to prove the lemma, we first show that either in round $t$ or in
  round $t+1$, \wcp, a herald candidate is created in
  $N^1(u)$. Formally, we define the event $H_u$ as follows. In round
  $t'$, event $H_u$ occurs iff there are two neighboring nodes $v,w\in
  N_\Astate^1(u,t') \setminus N(\Lstate \cup \Hstate)$ such that
  \begin{itemize}
  \item $v$ and $w$ both operate on a channel $\lambda\in
    \set{\Achannel_1,\dots,\Achannel_{n_\Achannel}}$
  \item no other neighbor of $v$ and $w$ chooses channel $\lambda$, and
  \item no other node in $N_\Astate^2(u,t')$ receives a message on
    channel $\lambda$.
  \end{itemize}
  Clearly, if event $H_u$ holds either in round $t$ or $t+1$, the
  nodes $v,w$ have a probability of $2\pi_\ell(1-\pi_\ell)$ of
  becoming a herald-leader candidate pair and no other herald
  candidate is created on channel $\lambda$ in that round. Combined
  with appropriate applications of Lemma \ref{lemma:app:wcpnoherald}.(\ref{lemma:app:wcpnoherald:1}), this
  suffices to prove the claim of the lemma.

  For a node $v\in N_\Astate^1(u,t')$, let
  $\Activity^u_\Astate(v,t'):=\sum_{w\in N^1(v,t')\cap
    N_\Astate^1(u,t')}\activity(w,t')$ be the total activity value of
  all active nodes in round $t'$ in the $1$-neighborhood of $v$
  restricted to the $1$-neighborhood of $u$. To estimate the
  probability that $H_u$ occurs in a round $t'\in \set{t,t+1}$, we
  first show that in one of the two rounds $t'\in \set{t,t+1}$, with
  probability at least $\frac{1}{4}$ it holds that $u$ is in state $\Astate$
  and $\Activity_\Astate(u,t'):=\sum_{v\in
    N_\Astate^1(u,t')}\activity(v,t')\geq\frac{3}{5}\cdot\Activity(u,t)$; i.e., in one of both rounds, we have a high activity mass provided by nodes in state $\Astate$ as opposed to those in state $\Lstate'$.
  Assume that the claim is not true for $t'=t$. 
  As the lemma statement
  is based on the assumption that $u$ is in state $\Astate$ in round
  $t$, this implies that
  $\Activity_\Astate(u,t) < \frac{3}{5}\Activity(u,t)$.
  Also
  by the assumptions of the lemma, in round $t$, no nodes in $N(u)$
  are in states $\Hstate'$, and those in $\Hstate$ or $\Lstate$ have an activity of less than $\activity_\low$. 
  Since $|N^2(u) \cap N(\Lstate \cup \Hstate)| \leq \Delta_\textmax^2 \leq \log^{2\ttkdelta} n \ll \activity_\low^{-1}$, we can assume that the total activity mass of nodes in $N^2(u) \cap N(\Lstate \cup \Hstate)$ is less than $1/100$. As nodes $w$ in
  states $\Mstate$ and $\Estate$ have $\activity(w)=0$ and thus do not
  contribute to $\Activity(u)$, we therefore have
  \[
  \Activity_{\Lstate'}(u,t):=\hspace{-3mm}\sum_{v\in N_{\Lstate'}^1(u,t)}\hspace{-3mm}
  \activity(v,t)\geq \Activity(u,t)-\Activity_\Astate(u,t)-\frac{1}{100} \stackrel{\Activity(u,t)\geq 1}{\geq} \frac{99}{100}\Activity(u,t)-\Activity_\Astate(u,t).
  \]
  Because
  by assumption, there are no nodes in state $\Hstate'$ in round $t$,
  all nodes that are in state $\Lstate'$ in round $t$ switch back to
  state $\Astate$ for the next round. As by assumption, no nodes
  switch to states $\Mstate$ or $\Estate$, and a node $v$ that is in
  state $\Astate$ in round $t$ can only move out of $\Astate$ if it
  decides to operate on one of the channels
  $\Achannel_1,\dots,\Achannel_{n_\Achannel}$. This happens with
  probability at most $\activity(v,t)\leq \frac{1}{2}$. Therefore, with
  probability at least $\frac{1}{2}$, at least half of the total activity
  value of the nodes in $N_\Astate^1(u,t)$ remains in state $\Astate$
  for round $t+1$. And (independently) with probability at least
  $\frac{1}{2}$, also node $u$ remains in state $\Astate$ for round
  $t+1$. Thus, with probability at least $\frac{1}{4}$, $u$ is in state
  $\Astate$ in round $t+1$ and at least half of the total activity
  contributing to $\Activity_\Astate(u,t)$ also contributes to
  $\Activity_\Astate(u,t+1)$. 
  Therefore, with probability at least
  $\frac{1}{4}$,
  \begin{equation*}
    \begin{split}
      \Activity_\Astate(u,t+1) 
      &\stackrel{(*)}{\geq}
      \left(\frac{98}{100}\Activity(u,t)-\Activity_\Astate(u,t)\right) +
      \frac{1}{2}\cdot\Activity_\Astate(u,t)
      \\
      &\stackrel{\Activity(u,t)\geq 1}{\geq}
      \frac{98}{100}\Activity(u,t) -
      \frac{1}{2}\cdot\Activity_\Astate(u,t) \geq
      \frac{3}{5}\Activity_\Astate(u,t).
    \end{split}
  \end{equation*}
  The last inequality follows because we assumed that
  $\Activity_\Astate(u,t)<\frac{3}{5}\Activity(u,t)$. In $(*)$ we used the
  fact that the activity of all nodes in $N^1(u)$ can only grow from round $t$ to $t+1$, except for those that neighbor heralds or leaders and those that switch to states $\Mstate$ or $\Estate$. But the former only make up a small percentage of $u$'s activity mass and the latter do not exist by our assumptions. We therefore in the following assume that
  $t'\in\set{t,t+1}$ such that $\Activity_\Astate(u,t')\geq
  \frac{3}{5}\cdot \Activity(u,t)$ and $u$ is in state $\Astate$ in
  round $t'$. 

  To show that in round $t'$, event $H_u$ occurs, we distinguish the
  two cases given in the lemma statement. We start with the simpler
  case (a), where in round $t$, $1\leq \Activity(u)<5\alpha$ and
  $\activity(u)=\frac{1}{2}$. The latter implies that $u \notin N(\Lstate \cup \Hstate)$, due to the requirements of the lemma that such nodes have low activity. Because no node in $N^1(u)$ switches to states
  $\Mstate$ or $\Estate$ in round $t$ and neighbors of leaders and heralds have low activity, $u$'s activity mass cannot change much. Neighbors of leaders and heralds cause a drop of at most $\Activity(u,t)/100$ and all others increase their activity by at most $\increaseactivity$. More precisely, 
  $0.99\Activity(u,t)\leq \Activity(u,t')\leq
  (1+\epsilon_\activity)\Activity(u,t)$. We know that
  $\Activity_\Astate(u,t')\geq \frac{3}{5}\Activity(u,t) \geq
  \frac{3}{5}$. Consequently, since $u$ is in state $\Astate$, it has
  activity level $\activity(u,t')=\frac{1}{2}$, and the total activity mass
  $\Activity_\Astate(u,t')-\activity(u,t')$ of all \emph{neighbors} is between $\frac{1}{10}$ and
  $\increaseactivity 5\alpha=\bigO(1)$. Therefore, \wcp, $u$ and
  exactly one of its neighbors $v$ operate on channel
  $\lambda=\Achannel_1$. (Recall that a node $w$ in state $\Astate$
  chooses channel $\Achannel_1$ with probability $\frac{\activity(w)}{2}$.)
  Because we assume that $u$ is 
  $\frac{1}{5\alpha}$-fat 
  at time $t$,
  $\Activity(v,t)$ is also bounded and therefore, \wcp, no other
  neighbor of $v$ picks channel $\Achannel_1$. Hence, the only thing that is
  missing to show that event $H_u$ occurs with constant probability is
  to prove that no other node in $N^2(u)$ hears a message on channel
  $\Achannel_1$ in round $t'$. 

  But this follows from Lemma \ref{lemma:app:wcpnoherald}.(\ref{lemma:app:wcpnoherald:5}) by choosing $S=N_\Astate^2(u)\cap (N(u)\cup N(v))$.
We have shown that in case (a), the
  event $H_u$ occurs with constant probability. Let us therefore
  switch to case (b), where $N(u,t)$ is $\eta$-fat and $\Activity(u,t)
  \geq 5\alpha$. For the following argumentation, we define
  \[
  \hat{N}_\Astate^1(u,t'):=\set{v\in N_\Astate^1(u,t') \setminus N(\Lstate \cup \Hstate):
    \Activity^u_\Astate(v,t')\geq
    \frac{\Activity_\Astate(u,t')}{2\alpha}} 
  \]
  and
  \[
  \hat{\Activity}_\Astate(u,t'):=\sum_{\mathclap{v\in
    \hat{N}_\Astate^1(u,t')}}\activity(v,t').
  \]

  To analyze the probability of the event $H_u$, consider two
  neighboring nodes $v,w\in N^1_\Astate(u,t') \setminus N(\Lstate \cup \Hstate)$. We define $L_{v,w}$ to
  be the event that in round $t'$ both $v$ and $w$ decide to operate
  on channel $\lambda:=\lceil\log_2 \Activity(u,t)\rceil$ and no other
  node in $N(v)\cup N(w)$ chooses the same channel $\lambda$. Further
  $H_{v,w}$ is the event that $L_{v,w}$ occurs and in addition, no
  node in $N^2(u)\setminus (N(v)\cup N(w))$ receives a message on
  channel $\lambda$ in round $t'$. Lemma
  \ref{lemma:app:wcpnoherald}.(\ref{lemma:app:wcpnoherald:5}) implies again that
  $\Pr(H_{v,w}|L_{v,w})=1-\bigO(\pi_\ell)$.  Further, note that
  $H_u = \bigcup_{v,w\in N_\Astate^1(u) \setminus N(\Lstate \cup \Hstate), \set{v,w} \in E} H_{v,w}$,
  and we have $H_{v,w}=H_{w,v}$ and $H_{v,w}\cap H_{v',w'}=\emptyset$
  for $\set{v,w}\neq\set{v',w'}$. It therefore holds that
  \begin{equation}\label{eq:Hevent}
    \Pr(H_u) = \frac{1-\bigO(\pi_\ell)}{2}\ \ \cdot
    \sum_{
      \mathclap{
        \substack{
          \set{v,w}\in E,\\
          (v,w)\in (N_\Astate^1(u,t'))^2
        }
      }
    }  \ \ \ \Pr(L_{v,w}).
  \end{equation}
  The probability for a node $v\in \Astate$ to choose channel
  $\lambda$ is $\activity(v)\cdot
  2^{-\lceil\log\Activity(u,t)\rceil}\in
  \big[\frac{\activity(v)}{2\Activity(u,t)},\frac{\activity(v)}{\Activity(u,t)}\big]$. We can therefore bound the
  probability that $L_{v,w}$ occurs in round $t'$ as
  \begin{eqnarray*}
    \Pr(L_{v,w}) & \geq &
    \frac{1}{4}\cdot 
    \frac{\activity(v,t')\activity(w,t')}{\Activity(u,t)^2}\cdot
    \prod_{x\in N(v)\cup N(w)}\left(1-\frac{\activity(x,t')}{\Activity(u,t)}\right)\\
    & \geq &
    \frac{\activity(v,t')\activity(w,t')}{4\Activity(u,t)^2}\cdot
    4^{-\sum_{x\in
      N(v)\cup N(w)}\frac{\activity(x,t')}{\Activity(u,t)}}\\
    & \geq &
    \frac{\activity(v,t')\activity(w,t')}{4\Activity(u,t)^2}\cdot
    4^{-\increaseactivity\frac{\Activity(v,t)+\Activity(w,t)}{\Activity(u,t)}}\\
    & \geq &
    \frac{\activity(v,t')\activity(w,t')}{\Activity(u,t)^2}\cdot
    4^{-1-\increaseactivity\frac{2}{\eta}\frac{\Activity(u,t)}{\Activity(u,t)}}
    \  =\ 
    \frac{1}{4\cdot 16^{\frac{\increaseactivity}{\eta}}}
    \cdot
    \frac{\activity(v,t')\activity(w,t')}{\Activity(u,t)^2}.
  \end{eqnarray*}
  The last inequality follows because in round $t$, node $u$ is
  $\eta$-fat, and therefore 
  \[
  \Activity(u,t) \geq \max\set{\Activity(v,t),\Activity(w,t)}.
  \]
  In the following, we restrict our attention to the
  events $L_{v,w}$ for $v\in \hat{N}_\Astate^1(u,t')$ as these are the
  only ones for which we obtain a significant lower bound on the
  probability that they occur. For $v\in \hat{N}^1_\Astate(u,t')$, let
  $K_v:=\bigcup_{w\in N(v)\cap N^1_\Astate(u,t')\setminus N(\Lstate \cup \Hstate)}L_{v,w}$ be the event
  that $L_{v,w}$ occurs for some neighbor $w$ of $v$. For a node $v\in
  \hat{N}_\Astate^1(u,t')$, we then have 
  \begin{eqnarray} \nonumber
    \Pr(K_v) & = &  \sum_{\mathrlap{\!\!\!\!\!\! w\in N(v)\cap N^1_\Astate(u,t')\setminus N(\Lstate \cup \Hstate)}}
    \ \ \Pr(L_{v,w})
    \\ \nonumber 
    & \geq &
    \frac{1}{4\cdot 16^{\frac{\increaseactivity}{\eta}}}
    \cdot\frac{\activity(v,t')}{\Activity(u,t)^2}
    \sum_{\mathrlap{\!\!\!\!\!\! w\in N(v)\cap N^1_\Astate(u,t')\setminus N(\Lstate \cup \Hstate)}} \activity(w,t')\\
    \nonumber 
    & \geq & 
    \frac{1}{4\cdot 16^{\frac{\increaseactivity}{\eta}}}\cdot
    \frac{\activity(v,t')\big(\Activity^u_\Astate(v,t')-\activity(v,t')-\frac{1}{100}\big)}{\Activity(u,t)^2}\\
    \label{eq:Kvstep}
    & \geq &
    \frac{1}{4\cdot 16^{\frac{\increaseactivity}{\eta}}}\cdot
    \frac{\activity(v,t')\big(\Activity_\Astate(u,t')-1.02\alpha\big)}
    {2\alpha\Activity(u,t)^2}\\
    \nonumber
    & \geq &
    \frac{1}{4\cdot 16^{\frac{\increaseactivity}{\eta}}}\cdot
    \frac{\activity(v,t')}{4\alpha\Activity(u,t)}.
  \end{eqnarray}
  Inequality \eqref{eq:Kvstep} follows because $\activity(v,t')\leq
  \frac{1}{2}$, $v\in\hat{N}_\Astate^1(v)$ and thus
  $\Activity^u_\Astate(v,t')\geq
  \frac{\Activity_\Astate(u,t')}{2\alpha}$. The last inequality follows
  from $\Activity_\Astate(u,t')\geq \frac{3}{5}\cdot\Activity(u,t)$ and 
  thus $\Activity_\Astate(u,t') - 1.02\alpha \geq \frac{3}{5}\cdot\Activity(u,t) - 1.02\alpha \geq \frac{1}{4} \Activity(u,t)$ if $\Activity(u,t)\geq 5\alpha$. Using \eqref{eq:Hevent}, we can now bound the probability of event $H_u$ in round $t'$ as
  \begin{equation}
    \label{eq:Hevent2}
    \begin{split}
      \Pr(H_u) 
      &\geq 
      \frac{1-\bigO(\pi_\ell)}{2}\!\!\!\!\!\!\sum_{v\in
        \hat{N}_\Astate^1(u,t')} \!\!\!\!\!\!\Pr(K_v) \geq
      \frac{1-\bigO(\pi_\ell)}{64\alpha \cdot
        16^{\frac{\increaseactivity}{\eta}}} \sum_{v\in
        \hat{N}_\Astate^1(u,t')}\!\!\!\!
      \frac{\activity(v,t')}{\Activity(u,t)} 
      \\
      &=
      \frac{1-\bigO(\pi_\ell)}{64\alpha\cdot
        16^{\frac{\increaseactivity}{\eta}}}\cdot
      \frac{\hat{\Activity}_\Astate(u,t')}{\Activity(u,t)}.
    \end{split}
  \end{equation}
  Applying Lemma \ref{lemma:heavyneighborhoods} to the graph induced
  by the nodes in $N_\Astate^1(u,t')$, the activity sum of nodes in
  $\hat{N}_\Astate^1(u,t')$ can be lower bounded as
  \[
  \hat{\Activity}_\Astate(u,t')\geq
  \frac{\Activity_\Astate(u,t')}{2\alpha}\geq
  \frac{3\Activity(u,t)}{10\alpha}.
  \]
  Together with \eqref{eq:Hevent2}, this proves that also in case (b),
  the event $H_u$ occurs with constant probability in a round $t'\in
  \set{t,t+1}$. Note also that in both cases (a) and (b), for
  $\pi_\ell$ and $\increaseactivity$ sufficiently small, the
  probability that $H_u$ occurs can be lower bounded by a constant $C_\eta$
  that is independent of the probabilities
  $\pi_\ell$ and $\increaseactivity$.

  To complete the proof, assume that in round $t'$, event $H_u$ occurs
  with probability $C_\eta$ and if it occurs, nodes $v$ and $w$ are the two
  nodes in $N^1(u)$ participating on channel $\lambda$ (channel
  $\Achannel_1$ in case (a)). Let $M$ be the event that no herald is
  created on a channel $\Achannel_i\neq \lambda$ in round
  $t'$. Clearly, the probability that $M$ occurs is lower bounded by
  the probability that no herald is created on any channel in round
  $t'$. By Lemma \ref{lemma:app:wcpnoherald}.(\ref{lemma:app:wcpnoherald:1}), we therefore have
  $\Pr(M)=1-\bigO(\pi_\ell)$. For the probability that events $H_u$ and
  $M$ both occur, we then get
  \[
  \Pr(M\cap H_u) = 1-\Pr(\overline{M}\cup \overline{H}_u) \geq
  1 - \Pr(\overline{M})-\Pr(\overline{H}_u) = C_\eta - \bigO(\pi_\ell).
  \]
  Recall that probability $C_\eta$ is a constant independent of
  $\pi_\ell$. Conditioned on the event that $M\cap H_u$ occurs, the
  probability that one of the two nodes $v,w$ listens on channel $\lambda$
  and the other one broadcasts on the channel is
  $2\pi_\ell(1-\pi_\ell)$. In that case one of the two nodes becomes a
  herald candidate and the other one its leader candidate. Also,
  $M\cap H_u$ implies that in round $t'$ no other herald candidates
  are created in $N^2(u)$. Let $t''$ be the round in
  $\set{t,t+1}\setminus{t'}$. If in addition in round $t''$ and in the
  remaining rounds $t+2,\dots,t+7$ no herald candidate is created in
  $N^2(u)$, nodes $v$ and $w$ make it through the handshake and become
  an isolated leader-herald pair as claimed by the lemma. By Lemma
  \ref{lemma:app:wcpnoherald}.(\ref{lemma:app:wcpnoherald:1}), this happens with probability
  $1-\bigO(\pi_\ell)$, which by choosing $\pi_\ell$ sufficiently small
  concludes the proof.
  \qedLNCS
\end{proof}
\fi


\hide{

  \ifthenelse{\boolean{swapheraldclaim}}{
    \subsection{\Cref{lemma:heraldbounds}}
  }{
    \subsection{\Cref{lemma:heraldwcpgoodherald}}
  }
  \ifthenelse{\boolean{swapheraldclaim}}{
    \begin{lemma}
      \label{lemma:app:heraldbounds}
      Let $r$ be a round in which node $u$ is in state \Astate and $N_\Astate(u) \neq \emptyset$.
      Let $B^u$ be the event that at the end of round $r$, $u$ moves to state
      $\Hstate'$ due to receiving a message from some node $v \in
      N_\Astate(u)$ on some channel $\Achannel_{\bar{\lambda}}$. Further, let $D^u\subseteq B^u$ be the event that
      $B^u$ holds and in addition no other node $v' \in N^3(v)\setminus\set{u}$
      receives any message on channel $\Achannel_{\bar{\lambda}}$ in round $r$. 
      It holds that
      \begin{eqnarray}
        \label{eq:app:claimequationsupper}
        \Pr(B^u) &=&
        \begin{cases}
          \bigO\brackets{\pi_\ell \frac{\activity(u)}{\AAC(u)}} & \AAC(u) > 2 \\
          \bigO\brackets{\pi_\ell \activity(u) \AAC(u)} & \AAC(u) \leq 2      
        \end{cases},
        \\
        \label{eq:app:claimequationslower}
        \Pr(D^u) &=&
        \begin{cases}
          \Omega\brackets{\pi_\ell \frac{\activity(u)}{\AAC(u)}} & \AAC(u) > 2 \\
          \Omega\brackets{\pi_\ell \activity(u) \AAC(u)} & \AAC(u) \leq 2      
        \end{cases}.
      \end{eqnarray}
    \end{lemma}
  }{
    \begin{lemma}
      \label{lemma:app:heraldwcpgoodherald}
      Let $B(r)^{u,v}$ be the event that in round $r$ node $u \in \Astate_r$ receives a message from one of its neighbors $v\in \Astate_r$, neither of them neighboring any leader, herald or herald candidate in the $5$th or $6$th round of its handshake protocol. Let $\hat H(r')^{u',v'}$ be the event that at the beginning of round $r'$ node $u' \in \Hstate_{r'}$ and $v' \in \Lstate_{r'}$ form a good \lhp and that $\Hstate' \cap N^3(u') = \emptyset$, i.e., there are no herald candidates in the $3$-neighborhood of $u'$. Then
      \begin{equation}
        \Pr(\hat H(r+8)^{u,v}|B(r)^{u,v}) = \Omega(1)
      \end{equation}
    \end{lemma}
  }

} 




\end{document}